\definecolor{darkgreen}{rgb}{0,0.5,0}
\tikzset{
    state/.style={
           rectangle,
           rounded corners,
           draw=black, very thick,
           minimum height=2em,
           inner sep=2pt,
           text centered,
           },
}
\newtheorem{theorem}{Theorem}[section]
\newtheorem{lemma}[theorem]{Lemma}
\newtheorem{proposition}[theorem]{Proposition}
\newtheorem{corollary}[theorem]{Corollary}
\newtheorem{remark}[theorem]{Remark}
\newcommand{\cal}{\mathcal}
\newcommand\bbR{{\mathbb R}}
\newcommand\bbZ{{\mathbb Z}}
\newcommand\al{\alpha}
\newcommand\lang{\langle\langle}
\newcommand\rang{\rangle\rangle}
\newcommand\bbE{{\mathbb E}}
\newcommand\R{{\mathbb R}}
\newcommand\bbP{{\mathbb P}}
\newcommand \ga{\gamma}
\newcommand \om{\omega}
\newcommand \la{\lambda}
\newcommand{\bbT}{\mathbb T}
\newcommand{\ii}{\mathrm{i}}
\renewcommand{\ge}{\geqslant}
\renewcommand{\le}{\leqslant}
\newcommand{\dd  }{\mathrm{d}}
\renewcommand{\tilde}{\widetilde}
\renewcommand{\bar}{\overline}
\numberwithin{equation}{section}
\newcommand{\rv}{{\bf r}}
\newcommand{\pv}{{\bf p}}
\newcommand{\qv}{{\bf q}}
\newcommand{\cF}{{\mathcal F}} 
\newcommand{\jl}[1]{{\color{darkgreen}#1}}
\newcommand{\red}{\textcolor{red}}
\begin{document}

\title[Forced harmonic chain]{Heat flow in
   a periodically forced, thermostatted chain}

\author{Tomasz Komorowski}
 \address{Tomasz Komorowski\\Institute of Mathematics,
   Polish Academy
  Of Sciences\\Warsaw, Poland.} 
\email{{\tt tkomorowski@impan.pl}}

\author{Joel L. Lebowitz}
\address{Joel Lebowitz, Departments of Mathematics and Physics,  Rutgers University}
\email{lebowitz@math.rutgers.edu}
 
 \author{Stefano Olla}
 \address{Stefano Olla, CEREMADE,
   Universit\'e Paris Dauphine - PSL Research University \\
\emph{and}  Institut Universitaire de France\\
\emph{and} GSSI, L'Aquila}
  \email{olla@ceremade.dauphine.fr}

\date{\today {\bf File: {\jobname}.tex.}}

\begin{abstract}
{We investigate the properties of a harmonic chain in contact with a
  thermal bath at one end  and subjected, at its other end, to a periodic force.
The particles {also} {undergo} a random velocity reversal action,
{which results in} a finite heat conductivity {of} 
 the system.
We prove the approach of the system to a time periodic state and compute the
heat current, equal to the time averaged work done on the system, in that state.
This work approaches a finite positive value as the length of the
chain increases. Rescaling space, the strength {and/or the period of} the force  leads
to a macroscopic temperature profile corresponding to the stationary
solution of a continuum heat equation with   {Dirichlet-Neumann}
boundary conditions.}
\end{abstract}

\thanks{} 
%  \keywords{...}
%  \subjclass[2000]{...}

\maketitle

%\tableofcontents

\section{Introduction}

The conversion of mechanical energy to heat, accompanied by
the production of entropy,  is a very common phenomenon in nature.
It occurs on all scales 
     of {space} and time: from ocean tides to the movement of a charged
     particle in a fluid under the influence of an electric field.
     {It happens} each time we rub our hands or scratch our head.
     A fully microscopic description of the phenomenon is
     desirable but very complicated.
     Here we study an example of this phenomenon in a very simple
     microscopic model system. In particular, we consider
     a linear chain (system) of $n+1$ particles,
{labelled by $x=0,\ldots,n$,} in contact with a thermal reservoir at one end and
acted {upon} by a periodic force at its other end.

The interaction of the system with the reservoir
is stochastic, modeled as usual by an Ornstein-Uhlenbeck process (the Langevin force).
The action
of the external force on the other hand,
is deterministic described by a time periodic Hamiltonian.
There is also a stochastic velocity flip acting on the particles of the system
to model non linear interactions, {inducing a diffusive behaviour of the energy}
and {producing} a finite heat conductivity.

% ({the stochastic perturbation  of the dynamics can be induced by
%   elastic collisions with exterior particles of infinite mass
%   flowing independently through the system}).
  We prove the existence of a unique
periodic state approached, as $t\to+\infty$, from any initial
state. The average work done on the system over a period is equal to the time averaged heat flux, $J_n$, going
into the reservoir. This is computed explicitly for all $n$.
{\jl{The} diffusive behaviour of the energy implies that the heat flux $J_n$
  is proportional to the microscopic gradient ot the temperature.}
To maintain a spatially constant time averaged macroscopic  current $J$
in the limit   $n\to\infty$ one needs to have a spatially constant time averaged
   temperature gradient proportional to $J$.
 This can be achieved  by appropriately scaling the force and/or period
$\theta_n$ as a function of $n$.  This leads to
    a {macroscopic} spatial temperature profile $T(u)$, where 
$u=x/n \in [0,1]$ is the scaled spatial coordinate,  given  by the stationary solution of the heat equation with a fixed temperature $T(0)=T_-$ (the temperature
  of the heat reservoir {that is placed at the left endpoint of the
    chain}) and a fixed energy current $J$ entering the system on the
  right, $u=1$ ({where the periodic force is applied}). The  thermal
  conductivity $\kappa$ in the heat equation {can be} computed explicitly by the Kubo
  formula for this system (see \cite{BKLL11}
  as well as the comments in \autoref{sec:conclusions}).
  It is independent of the temperature, so, {as
    a result}, $T(u)$ is linear in $u$, {with $\kappa T'(u) = - J$.
{    As already mentioned above, it is the presence of the stochastic flip in the bulk
    that is responsible for the conversion of the work done by the periodic force into heat
    that is diffusively transported through the system. Detailed study of the thermalization
    in the bulk for such stochastic dynamics has been studied in \cite{lukk}.
      }

  {We note that our setting differs from  the typical setup, in which the \emph{stationary}
  macroscopic energy transport}  has been studied. {E.g. in
\cite{RLL67,kos2,bll, bo1} the chain  is placed  between
  heat baths at different temperatures.}
  The stationary temperature profile depends then on the
  boundary temperatures. As far as we know,
  the present article is the first {to derive} a  rigorous macroscopic limit for a system
  in a periodic state induced by a periodic external force.
  Periodic states of a system under external periodic forcing {have
    been previously} considered in
  \cite{joel57}.
}

\subsection{Description of the model}
%More precisely, we assume that 
The configuration of particle positions and momenta are described by
\begin{equation}
  \label{eq:1}
  (\mathbf q, \mathbf p) =
  (q_0, \dots, q_n, p_0, \dots, p_n) \in \R^{n+1}\times\R^{n+1}. 
\end{equation}
We should think of the positions $q_x$ as relative displacement from a point,
say $x$ in a finite lattice $\{0,1,\ldots,n\}$.
The total energy of the chain is defined by the Hamiltonian:
$\mathcal{H}_n (\mathbf q, \mathbf p):=
\sum_{x=0}^n {\cal E}_x (\mathbf q, \mathbf p),$
where the microscopic energy is given by
\begin{equation}
\label{Ex}
{\cal E}_x (\mathbf q, \mathbf p):=  \frac{p_x^2}2 +
\frac12 (q_{x}-q_{x-1})^2 +\frac{\om_0^2 q_x^2}{2} ,\
\quad x = 0, \dots, n,
\end{equation}
{with the pinning constant $\om_0>0$. }
{We adopt the convention that $q_{-1}:=q_0$.}

%\tk{\em T.K. comment: it seems that we should know how to interpret
 % ${\cal E}_0$}
% \begin{equation}
% \label{Ex1}
%  {\cal E}_x (\mathbf r, \mathbf p):=  \frac{p_x^2}2 +
% \frac12(q_{x}-q_{x-1})^2 +\frac{\om_0^2 q_x^2}{2} ,\
% \quad x = 0, \dots, n  .
% \end{equation}

The  microscopic dynamics of
the process $\{(\mathbf q(t), \mathbf p(t))\}_{t\ge0}$
describing the total chain is  given in the bulk by

\begin{equation} 
\label{eq:flip}
\begin{aligned}
  \dot   q_x(t) &= p_x(t) ,
  \qquad \qquad \qquad  \qquad \qquad x\in \{0, \dots, n\},\\
  \dd   p_x(t) &=  \left(\Delta_N q_x-\om_0^2 q_x\right) \dd t-   2 p_x(t-) \dd N_x(\gamma t),
  \quad x\in \{1, \dots, n-1\},
  \end{aligned} \end{equation}
and at the boundaries by %\label{eq:rbdf}
\begin{align}
     \dd   p_0(t) &=   \; \Big(q_1(t)-q_0(t) - \om_0^2 q_0 \Big) \dd   t
                     -
                    2  \gamma p_0(t) \dd t
                    +\sqrt{4  \gamma T_-} \dd \tilde w_-(t)
                    \vphantom{\Big(} \label{eq:pbdf} \\
  \dd   p_n(t) &=  \; \Big(q_{n-1}(t) -q_n(t) -\om_0^2 q_n(t) \Big)  \dd   t  +\cF_n(t)
                 \dd t - 2  p_n(t-) \dd N_n(\gamma t).
                     \vphantom{\Big(}  
\notag
\end{align}
Here $\Delta_N$ is the Neumann discrete laplacian, corresponding to the choice
$q_{n+1}:=q_n$ and {$q_{-1}= q_0$}.
We assume that the forcing $\cF_n(t)$ is $\theta_n$-periodic, with the
period $\theta_n=n^b\theta$, and the amplitude $n^{a}$, i.e.
\begin{equation}
\label{Fnt}
 \cF_n(t)= \;n^{a} \cF\left(\frac{t}{ \theta_n}\right).
\end{equation} Here $\theta>0$ and the scaling exponents 
 $a\in\bbR$, $b\ge0$ are to be adjusted later. We assume  $ \cF(t)$ is a
 smooth $1$-periodic function
such that
\begin{equation}
  \label{eq:2}
  \int_0^1  \cF(t) \dd t = 0, \qquad  \int_0^1  \cF(t)^2 \dd t > 0.
\end{equation}
% while 
% \begin{equation}
% \label{theta-n}
% \theta_n:=n^b\theta,\,\mbox{ where
% $b\ge0$ and $\theta>0$.}
% \end{equation} 
Processes $\{N_x(t)\}$, $x=1,\ldots,n$
are independent, Poisson   of intensity $1$, while
$\tilde w_-(t)$ is a standard one dimensional Wiener process,
independent of the Poisson processes.
The parameter $\gamma>0$ 
regulates the intensity of the random perturbations
and the Langevin thermostat.
{We have choosen the same parameter in order to simplify notations,
  it does not affect the results {concerning} the macroscopic properties of the dynamics.}

% The processes are assumed to be given
% over a probability space $(\Sigma, {\cal A},\bbP)$.

The generator of the dynamics is given by
\begin{equation}
  \label{eq:7}
  \mathcal G_t =  \mathcal A_t +  \gamma S_{\text{flip}}
  + 2   \gamma S_-,
\end{equation}
where
\begin{equation}
  \label{eq:8}
  \mathcal A_t = \sum_{x=0}^n p_x \partial_{q_x}
  + \sum_{x=0}^n  (\Delta_N q_{x}-\om^2_0q_x) \partial_{p_x}
  +  \cF_n(t)  \partial_{p_n},
\end{equation}
%by convention we let $q_{n+1}:=q_n$ and {$q_{-1}= q_0$}.
and
\begin{equation}
  \label{eq:21}
   S_{\text{flip}} F (\pv,\qv) = \sum_{x=1}^{n}   \Big( F(\pv^x,\qv) - F(\pv,\qv)\Big),
 \end{equation}
 where $F:\bbR^{2(n+1)}\to\bbR$ is a bounded and measurable function,
 $\pv^x$ is the velocity configuration with sign flipped at the
 $x$ component, i.e. $\pv^x=(p_0',\ldots,p_n')$, with $p_y'=p_y$,
 $y\not =x$ and  $p_x'=-p_x$. Furthermore,
% \begin{equation}
% \label{gax}
% \gamma_x=\gamma(1-\delta_{0,x})+\gamma_0 \delta_{0,x}
% \end{equation}
%  and
 \begin{equation}
   \label{eq:10}
   S_- = T_- \partial_{p_0}^2 - p_0 \partial_{p_0}.
 \end{equation}

The energy currents are given by
\begin{equation}
\label{eq:current}
 \mathcal G_t \mathcal E_x  = j_{x-1,x} - j_{x,x+1} ,
\end{equation} 
with 
$$
j_{x,x+1}:=- p_x (q_{x+1}- q_x) , \qquad \mbox{if }\quad x \in
\{0,...,n-1\}
$$  
and at the boundaries 
  \begin{equation} \label{eq:current-bound}
    { j_{-1,0} := 2 { \gamma} \left(T_- - p_0^2 \right)},
      % -\frac{\gamma}{2}  p_{0}^2,
      \qquad
  j_{n,n+1} :=    -  \cF_n\left(t\right)   p_n.
\end{equation}

\subsection{Main results}
Our first result concerns the existence {and uniqueness}
of a periodic stationary state 
for the system. Fix $n\ge1$.
Following \cite{kasminski} Section 3.2,
we define
a \emph{periodic stationary probability measure}
{$\{\mu_t^P, t\in[0,+\infty)\}$ as
a solution of the forward equation $\partial_t \mu_t^P = \mathcal G_t^* \mu_t^P$
such that $\mu_{t + \theta_n}^P=\mu_{t}^P$. This condition is equivalent to }
\begin{equation}
  \label{eq:23}
  \int_0^{\theta_n} \dd s \int_{\bbR^{2(1+n)}} \mathcal G_s F(\rv, \pv) \mu_s^P(\dd\qv,\dd\pv)
    = 0,
\end{equation}
for any smooth test function {$F:\bbR^{2(n+1)}\to\bbR$}.
%\marginpar{\red{Corr.}}
% It follows that $\mu_{t +\ell \theta_n}^P=\mu_{t}^P$ for all $t\ge0$ and $\ell=1,2,\ldots$.

 Suppose that  $\{(\mathbf q(t), \mathbf p(t))\}_{t\ge0}$ is the
solution of \eqref{eq:flip}-\eqref{eq:pbdf}  initially distributed
according to $\mu_0^P$.
Given a measurable function $F:\bbR^{2(n+1)}\to\bbR$ integrable
w.r.t. each measure $\{\mu_s^P,
s\in[0,+\infty)\}$ we denote
\begin{equation}
\label{bar}
\bar F(t):=\bbE F\Big(\mathbf q(t), \mathbf
p(t)\Big)=\int_{\bbR^{2(n+1)}}F(\qv,\pv)\mu_t^P(\dd\qv,\dd\pv),\quad t\ge0,
\end{equation}
where $\bbE$ is the expectation w.r.t.   $\bbP$ - {the probability
measure corresponding to the noises and with initial data distributed by $\mu_0^P$.}
The function $\bar F(t)$ is $\theta_n$-periodic.
We denote its time average by
\begin{equation}
\label{ll}
\lang F\rang_n:= {\frac{1}{\theta_n}\int_0^{\theta_n}\bar F(t)\dd t}.
\end{equation}
The subscript $n$ in the notation of the average $\lang \cdot\rang_n$
will be sometimes omitted,
when it is obvious from the context.

\begin{theorem}
\label{periodic}
For a fixed $n\ge1$ there exists a unique periodic stationary state $\{\mu_s^P,
s\in[0,+\infty)\}$ for the system \eqref{eq:flip}-\eqref{eq:pbdf}.
 The measures $\mu_s^P$ are absolutely continuous with respect to the Lebesgue
 measure $\dd\qv\dd\pv$ and the density
 $\mu_s^P(\dd\qv,\dd\pv)=f_s^P(\qv,\pv) \dd\qv\dd\pv$ is
 strictly positive.
 {Furthermore $\min_x \lang p_x^2\rang_n \ge T_-$.}
\end{theorem}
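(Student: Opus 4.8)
The plan is to regard $\{(\qv(t),\pv(t))\}$, written $\xi(t)$, as a piecewise deterministic Markov process: between the jump times of the independent Poisson clocks it solves the linear SDE $\dd\xi=(\mathbb A\,\xi+\cF_n(t)\,e_{p_n})\,\dd t+\sqrt{4\gamma T_-}\,e_{p_0}\,\dd\tilde w_-$ associated with $\mathcal A_t$ and $2\gamma S_-$ (here $\mathbb A$ is a fixed matrix and $e_{p_0},e_{p_n}$ are coordinate vectors), while at a jump of the $x$-th clock it undergoes the linear reflection $V_x\colon(\qv,\pv)\mapsto(\qv,\pv^x)$. For existence I would follow Khasminskii's scheme (\cite{kasminski}): the time-$\theta_n$ map $\mathcal P\colon\mu\mapsto\mathrm{Law}\big(\xi(\theta_n)\mid\xi(0)\sim\mu\big)$ is Feller, any $\mathcal P$-invariant probability $\nu$ yields a periodic stationary state by setting $\mu^P_t:=\mathrm{Law}\big(\xi(t)\mid\xi(0)\sim\nu\big)$ for $t\in[0,\theta_n)$ and extending $\theta_n$-periodically (which then satisfies \eqref{eq:23}), and existence of $\nu$ is a Krylov--Bogolyubov argument for $\mathcal P$. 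So it suffices to exhibit a Lyapunov function $V$ with $\sup_{t}\bbE_{\xi_0}V(\xi(t))<\infty$. Since $\mathcal G_t\mathcal H_n=2\gamma(T_--p_0^2)+\cF_n(t)p_n$ controls only $p_0^2$, I would use the classical ``twisted energy'' $V=\mathcal H_n+\varepsilon\,\langle\qv,B\pv\rangle$ of a pinned chain thermostatted at one end, with a fixed matrix $B$ and small $\varepsilon>0$ chosen so that $\mathcal G_tV\le -c\,\mathcal H_n+C$; the existence of such $B$ uses $\om_0>0$ together with the fact that the single damping at site $0$ is felt, through the harmonic couplings, at every site. The same function (or $V^2$) supplies the uniform moment bounds used below.

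For absolute continuity, strict positivity and uniqueness I would use hypoellipticity and controllability. Between jumps the only Brownian direction is $\partial_{p_0}$, and iterating Lie brackets with the (autonomous part of the) drift gives $[\mathcal A_t,\partial_{p_0}]=-\partial_{q_0}$, then $[\mathcal A_t,\partial_{q_0}]=(1+\om_0^2)\partial_{p_0}-\partial_{p_1}$ (because $q_0$ enters the $p_1$-equation), hence $\partial_{p_1}$, then $\partial_{q_1}$, and inductively all of $\partial_{q_x},\partial_{p_x}$, $x=0,\dots,n$; so H\"ormander's condition holds and the continuous-part kernel has a smooth density. Composing with the diffeomorphisms $V_x$ preserves absolute continuity, so $\mathcal P$ itself has a density; moreover the same rank condition makes the control system $\dot\xi=\mathbb A\xi+\cF_n(t)e_{p_n}+u(t)e_{p_0}$ controllable, so by the support theorem this density is strictly positive everywhere. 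Hence $\mathcal P$ is strong Feller and irreducible, and the standard Doob/Khasminskii argument gives uniqueness of $\nu$ --- hence of $\{\mu^P_s\}$ --- together with convergence to the periodic state from any initial law; smoothness and strict positivity of the densities $f^P_s$ follow from those of the continuous-part kernel.

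For the bound $\min_x\lang p_x^2\rang_n\ge T_-$ I would couple the forced process with the unforced one ($\cF_n\equiv0$) started in its Gibbs state. Let $\eta(t)$ solve the system with $\cF_n\equiv0$, driven by the \emph{same} Wiener process and the \emph{same} Poisson clocks, with $\eta(0)$ distributed by $\propto e^{-\mathcal H_n/T_-}$; since $\mathcal H_n$ is even in $\pv$ and is preserved by the Hamiltonian flow, by the flips and by the $T_-$-Langevin part, this law is stationary for the unforced dynamics, so $\bbE[\eta_{p_x}(t)^2]=T_-$ and $\bbE[\eta(t)]=0$ for all $t$. Start the forced process from the \emph{same} random point $\eta(0)$. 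Then $\delta\xi:=\xi-\eta$ solves $\dd(\delta\xi)=(\mathbb A\,\delta\xi+\cF_n(t)e_{p_n})\,\dd t$ between Poisson times, with the same reflections $V_x$ at those times and $\delta\xi(0)=0$; the Wiener noise having cancelled, its whole trajectory --- in particular $c_x(t):=\delta p_x(t)$ --- is a deterministic function of $\cF_n$ and the Poisson clocks. Conditionally on the clocks, $\eta$ is a Gaussian process whose conditional mean is the (clock-dependent) propagator applied to $\bbE[\eta(0)]=0$, so $\bbE[\eta_{p_x}(t)\,c_x(t)]=\bbE\big[c_x(t)\,\bbE(\eta_{p_x}(t)\mid\text{clocks})\big]=0$, whence
\[
\bbE[p_x(t)^2]=\bbE[\eta_{p_x}(t)^2]+2\bbE[\eta_{p_x}(t)c_x(t)]+\bbE[c_x(t)^2]=T_-+\bbE[c_x(t)^2]\ \ge\ T_- .
\]
Letting $t=s+k\theta_n\to\infty$, using convergence to $\mu^P_s$ and a uniform fourth-moment bound (from $V$) to pass $p_x^2$ to the limit, gives $\bbE_{\mu^P_s}[p_x^2]\ge T_-$; averaging over $s\in[0,\theta_n)$ yields $\lang p_x^2\rang_n\ge T_-$ for every $x$, hence $\min_x\lang p_x^2\rang_n\ge T_-$.

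The step I expect to be the main obstacle is the construction of the twisted Lyapunov function $V$ (the quantitative energy estimate for a chain with a single thermostat) and the accompanying exponential stability of the noise-free propagator $\Psi(t,s)$, since these underlie both the tightness in the existence argument and the passage to the limit in the last step; the bracket computation, the controllability/support argument and the coupling identity are comparatively routine.
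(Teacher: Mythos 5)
Your overall architecture (Khasminskii's periodic framework, Krylov--Bogolyubov for the time-$\theta_n$ map, hypoelliptic bracket computation, minorization/irreducibility for regularity and uniqueness) matches the paper's, but the step you yourself flag as the ``main obstacle'' is a genuine gap, and it is precisely where the difficulty of this model lies, so the proposal does not go through as written. The ansatz $V=\mathcal H_n+\varepsilon\langle \qv,B\pv\rangle$ with $\mathcal G_tV\le -c\,\mathcal H_n+C$ is doubtful for two reasons. First, with dissipation only in $p_0$, differentiating the cross term produces $+\varepsilon\langle \pv,B\pv\rangle$ while the only negative momentum contribution is $-2\gamma p_0^2$; a one-step quadratic inequality of this type cannot generate $-p_x^2$ for $x\ge1$ unless $B$ is essentially the full solution of a Lyapunov matrix equation, not a small perturbation of the energy. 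Second, and decisively, the bulk flips act nontrivially on any quadratic form with cross terms: $\gamma S_{\rm flip}\langle \qv,B\pv\rangle=-2\gamma\sum_{x\ge1}(B^T\qv)_x\,p_x$, a sign-indefinite term of exactly the same order in $\varepsilon$ as the gain you hope to extract, so shrinking $\varepsilon$ does not help; and for the Lyapunov-equation choice of quadratic form the flip term is of size $\gamma\|Q\|$ times the energy and can overwhelm the dissipation inherited from the no-flip dynamics. Boundedness of $\bbE\,\mathcal H_n(t)$ is in fact equivalent to stability of the second-moment flow \emph{including} the flip re-injection term (the crude bound $\frac{d}{dt}\bbE\mathcal H_n\le 2\gamma T_-+|\cF_n(t)|(2\,\bbE\mathcal H_n)^{1/2}$ only gives growth), and this is the heart of the matter. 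The paper does not use a Lyapunov ansatz at all: it constructs the periodic first and second moments directly, reducing the covariance equation to a fixed-point equation for the vector of $\bar{p_x^2}(t)$ with a nonnegative kernel whose time integral $M_{x,y}$ is bi-stochastic with $M_{x,0}>0$; the map is then a contraction with constant $\max_x(1-M_{x,0})<1$, and tightness follows by starting the dynamics from an initial law having these periodic moments, so that $\bbE\mathcal H_n(t)$ is periodic, hence bounded. Without an input of this strength, both your existence step and your final limit passage (which needs uniform-in-time moment bounds and convergence to the periodic state) are unsupported.

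The remaining ingredients are sound and close to the paper: your bracket computation is essentially the paper's hypoellipticity argument, and your support-theorem/controllability route to positivity replaces the paper's explicit Gaussian minorization on the no-flip event, with the same effect. Your coupling proof of $\min_x\lang p_x^2\rang_n\ge T_-$ is genuinely different from the paper's and is a nice idea: conditioning on the Poisson clocks, the Wiener noise cancels in the difference process, the unforced copy in its Gibbs state has conditional mean zero, and the cross term vanishes, giving $\bbE p_x^2(t)\ge T_-$ pathwise in law. It is correct up to the final passage to the periodic state, which again requires the convergence and uniform integrability you have not established; by contrast, the paper obtains the bound for free, since its fixed point is constructed inside the set $\{T_x\ge T_-,\ x=0,\dots,n\}$, which the contraction map preserves.
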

The proof of the result is contained in Appendix \ref{appB}.

From \eqref{eq:current} we conclude that  the time averaged
energy current $\lang j_{x,x+1}\rang$ is constant for
$x=-1,\ldots,n$. Denote therefore
\begin{equation}  \label{eq:38}
  J_n^{a,b} :=  \lang j_{x,x+1}\rang,\quad x=-1,\ldots,n.
\end{equation}
%\marginpar{\red{Corr.}}
In particular 
\begin{equation}  \label{eq:38a}
  J_n^{a,b}=
 -  \frac {n^{a}}{\theta_n}  \int_0^{\theta_n}  \cF\left(\frac{s}{\theta_n}\right)
 \bar p_n(s)\dd s  
  = {2} \gamma\Big(T_- - \lang p_0^2\rang \Big) .
\end{equation}
We prove, see Theorem \ref{thm-current} below, that if $b-a=1/2$, $a\le 0$
and $b\ge0$ (recall that $\theta_n=n^b\theta$), then 
\begin{equation}
  nJ_n^{a,b}=J^{a,b} + o(1), \quad\mbox{as $n\to+\infty$},
  \label{eq:76}
\end{equation}
 where $J^{a,b}<0$
is a  constant given by an explicit formula, see \eqref{051021-05}.

 In
our {first main result}, see Theorem \ref{th1} below, we prove the convergence
of the   {time averaged energy} functional  $\lang{\cal E}_x\rang$ to  a linear macroscopic profile
$$
      T(u) = T_--\frac{4\gamma Ju}{D},\quad\mbox{  $u\in[0,1]$},
$$ where
 the constant $D>0$ is given by
formula \eqref{eq:13}.

Our second result, see Theorem \ref{thm:var} below, deals with the
question of the vanishing of the fluctuations of the kinetic energy
functional {in the case when the period of the force is of a fixed
  microscopic size. More precisely, supposing that $b=0$ and $a=-1/2$, we
  prove that there exists a constant $C>0$ such that
\begin{equation}
    \label{eq:91cc}
  {    \sum_{x=0}^n \int_0^\theta \left(\bar{p_x^2}(t) - \lang
      p_x^2\rang \right)^2\dd t    
\le \frac{C}{n^2}, \quad n=1,2,\dots}
  \end{equation}}

\subsection{About the proof}
{
The macroscopic equation for the energy transport emerges from an
exact \emph{fluctuation-dissipation} decomposition of the energy current
\begin{equation}
  \label{eq:74}
  j_{x,x+1} = \mathcal G_t \frak f_x  - \frac 1{4\gamma} \nabla \mathfrak F_x,
\end{equation}
where $\mathfrak F_x$ and $\frak f_x$ are local second order polynomials
in the variables $\{q_{x+j}, p_{x+j}, j= -1, 0, 1\}$ (see \eqref{eq:32}, \eqref{eq:33}
and \eqref{072606-21a} for the precise definitions, valid also at the boundaries).
{After taking the time average we obtain}
\begin{equation}
\lang  j_{x,x+1} \rang
= - \frac 1{4\gamma} \nabla \lang \mathfrak F_x\rang.
\label{eq:75}
\end{equation}
Then we establish first \eqref{eq:76}, that takes care of the left hand side
of \eqref{eq:75}, i.e. $n\lang  j_{x,x+1} \rang \to J^{a,b}$.
%This is contained in \autoref{thm-current} proven in \autoref{sec3}.
The explicit calculation involved in computing $J^{a,b}$ {rely} only
on the
first moments of the periodic states.

Note that the equilibrium average at temperature $T$ of  $\mathfrak F_x$
equals $D T$ (with $D>0$  given by
  \eqref{eq:13}). Thus, all we need to prove is a kind of a local equilibrium
that allows to conclude that $\lang \mathfrak F_{[nu]}\rang \sim D T(u)$, $u\in [0,1]$.
This is the main part of the work. It involves {proving} the convergence
of the second moments of the positions and momenta. The most difficult
part is to establish an a priori  upper
bound for the time average of the total energy that proves it does not
grow faster than the size of the system $n$ (cf. \eqref{021911-21a}).

In \autoref{sec:cov} we {derive} a closed system of equations for the
time averages of the covariance matrix \eqref{163011-21} that involves
the discrete Neumann laplacian $\Delta_N$. After performing some
manipulations with these equation
we obtain that the time ageraged position covariances are given by the
Green's function $\left(\omega_0^2 - \Delta_N\right)^{-1}(x,x')$ plus
an error that is proportional to the averaged current, that is small
{(of order $O(1/n)$)}. In the bulk we have that
\begin{equation}
\left(\omega_0^2 - \Delta_N\right)^{-1}(x,x') \ \mathop{\longrightarrow}_{n\to\infty}\ 
\left(\omega_0^2 - \Delta\right)^{-1}(x-x').
\label{eq:77}
\end{equation}
Here $\Delta$ is the discrete laplacian on $\mathbb Z$,
that gives the covariance matrix of the positions of the system in equilibrium.  
But in order to obtain the correct bounds on the total energy we need
a careful control the
behavior of the Green's function at the boundaries, cf Lemma \ref{lem-ful} in 
\autoref{sec:green-fnc}.

Once the energy bound is established, the next step is to prove   local equilibrium,
which is contained in Proposition \ref{prop-loceq}. After this step
the proof of the main result follows directly, see 
\autoref{thm012912-21}.

In \autoref{appB} we prove the existence of the periodic measure.
The manipulations done with the equations for the covariance matrix use some ideas
from \cite{bll}.  The harmonic dynamics with self-consistent
Langevin reservoirs considered in that article has a similar covariance equations
of the corresponding stationary state, see \autoref{sec:conclusions}.

{Section \ref{sec:vanish-time-vari} contains the proof of the result
concerning the vanishing size of time variance of the kinetic energy,
see Theorem \ref{thm:var}.
\autoref{appC} is devoted to the presentation of the proofs of
auxiliary facts formulated in Section  \ref{sec:vanish-time-vari}.}}

% \subsection{Context}

%  \tk{\em write about the existing literature...}

{\subsection{Acknowledgements} We warmly thank David Huse for very stimulating
  discussions on the subject.  The work of J.L.L. was supported in
  part by the A.F.O.S.R. He thanks the Institute for Advanced Studies
  for its hospitality. T.K. acknowledges the support of the NCN grant 2020/37/B/ST1/00426. 
  S.O. has been partially supported by the ANR-15-CE40-0020-01
  grant LSD.}

\section{Some preliminaries and notation}

% \subsection{Microscopic dynamics}
% \label{sec:microscopic-dynamics}

% \subsection{Periodic  stationary states}
% \label{sec:prop-peri-stat}

% \proof
% \emph{In order to prove existence of $\mu_s^P$
%   The ideas from \cite{carmona}
%   and \cite{bo1} can be adapted.
% This may also work  for the anharmonic case. In the harmonic case, it
% should be possible to establish stability directly, which would imply
% the existence and uniqueness result...}
% \qed

%\subsection

%\subsection{Time averages of the means} 

\subsection{The dynamics of  periodic means}

Define the averages in the periodic state:
\begin{equation}
  \label{eq:6}
  \begin{split}
  &\bar p_x(s)  :=  \int_{\bbR^{2(n+1)}}p_x \mu_s^P(\dd\qv,\dd\pv),\\
  &\bar q_{x}(s) := \int_{\bbR^{2(n+1)}}q_x \mu_s^P(\dd\qv,\dd\pv),\quad x=0,\ldots,n.
\end{split}
\end{equation}
They satisfy
\begin{equation} 
\label{eq:qdynamicsbulk-av}
\begin{aligned}
  \dot {  \bar q}_x &=  \bar p_x ,
  \\
  \dot{   \bar  p}_x &= \Big( \Delta_{\rm N}   -\om_0^2\Big)  \bar q_x-
  2{\gamma} \bar  p_x+\delta_{x,n}\cF_n\left(t\right)  
    , \; \quad x\in \{0, \dots, n\}.
  \end{aligned} \end{equation}
% and at the boundaries 
% \begin{align}
%      \dot{\bar   p}_0(t) &=   \; \bar  q_1(t)-\bar  q_0(t) -
%                            \om_0^2 \bar  q_0(t) 
%      - 2 (\gamma_0+\tilde \gamma) \bar  p_0(t)     \vphantom{\Big(} \label{eq:rbd-av}\\
%  \dot{ \bar  p}_n(t) &=  \; \bar  q_{n-1}(t) -\bar  q_n(t)
%                        -\om_0^2 \bar  q_n(t) 
%                        + n^a \; -
%                       2 \gamma \bar  p_n(t)   \vphantom{\Big(} \notag
% \end{align}
Here $\Delta_{\rm N}$ is the Neumann discrete laplacian, subject
to the boundary condition $q_{-1}=q_0$, $q_n=q_{n+1}$.
We can rewrite the above system using a matrix notation.
Let
$$
\bar{\bf q}(t)=
\left(
  \begin{array}{c}\bar q_0(t)\\
    \vdots
    \\
    \bar q_n(t)
    \end{array}\right),\qquad \bar{\bf p}(t)=
\left(
  \begin{array}{c}\bar p_0(t)\\
    \vdots
    \\
    \bar p_n(t)
    \end{array}\right).
$$
 and $({\bf q},{\bf p})$ be the vector of initial data.
% $$
% \bar{\bf X}(t)
% $$
We can write that
$$
\left(
  \begin{array}{c}\bar{\bf q}(t)\\
    \bar{\bf p}(t)
    \end{array}\right)=e^{-At}\left(
  \begin{array}{c}{\bf q}\\
    {\bf p}
    \end{array}\right)+  \int_0^t
e^{-A(t-s)}\;\cF_n\left(s\right){\rm e}_{p,n+1}\dd s.
$$
Here $A$ is a $2\times 2$ block matrix made of $(n+1)\times (n+1)$
matrices of the form
\begin{equation}
\label{A}
A=
\left(
  \begin{array}{cc}
    0&-{\rm Id}_{n+1}\\
    -\Delta_{\rm N} +\om_0^2& 2\ga {\rm Id}_{n+1}
  \end{array}
\right)
\end{equation}
where ${\rm Id}_{n+1}$ is  the $(n+1)\times (n+1)$ identity matrix.
%$\Delta$ is the Neumann laplacian. 
% ,
% \begin{equation}
% \label{G}
% \Gamma=2{\rm diag}[\gamma_0+\tilde\gamma,\gamma,\ldots,\gamma].
% \end{equation}
We also let ${\rm e}_{q,\ell}$ and
${\rm e}_{p,\ell}$, $\ell=1,\ldots,n+1$ be the $1\times 2(n+1)$ column vectors  whose
components are given by  
\begin{equation}
\label{epq}
{\rm
  e}_{q,\ell,\ell'}=\delta_{\ell,\ell'}\quad\mbox{and}\quad  {\rm
  e}_{p,\ell,\ell'}=\delta_{n+1+\ell,\ell'},\quad \ell'=1,\ldots,2n+2.
\end{equation}
\begin{proposition}
\label{prop012212-21}
The
spectrum of matrix $A$ is contained in the half plane ${\rm Re}\,\la> 0$.
Thus, there exists $c>0$ such that
  \begin{equation}
\label{012212-21}
 (\la+A)^{-1}=\int_0^{+\infty} e^{-(\la+A)t}\dd t
  \end{equation}
  is well defined in the half-plane ${\rm Re}\,\la> -c$, with
  the integral on the right hand side of \eqref{012212-21} absolutely
  convergent. 
  \end{proposition}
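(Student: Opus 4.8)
The plan is to read off the spectrum of $A$ from its companion–matrix structure and then deduce the resolvent formula from exponential decay of the associated semigroup. Write $B:=\omega_0^2\,\mathrm{Id}_{n+1}-\Delta_{\rm N}$, so that
\[
A=\begin{pmatrix} 0 & -\mathrm{Id}_{n+1}\\[2pt] B & 2\gamma\,\mathrm{Id}_{n+1}\end{pmatrix}.
\]
Since $\omega_0>0$ and $\langle -\Delta_{\rm N}\mathbf q,\mathbf q\rangle=\sum_{x=0}^{n-1}(q_{x+1}-q_x)^2\ge0$, the matrix $B$ is symmetric and positive definite, with smallest eigenvalue equal to $\omega_0^2$ (the constant vector lies in $\ker\Delta_{\rm N}$); in fact all its eigenvalues lie in $[\omega_0^2,\omega_0^2+4]$, which even yields bounds uniform in $n$, though that is not needed here.

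First I would describe $\mathrm{spec}(A)$ explicitly. If $A(u,w)^\top=\lambda(u,w)^\top$, the first block row gives $w=-\lambda u$ and the second gives $Bu=(2\gamma\lambda-\lambda^2)u$; since $(u,w)\neq0$ forces $u\neq0$, the number $\nu:=2\gamma\lambda-\lambda^2$ is an eigenvalue of $B$, hence $\nu\ge\omega_0^2>0$, and $\lambda$ solves $\lambda^2-2\gamma\lambda+\nu=0$, i.e. $\lambda=\gamma\pm\sqrt{\gamma^2-\nu}$. When $\nu\le\gamma^2$ both roots are real, and because $0<\sqrt{\gamma^2-\nu}<\gamma$ they lie in $(0,2\gamma)$; when $\nu>\gamma^2$ the two roots are $\gamma\pm\mathrm i\sqrt{\nu-\gamma^2}$, with real part $\gamma$. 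Using that $\nu\mapsto\gamma-\sqrt{\gamma^2-\nu}$ is increasing on $(0,\gamma^2]$ and that $\nu\ge\omega_0^2$, every eigenvalue $\lambda$ of $A$ satisfies
\[
\mathrm{Re}\,\lambda\ \ge\ c_0:=\min\bigl\{\gamma,\ \gamma-\sqrt{\max\{\gamma^2-\omega_0^2,\,0\}}\,\bigr\}\ >\ 0,
\]
which is the first assertion of the proposition.

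Then I would conclude as follows. Fix any $c\in(0,c_0)$. Since the spectral abscissa of $-A$ equals $-\min_{\lambda\in\mathrm{spec}(A)}\mathrm{Re}\,\lambda\le-c_0$, a standard estimate on matrix exponentials via the Jordan form of $A$ gives a constant $C_c<\infty$ with $\|e^{-At}\|\le C_c e^{-ct}$ for all $t\ge0$ (the strict inequality $c<c_0$ is only used to absorb the polynomial factors from possible Jordan blocks, which arise exactly when $\gamma^2\in\mathrm{spec}(B)$). Hence for $\mathrm{Re}\,\lambda>-c$ we get $\|e^{-(\lambda+A)t}\|=e^{-(\mathrm{Re}\,\lambda)t}\|e^{-At}\|\le C_c e^{-(\mathrm{Re}\,\lambda+c)t}$, which is integrable on $[0,\infty)$, so the integral in \eqref{012212-21} is absolutely convergent. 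Setting $R:=\int_0^{+\infty}e^{-(\lambda+A)t}\,\mathrm{d}t$ and integrating $\frac{\mathrm{d}}{\mathrm{d}t}e^{-(\lambda+A)t}=-(\lambda+A)e^{-(\lambda+A)t}$ over $[0,+\infty)$ (using $e^{-(\lambda+A)t}\to0$) yields $(\lambda+A)R=R(\lambda+A)=\mathrm{Id}$, i.e. $R=(\lambda+A)^{-1}$. As $c_0>0$, this is exactly the claim.

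There is no serious obstacle; the only point requiring mild care is the possible non‑diagonalizability of $A$ when $\gamma^2\in\mathrm{spec}(B)$, handled by taking $c$ strictly below $c_0$. If one prefers to avoid the Jordan form, the bound $\|e^{-At}\|\le Ce^{-ct}$ can instead be obtained from the hypocoercive Lyapunov functional $E_\delta(\mathbf q,\mathbf p):=\tfrac12|\mathbf p|^2+\tfrac12\langle B\mathbf q,\mathbf q\rangle+\delta\langle\mathbf q,\mathbf p\rangle$, which for small $\delta>0$ is comparable to $|\mathbf q|^2+|\mathbf p|^2$ and, along the homogeneous part of \eqref{eq:qdynamicsbulk-av}, satisfies $\frac{\mathrm{d}}{\mathrm{d}t}E_\delta=-(2\gamma-\delta)|\mathbf p|^2-\delta\langle B\mathbf q,\mathbf q\rangle-2\gamma\delta\langle\mathbf q,\mathbf p\rangle\le-c\,E_\delta$ for $\delta$ chosen small enough, using $\langle B\mathbf q,\mathbf q\rangle\ge\omega_0^2|\mathbf q|^2$ and Young's inequality on the cross term.
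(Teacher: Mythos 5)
Your proof is correct. Note that the paper does not prove this proposition in the text at all: it simply refers to Appendix A of \cite{bll}, where the stability of an analogous drift matrix is established. Your argument is a clean, self-contained substitute, and it differs in an instructive way: because the friction block of $A$ in \eqref{A} is the scalar matrix $2\ga\,{\rm Id}_{n+1}$, the matrix decouples over the eigenbasis of $\om_0^2-\Delta_{\rm N}$ into $2\times2$ companion blocks, so you can solve the quadratic $\la^2-2\ga\la+\nu=0$ for each eigenvalue $\nu\in[\om_0^2,\om_0^2+4]$ and read off an \emph{explicit} spectral gap $c_0=\min\{\ga,\ \ga-\sqrt{\max\{\ga^2-\om_0^2,0\}}\}$, whereas the quadratic-form (energy) argument of the cited appendix is designed to cover general positive-definite, non-scalar friction matrices and yields positivity of the real parts without an explicit constant. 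Your handling of the only delicate points is right: the eigenvector relation $w=-\la u$ forces $u\neq0$, so $\nu=2\ga\la-\la^2$ really is an eigenvalue of the positive-definite block; and the possible Jordan blocks (occurring exactly when $\ga^2\in{\rm spec}(\om_0^2-\Delta_{\rm N})$, i.e.\ when a companion block has a double root) are absorbed by taking $c<c_0$ before bounding $\|e^{-At}\|$, after which absolute convergence of \eqref{012212-21} and the identity $(\la+A)R=R(\la+A)={\rm Id}$ follow by integrating the derivative of $e^{-(\la+A)t}$. One cosmetic slip: when $\nu=\ga^2$ you write $0<\sqrt{\ga^2-\nu}$, which fails in that boundary case, but the conclusion $\la=\ga\in(0,2\ga)$ is unaffected. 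The hypocoercive Lyapunov alternative you sketch is also a valid route to the semigroup bound.
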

  The proof of the result can be found in Appendix A of \cite{bll}.

\subsection{Time harmonics of the periodic means}
Consider the Fourier coefficients of the periodic means
\begin{equation}
  \label{eq:6a}
  \begin{split}
&\tilde p_x(\ell)=\frac{1}{\theta_n}\int_0^{\theta_n} e^{-2\pi i\ell
  t/\theta_n}\bar p_x(t)\dd t,\\
&\tilde q_x(\ell)=\frac{1}{\theta_n}\int_0^{\theta_n} e^{-2\pi i\ell
  t/\theta_n}\bar q_x(t)\dd t,\quad \ell\in\mathbb Z.
\end{split}
\end{equation}
They satisfy
\begin{equation} 
\label{eq:qdynamicsbulk-av-f}
\begin{aligned}
  \frac{2\pi i \ell}{\theta_n} \tilde q_x(\ell) &= \tilde p_x(\ell) ,
   \\
  \frac{2\pi i \ell}{\theta_n}  \tilde   p_x(\ell) &=  \Big(\Delta_{\rm N}   -\om_0^2\Big)  \tilde  q_x(\ell)-
  {2 \gamma} \tilde  p_x(\ell)+ {n^{a}} \tilde     \cF(\ell) \delta_{x,n}
    , \; \quad x\in \{0, \dots, n\}.
  \end{aligned} \end{equation}
% and at the boundaries 
% \begin{align}
%     \frac{2\pi i \ell}{\theta_n}  \tilde     p_0(\ell) &=   \; \tilde q_1(\ell)-\tilde q_0(\ell) -
%                            \om_0^2 \tilde  q_0(\ell) \notag
%      -  2(\gamma_0+\tilde \gamma) \tilde  p_0(\ell)     \vphantom{\Big(} \label{eq:rbd-avf}\\
%   \frac{2\pi i \ell}{\theta_n}  \tilde     p_n(\ell) &=  \;  \tilde    q_{n-1}(\ell) -\tilde       q_n(\ell)
%                        -\om_0^2 \tilde       q_n (\ell)        -
%                       2 \gamma \tilde       p_n(\ell)   \vphantom{\Big(} .
% \end{align}
Here
\begin{equation}
\label{cF}
\tilde     \cF(\ell)=\int_0^1 e^{-2\pi i\ell t }\cF(t)\dd t.
\end{equation}

In the particular case when $\ell=0$ we have 
\begin{equation}
  \label{eq:6aa}
  \lang q_x\rang_n  =  \tilde q_x(0)\quad\mbox{and}\quad  \lang p_x\rang_n
  =\tilde p_x(0)=0,\quad x=0,\ldots,n. 
\end{equation}
The last equality follows from the first equation of \eqref{eq:qdynamicsbulk-av}.
Combining the first and the second equations of
\eqref{eq:qdynamicsbulk-av-f}   we get
\begin{equation} 
\label{011005-21}
  0 = -L_{\om_0,\theta_n,\ell}^n \tilde q_x(\ell)
  +  \;  {n^{a}} \tilde     \cF(\ell) \delta_{x,n}
    , \; \quad x=0, \dots, n.
    \end{equation}
Here 
\begin{equation}
\label{Ln}
L_{\om_0,\theta,\ell}^n := 
\left[\omega_0^2 - \left(\frac{2\pi \ell}{\theta}\right)^2  + i\frac{4\pi
      \ell \gamma }{\theta}  \right]  -\Delta
\end{equation}
with the Neumann boundary conditions 
$
\tilde q_{-1}(\ell)=\tilde q_0(\ell)$, $\tilde q_n(\ell)=\tilde
q_{n+1}(\ell).
$

\subsection{Green's function corresponding to $L_{\om_0,\theta,\ell}^n$}

\label{sec2.6}
Denote by   $ G^n_{\om_0,\theta,\ell} (x,y)$ the Green's functions
corresponding to  $L_{\om_0,\theta,\ell}^n$. It is defined as the
solution of 
\begin{equation} 
\label{011205-21c}
 \  \delta_{x,y}=  L_{\om_0,\theta,\ell}^n  G^n_{\om_0, \theta,\ell}
    (x,y),\quad x,y=0,\ldots,n.
  \end{equation}
This function is given explicitly by
\begin{equation}
\label{Gpsi}
G^n_{\om_0, \theta,\ell} (x,y)=\sum_{j=0}^n\frac{\psi_j(x) \psi_j(y)}{\la_j+\om_0^2 -\left({2\pi \ell\theta^{-1}}\right)^2+{4\gamma \pi i \ell}{\theta^{-1}} }
\end{equation}
where  $\la_j$ and $\psi_j$ are the respective  eigenvalues
and eigenfunctions for the discrete Neumann laplacian $-\Delta_{\rm N}$. 
They are given by
\begin{align}
\label{laps}
\la_j=4\sin^2\left(\frac{\pi j}{2(n+1)}\right),\quad
\psi_j(x)=\left(\frac{2-\delta_{0,j}}{n+1}\right)^{1/2}\cos\left(\frac{\pi
    j(2x+1)}{2(n+1)}\right),\quad x,j=0,\ldots,n.
\end{align}
If $\ell=0$, then \eqref{Gpsi} defined the Green's function of
$\omega^2_0 -\Delta_{\rm N}$.  We will denote it by $G^n_{\om_0} (x,y)$.
% From \eqref{011005-21} we have
% \begin{equation} 
% \label{021205-21}
% \begin{aligned}
%   \; \frac 1{n^{1/2}} \tilde     \cF(\ell) \delta_{x,n}
%    -  \frac{2\tilde\gamma \pi i \ell}{\theta}
%     \tilde q_0(\ell) \delta_{x,0}
%   &=  \left\{-\Delta   -\left[\left(\frac{2\pi \ell}{\theta}\right)^2-\om_0^2   -i\frac{2\pi
%       \ell \gamma }{\theta}  \right]\right\}  \tilde
%     q_x(\ell),\quad x=0,\ldots,n .
%   \end{aligned} \end{equation}

\subsection{Green's function of the lattice laplacian}

\label{sec2.7}

Recall that the lattice gradient and laplacian of any $f:\bbZ\to\bbR$ are  defined as $\nabla
f_x=f_{x+1}- f_{x}$ and $\Delta f_x=f_{x+1}+
f_{x-1}-2 f_{x}$, $x\in\bbZ$, respectively.

Suppose that $\om_0>0$. Consider the
Green's function of $-\Delta + \omega^2_0$. It is given by, see
e.g. \cite[(27)]{ray},
\begin{align}
\label{GR}
&G_{\om_0}(x) = \left(-\Delta + \omega^2_0 \right)^{-1}(x)\quad 
=\int_0^1\left\{4\sin^2(\pi u)+\om_0^2\right\}^{-1}\cos(2\pi ux)du\\
&
 =\frac{1}{\om_0\sqrt{\om_0^2+4}}\left\{1+\frac{\om_0^2}{2}+\om_0\sqrt{1
    +\frac{\om_0^2}{4}}\right\}^{-|x|},\quad x\in\bbZ.\notag
\end{align}

\subsection{Some notation}

We adopt the following convention. For two sequences $(a_n)$ and
$(b_n)$ of real positive numbers we denote
$a_n\approx b_n$, $n\ge1$ if there exists $C>1$ such that $C^{-1}a_n\le
b_b\le Ca_n$, for all $n\ge1$.

\section{Periodic stationary energy transport}

\label{sec3}

\subsection{Asymptotics of the
time average of the mean current}

\label{ssec3.1}

% that could be different from $0$ even if
% $\tilde\cF(0) =   \int_0^{1}\cF(s) ds = 0$.
% We shall write for abbreviation
% sake $ \tilde J_n =\tilde J_n(0)$.
% By an entropy production argument we can prove that:
% \begin{equation}
%   \label{eq:44}
%   J_n= - \frac 1{n^{1/2}\theta}  \int_0^{\theta} \ \cF(s/\theta) \bar p_n(s) ds=
%  \frac 1{n^{1/2}} \frac 1{\theta^2} \int_0^{\theta}  \cF'(s/\theta) \bar q_n(s) ds \le 0 .
% \end{equation}
% The proof is valid also in the anharmonic case.
% Energy, after time average, always flows from the right side,
% where work is done, to
% the left side into the thermostat.

% $$
% {\cal F}(t)=n^a{\cal F}(t/\theta_n)
% $$
% $$
% \theta_n=n^b\theta
% $$

Our first result gives an explicit formula for the asymptotics of the
time average of the mean current. In what follows we shall also be concerned with the functional
\begin{equation}
  \label{In}
I_n^{a,b}:=\frac{n^{a}}{\theta_n }\int_0^{\theta_n}\bar
q_n(t)\cF(t/\theta_n)\dd t,
\end{equation}
therefore we give its exact asymptotics, as $n\to+\infty$.
\begin{theorem}\label{thm-current}
Suppose that $\sum_{\ell}\ell^2|\tilde     \cF(\ell)|^2<+\infty$ and
\begin{equation}
\label{a-b}
b-a=\frac{1}{2},\quad a\le 0\quad\mbox{and}\quad b\ge0. 
\end{equation}
Then, 
\begin{equation}
\label{051021-05}
\lim_{n\to+\infty}n J_n^{a,b}=  J^{a,b}
:=-\left(\frac {2\pi}{\theta}\right)^2   \sum_{\ell\in\bbZ} \ell^2{\cal Q}^{a,b}(\ell), 
\end{equation}
with ${\cal Q}^{a,b}(\ell)$ given  by, cf \eqref{cF}, 
\begin{equation} 
\label{021205-21f1}
\begin{aligned}
  & {\cal Q}^{-1/2,0}(\ell)= 4\gamma|\tilde     \cF(\ell)|^2
  \int_0^1 \cos^2\left(\frac{\pi z}{2}\right)
  \left\{\left[4\sin^2\left(\frac{\pi z}{2}\right)
      +\om_0^2 -\left(\frac{2\pi\ell}{\theta}\right)^2\right]^2
    +\left(\frac{{4} \gamma \pi \ell}{\theta}\right)^2 \right\}^{-1}\dd z
\end{aligned} \end{equation}
%\marginpar{\red{Corr.}}
and 
\begin{equation} 
\label{021205-21f2}
\begin{aligned}
   {\cal Q}^{b-1/2,b}(\ell)= 4\gamma|\tilde     \cF(\ell)|^2
   \int_0^1 \cos^2\left(\frac{\pi z}{2}\right)
   \left[4\sin^2\left(\frac{\pi z}{2}\right)
       +\om_0^2 
     \right]^{-2}\dd z
,\quad\mbox{when }b>0.
\end{aligned} 
\end{equation}
Furthermore, we have
\begin{equation}
    \label{022801-22}
I_n^{a,a-1/2}={\frak I}^{a,a-1/2}{n^{2a}}+o(n^{2a}),
  \end{equation}
  where
  \begin{align*}
  &{\frak I}^{0,-1/2}:=  2\sum_\ell  |\tilde \cF(\ell)|^2
    \int_0^1  \cos^2\left(\frac{\pi z
    }{2}\right) \left\{ 4\sin^2\left(\pi
   z\right)+\left[ \om_0^2 -\left(\frac{2\pi \ell}{\theta}\right)^2 \right]\right\}
  \\
  &\times \left\{\left[4\sin^2\left(\frac{\pi z}{2}\right)+\om_0^2 -\left(\frac{2\pi \ell}{\theta}\right)^2\right]^2 +\left(\frac{{4}\gamma \pi  \ell}{\theta} \right)^2 
  \right\}^{-1}dz
  \end{align*}
 % \marginpar{\red{Corr.}}
  and for $b>0$
  \begin{align*}
  {\frak I}^{b-1/2,b}:=  2 \sum_\ell  |\tilde \cF(\ell)|^2
    \int_0^1  \cos^2\left(\frac{\pi z
    }{2}\right) \left\{ 4\sin^2\left(\pi
   z\right)+ \om_0^2 \right\}
  \left\{4\sin^2\left(\frac{\pi z}{2}\right)+\om_0^2
  \right\}^{-2}dz.
  \end{align*}
\end{theorem}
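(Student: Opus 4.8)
The plan is to express both $J_n^{a,b}$ and $I_n^{a,b}$ through the Green's function $G^n_{\om_0,\theta_n,\ell}(x,y)$ of \eqref{Gpsi} evaluated at the driven endpoint $x=y=n$, to identify the $n\to\infty$ limit of each time--Fourier mode as a Riemann integral, and to push that limit through the sum over modes $\ell$ by dominated convergence, $\sum_\ell\ell^2|\tilde\cF(\ell)|^2<\infty$ being what makes the last step work when $b>0$. First I would pass to the Fourier coefficients of the periodic means. Since $\bar p_n$ and $\bar q_n$ are $\theta_n$-periodic, Parseval applied to \eqref{eq:38a} and \eqref{In} gives $J_n^{a,b}=-n^{a}\sum_\ell\tilde\cF(-\ell)\tilde p_n(\ell)$ and $I_n^{a,b}=n^{a}\sum_\ell\tilde\cF(-\ell)\tilde q_n(\ell)$. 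By \eqref{011005-21}, $\tilde q_n(\ell)=n^{a}\tilde\cF(\ell)\,G^n_{\om_0,\theta_n,\ell}(n,n)$, and by the first line of \eqref{eq:qdynamicsbulk-av-f}, $\tilde p_n(\ell)=(2\pi i\ell/\theta_n)\tilde q_n(\ell)$. Using that $\cF$ is real, so $\tilde\cF(-\ell)=\overline{\tilde\cF(\ell)}$, and that in \eqref{Gpsi} only the term $4\gamma\pi i\ell/\theta_n$ is imaginary, so $G^n_{\om_0,\theta_n,-\ell}(n,n)=\overline{G^n_{\om_0,\theta_n,\ell}(n,n)}$ (and $\tilde\cF(0)=0$, so the $\ell=0$ mode is harmless), a symmetrisation $\ell\mapsto-\ell$ yields
\[
J_n^{a,b}=\frac{2\pi n^{2a}}{\theta_n}\sum_{\ell\in\bbZ}\ell\,|\tilde\cF(\ell)|^2\,\mathrm{Im}\,G^n_{\om_0,\theta_n,\ell}(n,n),\qquad
I_n^{a,b}=n^{2a}\sum_{\ell\in\bbZ}|\tilde\cF(\ell)|^2\,\mathrm{Re}\,G^n_{\om_0,\theta_n,\ell}(n,n).
\]

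Next I would insert the spectral expansion \eqref{Gpsi}--\eqref{laps}. Since $\cos\!\big(\tfrac{\pi j(2n+1)}{2(n+1)}\big)=(-1)^j\cos\!\big(\tfrac{\pi j}{2(n+1)}\big)$, one has $\psi_j(n)^2=\tfrac{2-\delta_{0,j}}{n+1}\cos^2\!\big(\tfrac{\pi j}{2(n+1)}\big)$, and taking the imaginary part of each spectral term isolates the dissipative factor $-4\gamma\pi\ell/\theta_n$ in the numerator. Using $\theta_n=n^b\theta$ and $b-a=\tfrac12$ (so that $n^{2a+1}/\theta_n^2=1/\theta^2$) this gives
\[
nJ_n^{a,b}=-\frac{8\gamma\pi^2}{\theta^2}\sum_{\ell\in\bbZ}\ell^2|\tilde\cF(\ell)|^2\,S_n(\ell),\qquad
S_n(\ell):=\sum_{j=0}^n\frac{\psi_j(n)^2}{\big(\lambda_j+\om_0^2-(2\pi\ell/\theta_n)^2\big)^2+(4\gamma\pi\ell/\theta_n)^2},
\]
together with an analogous expression for $I_n^{a,b}$ with the summand of $S_n(\ell)$ replaced by $\mathrm{Re}$ of the corresponding spectral term. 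For fixed $\ell$, $S_n(\ell)$ is a Riemann sum: writing $z_j=j/(n+1)$, so that $\lambda_j=4\sin^2(\pi z_j/2)$ and $\psi_j(n)^2=\tfrac{2}{n+1}\cos^2(\pi z_j/2)$ for $j\ge1$ (the $j=0$ term contributing $O(1/n)$), and noting the summand depends on $n$ only through $\theta_n=n^b\theta$, one obtains $S_n(\ell)\to\int_0^1 2\cos^2(\pi z/2)\,[\,\cdots\,]\,\dd z$, the bracket having denominator $\big(4\sin^2(\pi z/2)+\om_0^2-(2\pi\ell/\theta)^2\big)^2+(4\gamma\pi\ell/\theta)^2$ when $b=0$, and $\big(4\sin^2(\pi z/2)+\om_0^2\big)^2$ when $b>0$ (since then $2\pi\ell/\theta_n\to0$). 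Up to the explicit prefactors these are exactly the integrals in \eqref{021205-21f1} and \eqref{021205-21f2}, and $\mathrm{Re}\,G^n_{\om_0,\theta_n,\ell}(n,n)$ produces in the same way the integrals defining ${\frak I}^{a,b}$.

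The main obstacle is justifying the interchange of $\lim_n$ with $\sum_\ell$, which requires bounds on $S_n(\ell)$ and on $|G^n_{\om_0,\theta_n,\ell}(n,n)|$ that are uniform in both $n$ and $\ell$. I would obtain them by splitting modes at $\ell_n:=\theta_n\om_0/(2\sqrt2\,\pi)$: for $|\ell|\le\ell_n$ one has $(2\pi\ell/\theta_n)^2\le\om_0^2/2$, hence $\lambda_j+\om_0^2-(2\pi\ell/\theta_n)^2\ge\om_0^2/2$ and, since $\sum_j\psi_j(n)^2=1$, $S_n(\ell)\le 4\om_0^{-4}$; for $|\ell|>\ell_n$ the dissipative term dominates and $S_n(\ell)\le(4\gamma\pi\ell/\theta_n)^{-2}\le(4\gamma\pi\ell_n/\theta_n)^{-2}=(2\gamma^2\om_0^2)^{-1}$. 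Thus $S_n(\ell)\le C$ uniformly, the $\ell$-summand of $nJ_n^{a,b}$ is dominated by $C'\ell^2|\tilde\cF(\ell)|^2$ — summable by hypothesis — and dominated convergence gives \eqref{051021-05}; the same splitting bounds $|G^n_{\om_0,\theta_n,\ell}(n,n)|$ uniformly, so the $I_n$-series is dominated by $C''|\tilde\cF(\ell)|^2$ (summable since $\cF\in L^2$), which yields \eqref{022801-22}.

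The remaining steps are routine: the Fubini/Parseval manipulations of the first paragraph, and the uniform convergence of the Riemann sums (the integrands are smooth on $[0,1]$ and, in the case $b>0$, converge uniformly as $\theta_n\to\infty$). Note that for $b=0$ even $\cF\in L^2$ suffices for $J_n^{a,b}$, since then $\ell_n$ is a fixed constant and $S_n(\ell)\le(4\gamma\pi\ell/\theta)^{-2}$ for every $\ell$; the stronger hypothesis $\sum_\ell\ell^2|\tilde\cF(\ell)|^2<\infty$ is genuinely needed only to make $J^{a,b}$ finite when $b>0$, where the $\ell$-summand of $nJ_n^{a,b}$ is of exact order $\ell^2|\tilde\cF(\ell)|^2$.
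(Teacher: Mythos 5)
Your proposal is correct and takes essentially the same route as the paper: the identities $\tilde p_n(\ell)=(2\pi i\ell/\theta_n)\tilde q_n(\ell)$ and $\tilde q_x(\ell)=n^{a}\tilde\cF(\ell)G^n_{\om_0,\theta_n,\ell}(x,n)$, followed by the spectral expansion of the Green's function and a Riemann-sum limit, are exactly the paper's computation \eqref{eq:14}--\eqref{010610-21}; your passage through $\mathrm{Im}\,G^n_{\om_0,\theta_n,\ell}(n,n)$ after symmetrising in $\ell$ is algebraically equivalent to the paper's quadratic-form-plus-parity step \eqref{eq:14a}--\eqref{eq:16}, and your uniform-in-$(n,\ell)$ bound on $S_n(\ell)$ with dominated convergence spells out an interchange of $\lim_n$ and $\sum_\ell$ that the paper leaves implicit. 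One remark on the $I_n$ part: your route gives the numerator $\cos^2(\pi z/2)\bigl[4\sin^2(\pi z/2)+\om_0^2-(2\pi\ell/\theta)^2\bigr]$, equivalently $\sin^2(\pi z)+\cos^2(\pi z/2)\bigl[\om_0^2-(2\pi\ell/\theta)^2\bigr]$, which is indeed what the computation (via $\sum_x|\nabla\tilde q_x(\ell)|^2$, whose spectral weight is $\la_j\psi_j^2(n)$) yields; the factor $4\sin^2(\pi z)$ inside the braces of ${\frak I}^{a,b}$ in the statement appears to be a misprint for $4\sin^2(\pi z/2)$, so your claim of reproducing \eqref{022801-22} should be understood modulo that correction.
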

The proof of the results is given in Section
\ref{sec:proof-theorem-refthm}.

\begin{remark}
  Calculations made in Appendix \ref{appD} show that
  \begin{align*}
     &{\cal Q}^{-1/2,0}(\ell) =\frac{\theta |\tilde     \cF(\ell)|^2}{2\pi\ell}
    {\rm Im}\left( \left\{\frac{2 }{\la(\om_0)\sqrt{1+4/\la(\om_0)}}+\frac12\right\} \left\{1+\frac{ \la(\om_0)}{2}\Big(1+\sqrt{1
     +\frac{4}{\la(\om_0)}}\Big)\right\}^{-1}\right),
\end{align*}
with
$$
\la(\om_0):=\om_0^2 -\left(\frac{2\pi\ell}{\theta}\right)^2 
    +i\left(\frac{{4} \gamma \pi \ell}{\theta}\right) .
    $$
    Furthermore,
    \begin{align*}
    {\cal Q}^{b-1/2,b}(\ell)
    = 
    \frac{2\gamma|\tilde     \cF(\ell)|^2  (4 +\om_0^2) }
    { (\om_0^4+4 \om_0^2+8)^{3/2}} .
\end{align*}
  \end{remark}

From Theorem \ref{thm-current} and the definition of $j_{-1,0}$, see
\eqref{eq:current-bound}, we immediately conclude the following.
\begin{corollary}
\label{thm021911-21}
We have
\begin{equation}
\label{031911-21}
 T_- -\lang p_0^2\rang_n=\frac{J^{a,b}}{ 2\ga n}+o\left(\frac1n\right)
\end{equation}
\end{corollary}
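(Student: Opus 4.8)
The plan is to read off Corollary \ref{thm021911-21} directly from the second identity in \eqref{eq:38a} together with the asymptotics \eqref{051021-05} of Theorem \ref{thm-current}. Recall that \eqref{eq:38a} asserts, for every admissible pair of exponents $(a,b)$,
\begin{equation*}
J_n^{a,b} = 2\gamma\bigl(T_- - \lang p_0^2\rang_n\bigr),
\end{equation*}
which is nothing but the definition $j_{-1,0} = 2\gamma(T_- - p_0^2)$ from \eqref{eq:current-bound} combined with the fact, noted after \eqref{eq:38}, that the time-averaged current $\lang j_{x,x+1}\rang$ is independent of $x$ in $\{-1,\dots,n\}$ (a consequence of taking time averages in \eqref{eq:current} and using the $\theta_n$-periodicity of $\bar{\mathcal E}_x(t)$). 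Solving for the kinetic-energy deficit at the reservoir gives the exact relation
\begin{equation*}
T_- - \lang p_0^2\rang_n = \frac{J_n^{a,b}}{2\gamma}.
\end{equation*}

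Next I would divide both sides by $n$ and invoke Theorem \ref{thm-current}: under the hypotheses $b-a=1/2$, $a\le 0$, $b\ge 0$ and $\sum_\ell \ell^2|\tilde{\mathcal F}(\ell)|^2<+\infty$, one has $nJ_n^{a,b} = J^{a,b} + o(1)$ as $n\to+\infty$, with $J^{a,b}$ the explicit constant of \eqref{051021-05}. Substituting this into the displayed exact identity yields
\begin{equation*}
T_- - \lang p_0^2\rang_n = \frac{J_n^{a,b}}{2\gamma} = \frac{1}{2\gamma}\cdot\frac{nJ_n^{a,b}}{n} = \frac{J^{a,b} + o(1)}{2\gamma n} = \frac{J^{a,b}}{2\gamma n} + o\!\left(\frac1n\right),
\end{equation*}
which is precisely \eqref{031911-21}. (I would double-check the constant: \eqref{031911-21} is written as $J^{a,b}/(2\gamma n)$, matching $J_n^{a,b}/(2\gamma)$ with $nJ_n^{a,b}\to J^{a,b}$; there is no factor-of-two discrepancy once one uses the $2\gamma$ in \eqref{eq:38a} rather than a spurious $4\gamma$.)

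There is essentially no obstacle here: the statement is a one-line corollary, and the only inputs are (i) the current-conservation identity \eqref{eq:38a}, which is already recorded, and (ii) Theorem \ref{thm-current}, which the excerpt tells us will be proved in Section \ref{sec:proof-theorem-refthm}. The only point worth a word of care is the hypothesis under which it holds: the conclusion \eqref{031911-21} should be understood to carry over the standing assumptions of Theorem \ref{thm-current}, namely $b-a=1/2$, $a\le 0$, $b\ge 0$ and the summability $\sum_\ell \ell^2|\tilde{\mathcal F}(\ell)|^2<+\infty$; the $o(1/n)$ error term is inherited verbatim from the $o(1)$ in \eqref{eq:76}. Since $J^{a,b}<0$, the corollary in particular re-proves, with a sharp rate, the inequality $\lang p_0^2\rang_n \ge T_-$ from Theorem \ref{periodic} for all $n$ large enough, a consistency check worth noting in passing.
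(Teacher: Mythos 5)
Your proposal is correct and is exactly the paper's argument: the exact identity $J_n^{a,b}=2\gamma\bigl(T_--\lang p_0^2\rang_n\bigr)$ from \eqref{eq:38a} (i.e.\ the definition of $j_{-1,0}$ in \eqref{eq:current-bound} plus constancy of the time-averaged current), combined with $nJ_n^{a,b}\to J^{a,b}$ from Theorem \ref{thm-current}, which the paper records as an immediate consequence. Your remarks on the standing hypotheses and the sign consistency with Theorem \ref{periodic} are accurate but not needed beyond that.
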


%\textcolor{red}{\bf DOTAD}

\subsection{Asymptotic profile of the periodic averages of the means
  of the energy function}
\label{sec:non-stat-diff}

 The following  result holds.
\begin{theorem}\label{th1}
Under the assumptions of Theorem \ref{thm-current}
  we have {
  \begin{equation}
    \label{eq:3}
    \lim_{n\to\infty} \frac 1n \sum_x \varphi\left(\frac xn \right) \lang p^2_x\rang
    {=  \lim_{n\to\infty} \frac 1n \sum_x \varphi\left(\frac xn \right)
    \lang {\cal E}_x\rang}
    = \int_0^1 \varphi (u) T(u) \dd u,
  \end{equation}
 } with 
  \begin{equation}
    \label{eq:5}
      T(u) = T_--\frac{4\gamma Ju}{D}, \quad u\in[0,1],
  \end{equation}
for any $\varphi\in C[0,1]$.
  Here $J$ is given by \eqref{051021-05}
and    $D$  is defined by, cf \eqref{GR},
  \begin{equation}
    D = 1 - \omega_0^2 \Big(G_{\om_0}(0)+ G_{\om_0}(1)\Big).
    \label{eq:13}
  \end{equation}
\end{theorem}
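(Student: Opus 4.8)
The plan is to reduce the statement to two inputs: the fluctuation-dissipation identity \eqref{eq:75} together with the current asymptotics \eqref{eq:76} from Theorem \ref{thm-current}, and a ``local equilibrium'' statement identifying the time-averaged value $\lang \mathfrak F_x \rang$ with $D\, T(x/n)$ up to a negligible error. First I would recall that, by \eqref{eq:75}, $\lang j_{x,x+1}\rang = -\frac{1}{4\gamma}\nabla \lang \mathfrak F_x\rang$, and that $\lang j_{x,x+1}\rang = J_n^{a,b}$ is independent of $x$; hence $\nabla \lang \mathfrak F_x\rang = -4\gamma J_n^{a,b}$ is constant in $x$, so $\lang \mathfrak F_x\rang$ is an affine function of $x$. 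Its value at $x=0$ must be pinned by a boundary computation: the Langevin reservoir forces $\lang p_0^2\rang \to T_-$ (Theorem \ref{periodic} gives $\lang p_0^2\rang \ge T_-$, and Corollary \ref{thm021911-21} gives $T_- - \lang p_0^2\rang = O(1/n)$), so the ``seed'' of the affine profile at the left endpoint is $T_-$ up to $o(1)$. Combining with $n J_n^{a,b}\to J^{a,b}$ from \eqref{eq:76}, the discrete affine profile $\frac1D\lang\mathfrak F_{[nu]}\rang$ converges pointwise to $T_- - \frac{4\gamma J u}{D} = T(u)$, which after summation against $\varphi$ and a Riemann-sum argument gives the right-hand side of \eqref{eq:3}.

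The two equalities in \eqref{eq:3} must then be supplied. The equality $\lim \frac1n\sum_x \varphi(x/n)\lang p_x^2\rang = \lim \frac1n\sum_x \varphi(x/n)\lang \mathcal E_x\rang$ follows once one shows $\frac1n \sum_x |\lang \mathcal E_x\rang - \lang p_x^2\rang| \to 0$, i.e. that the time-averaged potential energy $\frac12\lang (q_x-q_{x-1})^2\rang + \frac{\om_0^2}{2}\lang q_x^2\rang$ is, on average, asymptotically equal to $\frac12\lang p_x^2\rang$ — an equipartition-type statement. The equality of either of these with $\frac1n\sum_x\varphi(x/n)\,\frac1D\lang\mathfrak F_x\rang$ is exactly the content of local equilibrium, Proposition \ref{prop-loceq}: one needs that in equilibrium at temperature $T$ one has $\E_T[\mathfrak F_x] = D T$ (a direct Gaussian computation using \eqref{eq:13} and the equilibrium covariances $\E_T[q_x q_{x'}] = T\,(\om_0^2-\Delta)^{-1}(x-x')$), and that the actual time-averaged second moments are close, in a suitably averaged sense, to the local-equilibrium Gaussian with a slowly varying temperature. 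This is where the bulk of the work lies: one derives from \eqref{eq:qdynamicsbulk-av} (and its analog for the covariance matrix in \autoref{sec:cov}) a closed system for the time-averaged covariances, shows the position covariances are $G^n_{\om_0}(x,x')$ plus an error proportional to the current (hence $O(1/n)$), invokes \eqref{eq:77} and the boundary control of the Green's function (Lemma \ref{lem-ful}) to pass to the limit, and crucially establishes the a priori bound that $\frac1n\sum_x\lang\mathcal E_x\rang$ stays bounded uniformly in $n$ (cf.\ \eqref{021911-21a}) so that no energy escapes to infinity and the Riemann sums are controlled.

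The main obstacle, as the authors themselves flag, is the a priori upper bound on the time-averaged total energy: without it, the covariance identities only give information modulo terms that could a priori grow with $n$, and the passage from the pointwise Green's-function limit \eqref{eq:77} to the statement $\lang \mathfrak F_{[nu]}\rang \sim D\,T(u)$ would fail. I would obtain this bound by combining the covariance equations with the explicit control of $G^n_{\om_0}$ near the boundaries — in particular using that the forcing enters only at the site $x=n$ with amplitude $n^a$, $a\le 0$, and period $\theta_n = n^b\theta$, and that the condition $b-a=1/2$ is precisely what makes the energy injected per period $O(1/n)$ per site — together with the dissipation coming from the Langevin term $2\gamma S_-$ and the flip $\gamma S_{\text{flip}}$. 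Once that bound is in hand, the remaining steps (equipartition, local equilibrium, Riemann-sum convergence against $\varphi \in C[0,1]$) are comparatively routine, and the profile \eqref{eq:5} drops out of the affine structure of $\lang\mathfrak F_x\rang$ with slope fixed by $J$ and intercept fixed by $T_-$.
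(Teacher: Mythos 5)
Your plan reproduces most of the paper's architecture: the fluctuation--dissipation identity \eqref{eq:33}/\eqref{072606-21a} giving the (essentially) affine profile of $\lang \mathfrak F_x\rang$ with slope $-4\gamma J_n$, the current asymptotics $nJ_n\to J$ of Theorem \ref{thm-current}, the a priori energy bound \eqref{021911-21a}, the bulk local equilibrium of Proposition \ref{prop-loceq}, and an equipartition step (the paper's Lemma \ref{cor012912-21}) to pass from $\lang p_x^2\rang$ to $\lang \mathcal E_x\rang$. The genuine gap is in how you pin the constant of the affine profile. The ``seed'' is $\lang \mathfrak F_0\rang=\lang p_0^2\rang-\omega_0^2\lang q_0^2\rang$ (since $q_{-1}=q_0$), and the facts you cite ($\lang p_0^2\rang\ge T_-$ and $T_--\lang p_0^2\rang=O(1/n)$) do not determine it: you would also need $\omega_0^2\lang q_0^2\rang\to(1-D)T_-$, i.e.\ a \emph{pointwise} local-equilibrium statement at the boundary site, where the relevant Green's function value is the Neumann-reflected one, $G^n_{\omega_0}(0,0)\to G_{\omega_0}(0)+G_{\omega_0}(1)$; it is only through this coincidence that $\lang \mathfrak F_0\rang\to DT_-$ at all. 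Proposition \ref{prop-loceq} cannot be invoked for this: it is an averaged statement for weights $\varphi''$ with ${\rm supp}\,\varphi\subset(0,1]$, and its proof confines the main analysis to $\delta n\le x\le(1-\delta)n$. As written, the intercept of your limiting profile is therefore undetermined, and the paper itself never computes $\lim_n\lang \mathfrak F_0\rang$ (in \eqref{eq:29} it only uses the one-sided bound $\lang \mathfrak F_0\rang\le\lang p_0^2\rang$).

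The paper circumvents exactly this point: it proves the gradient bound $\sum_x(\lang p_x^2\rang-\lang p_{x+1}^2\rang)^2\le C/(n+1)$ together with $\sup_x\lang p_x^2\rang\le C$ (Proposition \ref{prop031012-21a}), which makes the interpolated kinetic-energy profile precompact in $C[0,1]$; the boundary information $\lang p_0^2\rang\to T_-$ is then transferred to the limit as $e_{\rm thm}(0)=T_-$, and the slope is identified weakly via \eqref{eq:34}, with test functions supported in $(0,1]$ and $\varphi'(1)=0$, so that only $nJ_n\to J$ and the bulk local equilibrium enter and no boundary moment of the positions is ever needed. To keep your pointwise route you would have to carry out the boundary computation of $\lang q_0^2\rang$ (which again requires the gradient bound plus the kernel estimates of Lemma \ref{lm012812-21}, i.e.\ the same regularity input), and also extend Proposition \ref{prop-loceq} to general continuous weights for your Riemann-sum step. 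A smaller overreach: your claim $\frac 1n\sum_x|\lang\mathcal E_x\rang-\lang p_x^2\rang|\to0$ with absolute values is stronger than what is needed or proved; the paper's virial argument gives only the tested (weak) version for $C^1$ weights, which suffices after a density argument using the uniform energy bound.
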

We present the proof of the theorem in Section \ref{sec-converg}.

A simple calculation, using  \eqref{GR}, yields an explicit
formula for the coefficient $D$, cf \cite[(4.18)]{bll},
\begin{equation}
D =\frac{2}{2+\om_0^2+\om_0\sqrt{\om_0^2+4}}.
  \label{D}
  \end{equation}
Therefore,
$$
\lim_{\om_0\to+\infty}D(\om_0)=0, \qquad \lim_{\om_0\to 0+}D(\om_0)=1 .
$$
{
  In the case $a= -1/2, b=0$ and $\theta$ constant, we can improve the statement
  of the theorem \eqref{th1} as we do not neet the time average over the period,
  more precisely we prove in \autoref{sec:vanish-time-vari} that
  \begin{equation}
    \label{eq:84}
    \lim_{n\to\infty} \frac 1n \sum_x \varphi\left(\frac xn \right) \int_0^\theta
    \left( \bar{p^2_x}(t) - T\left(\frac xn\right)\right)^2 dt
    = 0.
  \end{equation}
  }

% Consequently  the stationary temperature profile (in the stationary
% periodic state) is given by
% \begin{equation}
%   \label{eq:3}
%   T_{ss} (y) = T_- - \frac{4\gamma J}{D} y, \qquad y\in [0,1].
% \end{equation}

% \section{Unpinned case $\omega_0 = 0$}
% \label{sec:unpinn-case-omeg}

% In the unpinned case we have two locally conserved quantities:
% the volume stretch $r_x = q_x - q_{x-1}$ and
% the energy $\E_x = \frac 12 \left(p_x^2 + r_x^2\right)$.
% In the stationary periodic state we expect the same value for the
% time averaged
% energy current $J$ (as in \eqref{051021-05} and \eqref{021205-21f1}
% setting $\omega_0 = 0$). 

% The macroscopic equations should be given by
% \begin{equation}
%   \label{eq:4}
%   \begin{split}
%     \partial_t r &= \frac 1{2\gamma} \partial_{y}^2 r, \qquad
%     r(t,0) = 0 = r(t,1)\\
%     \partial_t T &= \frac 1{4\gamma} \partial_y^2 T + \frac 1{2\gamma}
%     \left(\partial_y r\right)^2, \qquad
%     T(t,0) = T_-, \quad \partial_y T(t,1) = 4\gamma J,
%   \end{split}
% \end{equation}
% where $T(t,y) = \E(t,y) - \frac 12 r^2(t,y)$. 

% \bigskip

 %\textcolor{red}{\bf DOTAD}

\section{The first moments of positions and momenta}

\subsection{Proof of Theorem \ref{thm-current}}
\label{sec:proof-theorem-refthm}

\bigskip

We only show \eqref{051021-05}. The proof of \eqref{022801-22} follows
from analogous calculations. Using \eqref{eq:38a}, the Plancherel identity and then
the first equation of \eqref{eq:qdynamicsbulk-av-f} we get
\begin{equation}
  \label{eq:14}
  \begin{split}
      J_n^{a,b} &= 
 -  n^{a} \sum_\ell \tilde \cF(\ell) \tilde p_n(\ell)^*
   \\
& = \frac{ 2\pi i n^{a}}{\theta_n} \sum_\ell \tilde
    \cF(\ell) \ell \tilde q_n(\ell)^* =   \frac{ 2\pi i n^{a}}{\theta_n}\sum_\ell \sum_x \delta_{x,n}
    \tilde \cF(\ell) \ell \tilde q_x(\ell)^*.
  \end{split}
\end{equation}
Thanks to
 \eqref{011005-21}  we can  further write  
\begin{equation}
  \label{eq:14a}
  \begin{split}
    &  J_n^{a,b}  = \frac{2\pi i}{\theta_n} \sum_\ell \sum_x \ell \tilde q_x(\ell)^* L_{\om_0,\theta_n,\ell}^n\tilde q_x(\ell)\\
    &= \frac{2\pi i}{\theta_n}\sum_\ell \ell\sum_x
    \left\{|\nabla \tilde q_x(\ell)|^2 +\left[ \om_0^2 -\left(\frac{2\pi \ell}{\theta_n}\right)^2 \right]
      |\tilde q_x(\ell)|^2 \right\}
    - 2 \gamma\left(\frac{2\pi }{\theta_n}\right)^2 \sum_\ell \ell^2 \sum_x |\tilde q_x(\ell)|^2.
  \end{split}
\end{equation}
Using the parity of $|\nabla \tilde q_x(\ell)|^2  +
\left[\om_0^2 -\left(\frac{2\pi \ell}{\theta_n}\right)^2 \right]
|\tilde q_x(\ell)|^2$
we get
\begin{equation}
  \label{eq:16}
    J_n^{a,b} = - 2\ga\left(\frac{2\pi }{\theta_n}\right)^2 \sum_\ell
    \ell^2 \sum_x   |\tilde q_x(\ell)|^2.
\end{equation}

%\textcolor{red}{\bf DOTAD}

% Denote by   $ G^n_{\om_0,\ell} (x,y)$ the Green's functions
% corresponding to  
% % $$
% %  0 =  \Delta_x  \tilde q_x(\ell)
% %   +\left[\left(\frac{2\pi \ell}{\theta_n}\right)^2-\om_0^2  -i\frac{2\pi
% %       \ell \gamma_x}{\theta_n}  \right]  \tilde q_x(\ell)
% %   +  \; \tilde     \cF(\ell) \delta_{x,n}
% %     , \; \quad x\in \{0, \dots, n\},
% % $$
%  $   -\Delta   -\left[\left(\frac{2\pi \ell}{\theta_n}\right)^2-\om_0^2   -i\frac{4\pi
%       \ell \gamma }{\theta_n}  \right],$
% i.e.
% \begin{equation} 
% \label{011205-21c}
% \begin{aligned}
%     \; \  \delta_{x,y}&=  \left\{-\Delta   -\left[\left(\frac{2\pi \ell}{\theta_n}\right)^2-\om_0^2   -i\frac{4\pi
%       \ell  \gamma}{\theta_n}  \right]\right\}   G^n_{\om_0,\ell} (x,y).
%   \end{aligned} \end{equation}
% This function is given explicitly by
% \begin{equation}
% \label{green}
% G^n_{\om_0,\ell} (x,y)=\sum_{j=0}^n\frac{\psi_j(x) \psi_j(y)}{\la_j+\om_0^2 -\left({2\pi \ell\theta_n^{-1}}\right)^2-{4\gamma \pi i \ell}{\theta_n^{-1}} }
% \end{equation}
% where  $\la_j$ and $\psi_j$ are the respective  eigenvalues
% and eigenfunctions for the Neumann laplacian $-\Delta$. 
% They are given by
% \begin{align}
% \label{laps}
% &\la_j=4\sin^2\left(\frac{\pi j}{2(n+1)}\right),\qquad
% \psi_0(x)=\left(\frac{1}{n+1}\right)^{1/2},\quad x,j=0,\ldots,n,\notag\\
% &
% \psi_j(x)=\left(\frac{2}{n+1}\right)^{1/2}\cos\left(\frac{\pi
%     j(2x+1)}{2(n+1)}\right),\quad j=1,\ldots,n.
% \end{align}
From \eqref{011005-21} we have
\begin{equation} 
\label{021205-21}
  \; n^a \tilde     \cF(\ell) \delta_{x,n}
  =  L^n_{\om_0,\theta_n,\ell}\tilde
    q_x(\ell),\quad x=0,\ldots,n .
  \end{equation}
  
Hence, by  \eqref{011205-21c},
\begin{equation} 
\label{021205-21a}
    \; n^a\tilde     \cF(\ell) G^n_{\om_0, \theta_n,\ell} (x,n)
     =  \tilde
    q_x(\ell),\quad x=0,\ldots,n .
  \end{equation}
  
It follows, by \eqref{Gpsi},
\begin{equation}
\label{012202-22}
  \begin{split}
  &  \sum_{x=0}^n |\tilde q_x (\ell)|^2 =
     {n}^{2a} |\tilde \cF(\ell)|^2 \sum_{x=0}^n [G^n_{\om_0,\theta_n,\ell} (x,n)]^2
= \sum_{j=0}^n\frac{ {n}^{2a} |\tilde \cF(\ell)|^2\psi_j^2(n)   }{[\la_j+\om_0^2 -\left({2\pi \ell}{\theta_n^{-1}}\right)^2]^2+\left({4\gamma \pi  \ell}{\theta_n^{-1}} \right)^2 }
     .
  \end{split}
\end{equation}
A straightforward calculation, using formula \eqref{Gpsi}, yields
\begin{align*}
G^n_{\om_0,\theta_n,\ell}(0,n)&=
\frac{2}{n+1} \sum_{l=1}^{n} (-1)^k\cos^2\left(\frac{\pi
    k}{2(n+1)}\right) \left\{4\sin^2\left(\frac{\pi
  k}{2(n+1)}\right)+\om_0^2 -\left(\frac{2\pi
  \ell}{\theta_n}\right)^2+\frac{4\gamma \pi i
  \ell}{\theta_n}\right\}^{-1}\\
&
+ O\left(\frac{1}{n+1}\right) =o(1).
\end{align*}
Therefore, we conclude that
\begin{equation}
\label{010610-21}
\begin{split}
&\sum_x  |\tilde q_x(\ell)|^2=
2|\tilde     \cF(\ell)|^2 {n}^{2a} \int_0^1 \cos^2\left(\frac{\pi z
    }{2}\right) \\
&
\times
\left\{\left[4\sin^2\left(\frac{\pi z}{2}\right)+\om_0^2 -\left(\frac{2\pi \ell}{\theta_n}\right)^2\right]^2 +\left(\frac{4\gamma \pi  \ell}{\theta_n} \right)^2 
  \right\}^{-1}\dd z+o\left({n}^{2a} \right).
\end{split}
\end{equation}
From \eqref{eq:16} and \eqref{010610-21}   we get
that for $a,b$ satisfying \eqref{a-b}   %\marginpar{\red{Corr.}}
\begin{equation} 
\label{021205-21f}
\begin{aligned}
&  J_n^{a,b} 
=-{\frac{4\gamma}{n}\left(\frac{2\pi }{\theta}\right)^2\sum_{\ell}\ell^2}|\tilde     \cF(\ell)|^2 \int_0^1 \cos^2\left(\frac{\pi z
    }{2}\right) \\
&
\times\left\{\left[4\sin^2\left(\frac{\pi z}{2}\right)+\om_0^2 -\left(\frac{2\pi \ell}{\theta_n}\right)^2\right]^2 +\left(\frac{{4}\gamma \pi  \ell}{\theta_n} \right)^2 
  \right\}^{-1}\dd z+o\left(\frac{1}{n}\right)\notag
\end{aligned} \end{equation}
%\marginpar{\red{Corr.}}
and Theorem \ref{thm-current} follows.\qed

\subsection{$L^2$ norms of the position and momentum averages}

Denote
\begin{equation}
\label{l2}
\begin{split}
&{\lang \bar q_x^2\rang}:=\frac{1}{\theta_n}\int_0^{\theta_n}  \bar q_x^2(s) \dd
s=\sum_{\ell}  |\tilde q_x(\ell)|^2,\\
&
{\lang\bar p_x^2\rang} :=\frac{1}{\theta_n}\int_0^{\theta_n}  \bar p_x^2(s) \dd
s=\sum_{\ell}  |\tilde p_x(\ell)|^2 .
\end{split}
\end{equation}
Using \eqref{eq:qdynamicsbulk-av-f} we get
\begin{equation}
\label{011012-21}
\sum_x  |\tilde
p_x(\ell)|^2=\left(\frac{2\pi\ell}{\theta_n}\right)^2\sum_x  |\tilde q_x(\ell)|^2.
\end{equation}
By virtue of \eqref{010610-21} we get
\begin{equation}
 \label{011012-21ss}
\begin{split}
 &\sum_x   |\tilde p_x(\ell)|^2=\frac{2 }{n}
\left(\frac{2\pi\ell |\tilde \cF(\ell)|}{\theta}\right)^2
 \int_0^1 \cos^2\left(\frac{\pi z
    }{2}\right) \\
&
\times
\left\{\left[4\sin^2\left(\frac{\pi z}{2}\right)+\om_0^2 -\left(\frac{2\pi \ell}{\theta_n}\right)^2\right]^2 +\left(\frac{4\gamma \pi  \ell}{\theta_n} \right)^2 
  \right\}^{-1}\dd z+o\left(\frac1n\right).
 \end{split}\end{equation}
We have shown therefore the following.
\begin{proposition}
\label{prop011012-21}
Under the assumptions of Theorem \ref{thm-current} we have
\begin{equation}
\label{021012-21}
\sum_{x=0}^n\lang \bar q_x^2\rang\approx
n^{2a}\quad\mbox{and}\quad 
\sum_{x=0}^n\lang \bar p_x^2\rang\approx \frac{1}{n},\quad n\ge1.
\end{equation}
\end{proposition}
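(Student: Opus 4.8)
The plan is to obtain both estimates by summing the per-harmonic identity \eqref{012202-22} over $\ell\in\bbZ$ and to justify the exchange of the limit $n\to+\infty$ with this summation by a dominated convergence argument on $\bbZ$. First I would record, by the Parseval identities \eqref{l2}, the Green's-function formula \eqref{012202-22} (together with \eqref{Gpsi}, \eqref{laps}), and the relation \eqref{011012-21}, that
\[
\sum_{x=0}^n\lang\bar q_x^2\rang = n^{2a}\sum_{\ell\ne 0}|\tilde\cF(\ell)|^2\,h_n(\ell),
\qquad
\sum_{x=0}^n\lang\bar p_x^2\rang = \frac1n\Big(\frac{2\pi}{\theta}\Big)^2\sum_{\ell\ne 0}\ell^2\,|\tilde\cF(\ell)|^2\,h_n(\ell),
\]
where
\[
h_n(\ell):=\sum_{j=0}^n\frac{\psi_j(n)^2}{\big[\la_j+\om_0^2-(2\pi\ell/\theta_n)^2\big]^2+(4\ga\pi\ell/\theta_n)^2}.
\]
The term $\ell=0$ is absent since $\tilde\cF(0)=0$ by \eqref{eq:2}, so $\tilde q_x(0)=0$, cf. \eqref{eq:6aa}; in the momentum identity I used $\theta_n=n^b\theta$ and the hypothesis $b-a=\frac12$ from \eqref{a-b}, which turns $n^{2a}(2\pi\ell/\theta_n)^2$ into $(2\pi/\theta)^2\ell^2 n^{-1}$ and accounts for the two different powers of $n$.

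The key step is a lower bound on the denominator of $h_n(\ell)$ that is uniform in all parameters: there is $c_0>0$ with
\[
\big[\la_j+\om_0^2-(2\pi\ell/\theta_n)^2\big]^2+(4\ga\pi\ell/\theta_n)^2\ \ge\ c_0
\qquad\text{for all }n\ge1,\ \ell\ne 0,\ j=0,\dots,n.
\]
Indeed, if $(2\pi\ell/\theta_n)^2\le\om_0^2/2$ then, since $\la_j\ge0$, the first bracket is at least $\om_0^2/2$; if $(2\pi\ell/\theta_n)^2>\om_0^2/2$ then $|\ell|/\theta_n$ is bounded below and so is the damping term $(4\ga\pi\ell/\theta_n)^2$. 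Since $\sum_{j=0}^n\psi_j(n)^2=1$ (the $\{\psi_j\}$ being orthonormal and complete, so $\sum_j\psi_j(x)\psi_j(y)=\delta_{x,y}$), this gives $h_n(\ell)\le c_0^{-1}$ uniformly in $n$ and $\ell\ne 0$. On the other hand, for each fixed $\ell$ the sum $h_n(\ell)$ is, exactly as in the passage to \eqref{010610-21} (write $z_j=j/(n+1)$, so $\la_j=4\sin^2(\pi z_j/2)$ and $\psi_j(n)^2=\frac{2-\delta_{0,j}}{n+1}\cos^2(\pi z_j/2)$), a Riemann sum converging as $n\to+\infty$ to a finite and strictly positive limit $h(\ell)$ — the integral displayed in \eqref{010610-21}, which for $b>0$ reduces to $2\int_0^1\cos^2(\frac{\pi z}{2})\big[4\sin^2(\frac{\pi z}{2})+\om_0^2\big]^{-2}\dd z$.

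Now I would apply dominated convergence. Since $\cF$ is smooth, $\sum_\ell|\tilde\cF(\ell)|^2<\infty$, and $\sum_\ell\ell^2|\tilde\cF(\ell)|^2<\infty$ is part of the hypotheses of Theorem \ref{thm-current}; hence the summands in the two identities above are dominated, uniformly in $n$, by $c_0^{-1}|\tilde\cF(\ell)|^2$ and $c_0^{-1}(2\pi/\theta)^2\ell^2|\tilde\cF(\ell)|^2$ respectively, while they converge pointwise in $\ell$. Therefore $n^{-2a}\sum_x\lang\bar q_x^2\rang\to\sum_{\ell\ne 0}|\tilde\cF(\ell)|^2 h(\ell)$ and $n\sum_x\lang\bar p_x^2\rang\to(2\pi/\theta)^2\sum_{\ell\ne 0}\ell^2|\tilde\cF(\ell)|^2 h(\ell)$, and both limits are finite (by the domination) and strictly positive (because $h(\ell)>0$ and, by the second condition in \eqref{eq:2} together with $\tilde\cF(0)=0$, there is an $\ell\ne 0$ with $\tilde\cF(\ell)\ne 0$). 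This yields $\sum_x\lang\bar q_x^2\rang\approx n^{2a}$ and $\sum_x\lang\bar p_x^2\rang\approx n^{-1}$ for all large $n$; the finitely many remaining $n$ are harmless because for each fixed $n$ the periodic state is non-degenerate, so both quantities are already strictly positive there and the constants in the definition of $\approx$ can be enlarged.

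The main obstacle is the uniform lower bound $c_0$ above, i.e. that it holds \emph{with a constant independent of $n$}: it is the Langevin damping term $4\ga\pi\ell/\theta_n$ in the denominator of \eqref{Gpsi} that makes this true, since without it $h_n(\ell)$ could be arbitrarily large at parameter values where a forcing harmonic $2\pi\ell/\theta_n$ resonates with a phonon frequency $\sqrt{\la_j+\om_0^2}$, and the two-sided bound would collapse. Although the damping has size only $O(|\ell|/n^b)$, this is exactly enough, because such a resonance requires $(2\pi\ell/\theta_n)^2\ge\om_0^2$, i.e. $|\ell|\gtrsim n^b$, a regime in which the damping term is itself bounded below. (If one wanted a rate of convergence rather than mere convergence one would moreover check that the Riemann-sum error $h_n(\ell)-h(\ell)$ is $O(1/n)$ uniformly in $\ell$, which follows again from $c_0$ by bounding the $z$-derivative of the integrand; this is not needed for the statement as it stands.)
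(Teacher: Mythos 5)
Your proof is correct and follows essentially the same route as the paper: the Parseval identities \eqref{l2}, the Green's-function representation \eqref{012202-22} combined with the relation \eqref{011012-21}, and the Riemann-sum asymptotics behind \eqref{010610-21}--\eqref{011012-21ss}, summed over the harmonics $\ell$. Your explicit uniform lower bound $c_0$ on the denominators and the resulting dominated-convergence step simply make rigorous, uniformly in $\ell$, the interchange of the $\ell$-summation with the limit $n\to\infty$ that the paper carries out implicitly.
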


\section{The second moments for the momentum and
  position variables}
\label{sec:fluct-diss-relat}

\subsection{Fluctuation-dissipation relations}

Define
\begin{equation}
  \label{eq:32}
  \begin{split}
  \frak f_x &:= \frac 1{4\gamma} \left(q_{x+1} - q_x\right)\left(p_x + p_{x+1}\right)
  + \frac 14 \left(q_{x+1} - q_x\right)^2 ,\qquad x=0, \dots, n-1,\\
\mathfrak F_x &= p_x^2 + \left(q_{x+1} - q_x\right) \left(q_{x} - q_{x-1}\right)
    -\omega_0^2 q_x^2,\qquad x=0,\dots, n,
\end{split}
\end{equation}
with the convention that $q_{-1} = q_{0}$, $q_n=q_{n+1}$.
%\marginpar{\red{Corr.}}
Then
\begin{equation}
  \label{eq:33}
    \mathcal G_t \frak f_x = \frac 1{4\gamma} \nabla
    \mathfrak F_x +j_{x,x+1}  +\frac {\delta_{x,n-1}} {4\gamma} n^{a}\cF(t/\theta_n) \left(q_{n} - q_{n-1}\right),\quad x=0, \dots, n-1.
\end{equation}

After performing the expectation and time averaging  we get
\begin{align}
  \label{072606-21a}
  &\lang \frak F_x\rang=\lang \frak F_0\rang+
  \sum_{y=0}^{x-1}\lang \nabla \frak F_y\rang =\lang \frak F_0\rang -4\gamma  J_n{ x}
\\
&
+\frac{\delta_{x,n}
  n^{a}}{\theta_n}\int_0^{\theta_{n}}\cF(t/\theta_n) \left(
  \bar q_{n-1}(t)-\bar q_{n}(t) \right)\dd t, \qquad x=1,\ldots,n.\notag
\end{align}
\begin{remark}
{\em Note that the expectation of $\frak F_x$ with respect to  the Gibbs
Gaussian measure on the lattice $\bbZ$, with the Hamiltonian
$\sum_{x\in\bbZ} {\cal E}_x$ and  the inverse temperature $T^{-1}$, is given by
\begin{align}
\label{010803-22}
 T\Big[ 1 - G_{\om_0}(1) - G_{\om_0}(0) + 2 G_{\om_0}(1) - \omega_0^2
  G_{\om_0}(0) \Big].
\end{align}
Since $G_{\om_0}$ is the Green's function for $\om_0^2-\Delta$, where
$\Delta$ is the free lattice laplacian, we have 
\begin{equation}
\label{D1}
 1 - G_{\om_0}(2) - G_{\om_0}(0) + 2 G_{\om_0}(1) - \omega_0^2
  G_{\om_0}(0)= 1- \omega_0^2\Big(G_{\om_0}(0)+G_{\om_0}(1)\Big)= D
\end{equation}
and the expression in \eqref{010803-22} equals $DT$.}
\end{remark}

\section{The covariance matrix of the periodic state}
\label{sec:cov}

\subsection{Dynamics of fluctuations}

Denote
\begin{equation}
\label{pqpp}
q'_x(t):=q_x(t)-\bar q_x(t)\quad\mbox{and}\quad  p'_x(t):=p_x(t)-\bar p_x(t)
\end{equation}
for $x=0,\ldots,n$.
From \eqref{eq:flip}-\eqref{eq:pbdf} and
\eqref{eq:qdynamicsbulk-av} 
we get %\marginpar{\red{Corr.}}
\begin{equation} 
\label{eq:fflip-1}
\begin{aligned}
  \dot { q}'_x(t) &= p'_x(t) ,
  \qquad \qquad \qquad  \qquad \qquad x\in \{0, \dots, n\},\\
  \dd { p}'_x(t) &=  \left(\Delta q'_x-\om_0^2 
    q_x'\right) \dd t-   2\ga  {p_x'(t)} \dd t-   2 p_x(t-) \dd\tilde N_x(\gamma t),
  \quad x\in \{1, \dots, n\},
  \end{aligned} \end{equation}
and at the left boundary
\begin{align}
     \dd  p'_0(t) =   \left(\Delta q'_0-\om_0^2 
    q_0'\right)  \dd   t
  -  2  \gamma  p'_0(t) \dd t
                    +\sqrt{4 \gamma T_-} \dd \tilde w_-(t).
                    \vphantom{\Big(}   \label{eq:fpbdf-2}
\end{align}
Here $
\tilde N_x(t):=N_x(t)-t.
$
Let 
$$
{\bf X}(t)=\left(\begin{array}{c}
{\bf q}(t)\\
{\bf p}(t)
\end{array}\right),\quad \bar {\bf X}(t)=\left(\begin{array}{c}
\bar{\bf q}(t)\\
\bar{\bf p}(t)
\end{array}\right),
$$
and $
{\bf X}'(t)=  {\bf X}(t)-\bar {\bf X}(t).
$ 
Furthermore, we define
\begin{equation}
  \begin{split}
\Sigma ({\bf p}) =\left[\begin{array}{cc}
0_{n+1}&0_{n+1}\\
0_{n+1}&D({\bf p})
                        \end{array}\right],
                      \,\mbox{with }
D({\bf p}) =
       \begin{bmatrix}
\sqrt{4 \gamma T_-} & 0 & 0 &\dots&0\\
                     0& -2  p_1 &  0 &\dots&0\\
                     0 & 0 & -2  p_2 &\dots&0\\
                     \vdots & \vdots & \vdots & \vdots&\vdots\\
 0& 0 & 0 & \dots &-2  p_n
                        \end{bmatrix}.
 \end{split}
\label{eq:22}
\end{equation}
The symbol $0_{n}$ denotes the
null $n\times n$ matrix.
The solution of  \eqref{eq:fflip-1}--\eqref{eq:fpbdf-2} satisfies
\begin{equation}
  \label{Xts0}
{\bf X}'(t)=e^{-At}{\bf X}'(0)+\int_{0}^t
e^{-A(t-s)}\Sigma \Big({\bf p}(s-)\Big)\dd M(s),\quad t\ge0.
\end{equation}
Here $A$ is defined by \eqref{A} and
$\left(M(t)\right)_{t\ge0}$ is $2(n+1)$-dimensional vector martingale %\marginpar{\red{Corr.}}
{
$$
\dd M(s) =\left(\begin{array}{c}
                  0\\
                  \vdots\\
                  0\\
                  {\dd \tilde w(s)}\\
                  \dd \tilde N_1(\gamma s)\\
                  \vdots\\
                  \dd \tilde N_n(\gamma s)
\end{array}\right).
$$
}
% Suppose that ${\bf X}(0)$ is a random vector that is independent of the noise
% and distributed according to $\mu_0^P$. Then $\Big({\bf X}(t+\theta)\Big)_{t\ge0}$ has the same law
% as that of $\Big({\bf X}(t)\Big)_{t\ge0}$. 

% The matrix $S( {\bf T})$ of \eqref{S} is the covariance matrix of the unique
% periodic invariant measure corresponding to the O-U dynamics \eqref{eq:sc}
% with no forcing. 

Suppose that ${\bf X}$ is a random vector that is independent of the noise
 and distributed according to $\mu_0^P$.  Denote by $\bar {\bf X}$ the
 vector of its means and by ${\bf X}'={\bf X}-\bar {\bf X}$.
For any $\ell\ge0$ define ${\bf X}_\ell(t)$, $t\ge -\ell\theta$ - the solution of
\eqref{eq:fflip-1}--\eqref{eq:fpbdf-2}
that satisfies ${\bf X}_\ell(-\ell \theta)={\bf X}$. 
We call such solutions {\em $\theta$-periodic}.
 Note that
\begin{equation}
  \label{Xts}
{\bf X}'_\ell(t)=e^{-A(t+\ell\theta)}{\bf X}'+\int_{-\ell\theta}^t
e^{-A(t-s)}\Sigma \Big({\bf p}_\ell(s-)\Big)\dd M(s),\quad t\ge-\ell\theta.
\end{equation}
Obviously,
$\Big({\bf X}_\ell(t+\theta)\Big)_{t\ge-\ell\theta}$ has the same law
as  $\Big({\bf X}_\ell(t)\Big)_{t\ge-\ell\theta}$.

\subsection{The covariance matrix}

Suppose that ${\bf X}(t)$ is a $\theta$-periodic solution of
\eqref{eq:flip}-\eqref{eq:pbdf}.
Define the vector valued function
$$
{\bar{{\frak p}^2}(t)} =\left(
  \begin{array}{c}
    \bbE p_0^2(t)\\
    \vdots\\
    \bbE p_n^2(t)
  \end{array}
    \right) 
    $$
    and the covariance matrix
\begin{equation}
\label{S1ts}
S(t)
=\left[
  \begin{array}{cc}
    {S^{(q)}(t)}&S^{(q,p)}(t)\\
   S^{(p,q)}(t)& S^{(p)}(t)
  \end{array}
\right],
\end{equation}
where
\begin{align}
\label{S1ts1}
&S^{(q)}(t)=\Big[\bbE[q_x'(t)q_y'(t)]\Big]_{x,y=0,\ldots,n},\quad S^{(q,p)}(t)=\Big[\bbE[q_x'(t)p_y'(t)]\Big]_{x,y=0,\ldots,n},\notag\\
&\\
&
S^{(p)}(t)=\Big[\bbE[p_x'(t)p_y'(t)]\Big]_{x,y=0,\ldots,n}\quad \mbox{and}\quad S^{(p,q)}(t)=\Big[S^{(q,p)}(t)\Big]^T. \notag
\end{align}
Obviously both $\bar{{\frak p}^2}(t)$,  $S^{(q)}(t)$,
$S^{(p)}(t)$ are $\theta$-periodic  and the matrices are symmetric.

We shall also consider $\lang  \bar{{\frak p}^2}\rang$ and
$\lang S^{(\alpha)}\rang$, $\al\in\{q,p, (p,q), (q,p) \}$ the respecive vector and matrices of time
averages.
 % of the above functions, e.g.
% $$
% \lang{\frak p}^2\rang=\left(
%   \begin{array}{c}
%     \lang p_0^2\rang\\
%     \vdots\\
%     \lang p_n^2\rang
%   \end{array}
%     \right) .
%     $$

 Given a vector $\frak y = (y_0, y_1, \dots, y_n)$, define also the matrix
 valued function
 \begin{equation}
\label{D2}
D_2({\frak y}) =
       4 \gamma\begin{bmatrix}
  T_-& 0 & 0 &\dots&0\\
                     0&   y_1 &  0 &\dots&0\\
                     0 & 0 &   y_2 &\dots&0\\
                     \vdots & \vdots & \vdots & \vdots&\vdots\\
 0& 0 & 0 & \dots &  y_n
                        \end{bmatrix}.
\end{equation}
Let $\Sigma_2 ({\frak y}) $ be the $2\times 2$ block matrix
 \begin{equation}
\label{S2}
\Sigma_2 ({\frak y}) =\left[\begin{array}{cc}
0_{n+1}&0_{n+1}\\
0_{n+1}&D_2({\frak y})
\end{array}\right].
\end{equation}
    \begin{proposition}
      \label{prop010612-21}
      The following identities hold
      \begin{equation}
        \label{010612-21}
       { S(t)=\int_{0}^{+\infty}e^{-As} \Sigma_2
         \big(\overline{\frak p^2}(t-s)\big) e^{-A^Ts}\dd s},
       \quad t\ge0
      \end{equation}
      and
      \begin{equation}
        \label{010612-21aa}
        \lang S\rang
        =\int_{0}^{+\infty}e^{-As}\Sigma_2\big(\lang {\frak
          p}^2\rang\big)e^{-A^Ts}\dd s.
      \end{equation}
      \end{proposition}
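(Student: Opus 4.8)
The plan is to deduce \eqref{010612-21} from the mild representation \eqref{Xts} of a $\theta$-periodic solution issued from the remote past, to evaluate the resulting covariance by an It\^o isometry, and to let the starting time tend to $-\infty$; the averaged identity \eqref{010612-21aa} then follows from \eqref{010612-21} by integrating over one period and using the $\theta$-periodicity of $\overline{\frak p^2}$.

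Fix $t\ge0$. For each $\ell\ge0$ let ${\bf X}_\ell$ be the $\theta$-periodic solution with ${\bf X}_\ell(-\ell\theta)={\bf X}$, where ${\bf X}\sim\mu_0^P$ is independent of the driving noise. Since $\mu_s^P$ has period $\theta$, one has $\mu_{-\ell\theta}^P=\mu_0^P$, hence ${\bf X}_\ell(t)$ is distributed as $\mu_t^P$ for every $\ell$ and $\bbE[{\bf X}'_\ell(t){\bf X}'_\ell(t)^T]=S(t)$, independently of $\ell$. Inserting \eqref{Xts} into this covariance, the purely ``initial'' contribution is $e^{-A(t+\ell\theta)}\bbE[{\bf X}'({\bf X}')^T]e^{-A^T(t+\ell\theta)}$, while the cross terms vanish: ${\bf X}'$ is $\mathcal F_{-\ell\theta}$-measurable and $s\mapsto\int_{-\ell\theta}^s e^{Ar}\Sigma\big({\bf p}_\ell(r-)\big)\,\dd M(r)$ is a genuine martingale that vanishes at $s=-\ell\theta$ (for this, and below, one uses that $\mu_0^P$ has finite second moments, so the stochastic integrals involved are square integrable true martingales).

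For the stochastic-integral contribution I would apply the It\^o isometry for the $2(n+1)$-dimensional martingale $M$. Its predictable quadratic variation is $\dd\langle M\rangle_s=\Lambda\,\dd s$ with $\Lambda=\mathrm{diag}(0,\dots,0;1,\gamma,\dots,\gamma)$ --- the entry $1$ coming from the Wiener part $\tilde w_-$ and each entry $\gamma$ from a compensated Poisson process $\tilde N_x(\gamma\,\cdot)$ of intensity $\gamma$. A direct computation gives $\Sigma({\bf p})\,\Lambda\,\Sigma({\bf p})^T=\Sigma_2({\frak p}^2)$: its only nonzero block is $4\gamma\,\mathrm{diag}(T_-,p_1^2,\dots,p_n^2)=D_2({\frak p}^2)$, the $(0,0)$ entry being $4\gamma T_-$ because the corresponding coefficient $\sqrt{4\gamma T_-}$ in $\Sigma$ is deterministic. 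As $\Sigma_2$ is affine in its argument and ${\bf X}_\ell(s)\sim\mu_s^P$, taking expectations replaces ${\frak p}_\ell^2(s-)$ by $\overline{\frak p^2}(s)$, and after the change of variables $u=t-s$ we obtain
\[
S(t)=e^{-A(t+\ell\theta)}\,\bbE[{\bf X}'({\bf X}')^T]\,e^{-A^T(t+\ell\theta)}+\int_0^{t+\ell\theta}e^{-Au}\,\Sigma_2\big(\overline{\frak p^2}(t-u)\big)\,e^{-A^Tu}\,\dd u .
\]
Now let $\ell\to\infty$: by Proposition \ref{prop012212-21} there is $c>0$ with $\|e^{-Ar}\|\le Ce^{-cr}$, so the first term vanishes (using finiteness of $\bbE[{\bf X}'({\bf X}')^T]$ from Theorem \ref{periodic}), while the integral converges by dominated convergence because $\overline{\frak p^2}$, being continuous and $\theta$-periodic, is bounded. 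This establishes \eqref{010612-21}.

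Finally, averaging \eqref{010612-21} over $t\in[0,\theta]$ and applying Fubini --- justified by the same exponential bound --- gives
\[
\lang S\rang=\int_0^{+\infty}e^{-Au}\Big(\frac1\theta\int_0^\theta\Sigma_2\big(\overline{\frak p^2}(t-u)\big)\,\dd t\Big)e^{-A^Tu}\,\dd u .
\]
Since $\overline{\frak p^2}$ has period $\theta$, its mean over any interval of length $\theta$ equals $\lang{\frak p}^2\rang$, and the affinity of $\Sigma_2$ turns the inner average into $\Sigma_2(\lang{\frak p}^2\rang)$, which is \eqref{010612-21aa}. I expect the main obstacle to be the It\^o-isometry bookkeeping --- weighting correctly the Wiener and compensated-Poisson components of $\dd\langle M\rangle_s$ and checking $\Sigma\Lambda\Sigma^T=\Sigma_2({\frak p}^2)$ --- together with the routine but necessary verifications that the periodic solutions ${\bf X}_\ell(t)$ all have law $\mu_t^P$ and that $\mu_0^P$ has finite second moments, which legitimize the passage $\ell\to\infty$.
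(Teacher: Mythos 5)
Your proposal is correct and follows essentially the same route as the paper: insert the mild representation \eqref{Xts} into the covariance, use the martingale/It\^o-isometry computation (which the paper states directly as \eqref{Xtsl}) to produce $\Sigma_2\big(\overline{\frak p^2}\big)$, let $\ell\to\infty$ via Proposition \ref{prop012212-21}, change variables, and average over a period for \eqref{010612-21aa}. The only difference is that you spell out the quadratic-variation bookkeeping and the vanishing of the cross terms, which the paper leaves implicit.
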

      \proof
      Formula \eqref{010612-21aa} is an obvious consequence of
      \eqref{010612-21}, so we only prove the latter. From \eqref{Xts}
      we conclude that
      \begin{equation}
  \label{Xtsl}
  \begin{split}
  &  S(t)=e^{-A(t+\ell\theta)}\bbE\Big\{{\bf X}'(0)\otimes \Big({\bf
    X}'(0)\Big)^T \Big\}e^{-A^T(t+\ell\theta)}\\
  &
  +\int_{-\ell\theta}^t
  e^{-A(t-s)} \Sigma_2 \Big(\overline{\frak {\bf p}^2}(s)\Big) e^{-A^T(t-s)}\dd s,
  \quad t\ge-\ell\theta.
\end{split}
\end{equation}
Letting $\ell\to+\infty$ and using  Proposition \ref{prop012212-21} we obtain
  \begin{equation}
  \label{Xtsl1}
    S(t)=\int_{-\infty}^t
    e^{-A(t-s)} \Sigma_2 \Big(\overline{\frak {\bf p}^2}(s)\Big)
    e^{-A^T(t-s)}\dd
s,\quad  t\in\bbR.
\end{equation}
Changing variables $s':=t-s$ we conclude \eqref{010612-21}.
      \qed

      \subsection{Structure of the covariance matrix
        of the periodic averages}

     { By \eqref{010612-21aa} and partial integration in time, we have
      \begin{equation}\label{eq:covev}
        \begin{split}
          &AS(t) = -\int_0^\infty \left(\frac{d}{ds}  e^{- As} \right) \Sigma_2
          \big(\overline{\frak p^2}(t-s)\big) e^{-A^Ts}\dd s\\
          =& -\int_0^\infty  e^{- As} \Sigma_2 \big(\overline{\frak p^2}(t-s)\big)
        A^T e^{-A^Ts}\dd s +
        \Sigma_2\big(\overline{\frak p^2}(t)\big) 
        -\int_0^\infty  e^{- As} \Sigma_2 \big(\overline{\frak p^2}' (t-s)\big) e^{-A^Ts} \dd s\\
        &= -S(t) A^T +  \Sigma_2\big(\overline{\frak p^2}(t)\big) - S'(t).
        \end{split}
      \end{equation}
Integrating the above relation over the period we 
conclude that the  
matrix $\lang S\rang$ satisfies the equation}
\begin{equation}
  \label{SA}
A \lang S\rang+ \lang S\rang A^T=\Sigma_2(\lang {\frak p^2}\rang).
\end{equation}
It leads to the following equations on the blocks defined in \eqref{S1ts}
(see \eqref{A} and \eqref{D2}):
\begin{align*}
  &\lang S^{(q,p)}\rang=\Big[\lang S^{(p,q)}\rang\Big]^T
    =-\lang S^{(p,q)}\rang,\\
  &\lang S^{(q)}\rang (\om_0^2-\Delta_{\rm N})+ 2\gamma \lang S^{(q,p)}\rang
    - \lang S^{(p)}\rang=0,\\
  &(\om_0^2-\Delta_{\rm N}) \lang S^{(q)}\rang+ 2\gamma \lang S^{(p,q)}\rang
    - \lang S^{(p)}\rang=0\\
  &(\om_0^2-\Delta_{\rm N}) \lang S^{(q,p)}\rang-\lang S^{(q,p)}\rang
    (\om_0^2-\Delta_{\rm N})=D_2(\lang {\frak p^2}\rang) - 4\gamma \lang S^{(p)}\rang.
\end{align*}
From here we conclude
\begin{align}
  \label{163011-21}
  &\lang S^{(q,p)}\rang=-\lang S^{(p,q)}\rang,\notag\\
 &\lang S^{(p)}\rang =\frac12\big\{\lang S^{(q)}\rang  (\om_0^2-\Delta_{\rm N})
   +(\om_0^2-\Delta_{\rm N}) \lang S^{(q)}\rang \Big\}
   % +   \frac12\big\{\lang S^{(q,p)}\rang  \Gamma
   % -\Gamma\lang S^{(q,p)}\rang \big\}
   , \notag\\
  &4\gamma \lang S^{(q,p)}\rang
    =(\om_0^2-\Delta_{\rm N}) \lang S^{(q)}\rang-
   \lang S^{(q)}\rang (\om_0^2-\Delta_{\rm N})  ,\\
  &(\om_0^2-\Delta_{\rm N}) \lang S^{(q,p)}\rang-\lang S^{(q,p)}\rang (\om_0^2-\Delta_{\rm N})
    =D_2(\lang {\frak p^2}\rang) - 4\gamma \lang S^{(p)}\rang
   .\notag
\end{align}

Denote
$$
 \tilde{ S}^{(q,p)}_{j,j'}=\sum_{x,x'=0}^n \lang S^{(q,p)}_{x,x'}\rang\psi_j(x) \psi_{j'}(x')
$$
and analogously define $ \tilde{ S}^{(p)}_{j,j'}$ and $ \tilde{
  S}^{(q)}_{j,j'}$. The eigenvalues of $\om_0^2-\Delta_{\rm N}$ are
given by
%\marginpar{\red{Corr.}}
{
\begin{equation}
\label{muj}
\mu_j=\om_0^2+\la_j=\om_0^2+4\sin^2\left(\frac{\pi
    j}{2(n+1)}\right),\quad j=0,\ldots,n.
\end{equation}}
Then we have the inverse relations
\begin{equation}
  \label{eq:28}
  \lang S^{(\alpha)}_{x,x'}\rang =\sum_{j,j'=0}^n \tilde{ S}^{(\alpha)}_{j,j'}
  \psi_j(x) \psi_{j'}(x').
\end{equation}

With this notation
we can rewrite \eqref{163011-21} as follows
\begin{align}
  \label{163011-21a}
  &\tilde S^{(q,p)}_{j,j'}
   =-\tilde S^{(p,q)}_{j,j'},\notag\\
 &\tilde S^{(p)}_{j,j'}=\frac12\big(\mu_j+\mu_{j'}\big)\tilde
   S^{(q)}_{j,j'}% - 
   % 2(\delta\ga )\tilde F_{j,j'}^{(4)}
   \notag\\
  &4\ga \tilde S^{(q,p)}_{j,j'} % +4(\delta\ga ) \tilde F_{j,j'}^{(2)}
    = \tilde S^{(q)}_{j,j'}(\mu_j-\mu_{j'}),\\
  &(\mu_j-\mu_{j'}) \tilde S^{(q,p)}_{j,j'}=4 \ga \tilde F_{j,j'}
    -4\ga\tilde  S^{(p)}_{j,j'}% -4(\delta\ga ) \tilde F_{j,j'}^{(3)}
    . \notag
\end{align}
where
            \begin{equation}
                 \tilde F_{j,j'}:=
                 \sum_{y=0}^n\psi_j(y)\psi_{j'}(y)\lang p_y^2\rang
                  +\Big( T_--  \lang p_0^2\rang \Big)\psi_j(0)\psi_{j'}(0).\label{eq:49}
                \end{equation}

By eliminating $\tilde S^{(q,p)}_{j,j'} $ from the above equations we get
\begin{equation}
  \label{eq:26}
  \tilde S^{(p)}_{j,j'}=  \tilde F_{j,j'} - \frac{(\mu_{j} - \mu_{j'})^2}{(4\gamma)^2}
   \tilde S^{(q)}_{j,j'}.
\end{equation}
Thus,
\begin{equation}
  \label{eq:27}
  \tilde S^{(q)}_{j,j'} = \frac{2}{\mu_{j} + \mu_{j'}} \tilde F_{j,j'}
  -  \frac{(\mu_{j} - \mu_{j'})^2}{8\gamma^2(\mu_{j} + \mu_{j'})}
   \tilde S^{(q)}_{j,j'}.
\end{equation}
It follows that
\begin{equation}
  \label{eq:47}
  \tilde S^{(p)}_{j,j'} = \Theta(\mu_j,\mu_{j'}) \tilde F_{j,j'} , \qquad
  \Theta(\mu_j,\mu_{j'}) =
  \left[1 + \frac{(\mu_j -\mu_{j'})^2}{8\gamma^2 (\mu_j + \mu_{j'})}\right]^{-1},
\end{equation}
and
\begin{equation}
  \label{eq:69}
  \tilde S^{(q)}_{j,j'} = \frac{ 2  \Theta(\mu_j,\mu_{j'})}{\mu_{j} + \mu_{j'}} \tilde F_{j,j'} .
\end{equation}

 \section{Energy Bounds}
\label{sec:bound-kinet-energ}

{Throughout the remainder of the paper we shall always assume that
  the assumptions of Theorem \ref{thm-current} are in force}. From \eqref{eq:47} and \eqref{eq:49} we have
\begin{equation}
  \label{eq:50}
  \begin{split}
  \lang S^{(p)}_{x,x} \rang = \sum_{j,j'}  \Theta(\mu_j,\mu_{j'}) \tilde F_{j,j'}
  \psi_j(x) \psi_{j'}(x) 
  = \sum_y M_{x,y}  \lang p_y^2 \rang
  + \big(T_--\lang p_0^2\rang \big) M_{x,0}
  % \sum_{j,j'}  \Theta(\mu_j,\mu_{j'})
  % \psi_j(x) \psi_{j'}(x)  \psi_j(0) \psi_{j'}(0) 
\end{split}
\end{equation}
% \begin{proposition}
%   \begin{equation}
%     \label{eq:32bis}
%     \sum_{x=0}^{n-1} \lang q_xq_{x+1} \rang =
%     \left(G_{\omega_0}(1) + O(1/n)\right) \sum_{x=0}^n \lang p_x^2 \rang
%     + o(n).
%   \end{equation}
% \end{proposition}
% \begin{proof}
%   {\color{red} Need a formal proof here like in \eqref{eq:31}}
% \end{proof}
% Notice that \eqref{eq:32bis} is not needed to get the energy bound
% since $1 > 2\omega_0G_{\omega_0}(0)$. The
% \eqref{eq:32} is enough to get the energy bound.
where
\begin{equation}
M_{x,y} :=\sum_{j,j'=0}^n\Theta(\mu_j,\mu_{j'})
      \psi_j(x)\psi_{j'}(x)\psi_j(y)\psi_{j'}(y) .
\label{eq:54}
\end{equation}
% {Note that $M_{x,y}:={\cal M}{\frak e}_{x,y}\cdot {\frak e}_{x,y}$,
% where ${\frak e}_{x,y}\in \bbR^{n+1}\otimes \bbR^{n+1}$ is given by
% ${\frak e}_{x,y}(z,z')=\delta_{x,z}\delta_{y,z'}$ and ${\cal M}$ is a
% linear mapping on $\bbR^{n+1}\otimes \bbR^{n+1}$ defined by
% \begin{equation*}
%   \begin{split}
% \big({\cal M}c\big)(x,y) :=&\sum_{x',y'=0}^n\sum_{j,j'=0}^n\Theta(\mu_j,\mu_{j'})
%       \psi_j(x)\psi_j(y) \psi_{j'}(x')\psi_{j'}(y')c(x',y'),\\
%  &\qquad\mbox{for} \qquad c=[c(x,y)]_{x,y=0}^n\in  \bbR^{n+1}\otimes \bbR^{n+1}.     
% \end{split}
% \end{equation*}
% Since ${\cal M}$ is clearly positive definite, i.e. ${\cal M}c\cdot
% c>0$ for any $c\not=0$, we have in particular
% \begin{equation}
%   \label{013103-22}
%   M_{x,y}>0,\quad x,y=0,\ldots,n.
% \end{equation}}

From \eqref{pqpp} we have
$$
\lang p_x^2\rang=\lang (p'_x)^2\rang+\lang \bar p_x^2\rang.
$$
We can further write
\begin{align}
  \label{031012-21}
  & \lang p_x^2\rang -\sum_{y=0}^nM_{x,y} \lang p_y^2\rang
    =G_x^{(n)},\quad
    \mbox{where}\\
&
G_x^{(n)}:=\big(T_--\lang p_0^2\rang \big) M_{0,x} +\lang \bar p_x^2\rang .\notag
\end{align}

\bigskip

%\textcolor{red}{\bf DOTAD}

\subsection{A lower bound on matrix $[M_{x,y}]$}
 The main result of the present section is the following.
   \begin{proposition}
\label{prop031012-21}
There exists $c_*>0$ such that
\begin{align}
\label{lowerM}
\sum_{x,y=0}^n(\delta_{x,y}-M_{x,y})f_yf_x 
\ge c_*\sum_{x=0}^{n-1} (\nabla f_x)^2, \,\quad\mbox{ for any
  }(f_x)\in\bbR^{n+1},\, n=1,2,\dots.
\end{align}
\end{proposition}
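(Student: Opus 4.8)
The plan is to diagonalise $-\Delta_{\rm N}$, observe that in its eigenbasis the left‑hand quadratic form becomes a weighted $\ell^2$‑norm with weights $1-\Theta(\mu_j,\mu_{j'})$ that vanish to second order in $\mu_j-\mu_{j'}$, and then recognise that this second‑order penalty is exactly (the Frobenius norm of) a commutator with $-\Delta_{\rm N}$, which reproduces the discrete Dirichlet form $\sum_x(\nabla f_x)^2$. Concretely, given $(f_x)\in\bbR^{n+1}$ set
\[
g_{j,j'}:=\sum_{x=0}^n \psi_j(x)\psi_{j'}(x) f_x ,\qquad j,j'=0,\dots,n .
\]
Using the completeness relation $\sum_{j}\psi_j(x)\psi_j(y)=\delta_{x,y}$ (equivalently, orthogonality of the matrix $\Psi_{x,j}:=\psi_j(x)$) one gets $\sum_{j,j'}g_{j,j'}^2=\sum_x f_x^2$, while the definition \eqref{eq:54} gives $\sum_{x,y}M_{x,y}f_xf_y=\sum_{j,j'}\Theta(\mu_j,\mu_{j'})g_{j,j'}^2$. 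Hence
\[
\sum_{x,y=0}^n(\delta_{x,y}-M_{x,y})f_xf_y=\sum_{j,j'=0}^n\bigl(1-\Theta(\mu_j,\mu_{j'})\bigr)g_{j,j'}^2 .
\]

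Next I would bound the weights below. By \eqref{eq:47}, $1-\Theta(\mu_j,\mu_{j'})=\dfrac{(\mu_j-\mu_{j'})^2}{8\gamma^2(\mu_j+\mu_{j'})+(\mu_j-\mu_{j'})^2}$, and since $\mu_j=\om_0^2+4\sin^2\!\bigl(\tfrac{\pi j}{2(n+1)}\bigr)\in[\om_0^2,\om_0^2+4]$ for every $j$ and every $n$, the denominator is at most the $n$‑independent constant $C_1:=16\gamma^2(\om_0^2+4)+16$. Therefore $1-\Theta(\mu_j,\mu_{j'})\ge C_1^{-1}(\mu_j-\mu_{j'})^2$ and
\[
\sum_{x,y=0}^n(\delta_{x,y}-M_{x,y})f_xf_y\ \ge\ \frac1{C_1}\sum_{j,j'=0}^n(\mu_j-\mu_{j'})^2 g_{j,j'}^2 .
\]

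The remaining, and essentially only substantive, step is to identify $\sum_{j,j'}(\mu_j-\mu_{j'})^2 g_{j,j'}^2$ with $2\sum_{x=0}^{n-1}(\nabla f_x)^2$. Writing $P:=\mathrm{diag}(f_0,\dots,f_n)$ and $G:=\Psi^{T}P\Psi$ one has $G_{j,j'}=g_{j,j'}$ and $\Psi^{T}(-\Delta_{\rm N})\Psi=\mathrm{diag}(\la_j)$, so the matrix with entries $(\mu_j-\mu_{j'})g_{j,j'}=(\la_j-\la_{j'})g_{j,j'}$ equals $\Psi^{T}\bigl[(-\Delta_{\rm N})P-P(-\Delta_{\rm N})\bigr]\Psi$; since conjugation by the orthogonal $\Psi$ preserves the Frobenius norm, $\sum_{j,j'}(\mu_j-\mu_{j'})^2 g_{j,j'}^2=\bigl\|(-\Delta_{\rm N})P-P(-\Delta_{\rm N})\bigr\|_F^2$. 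A direct computation gives $\bigl[(-\Delta_{\rm N})P-P(-\Delta_{\rm N})\bigr]_{x,y}=(-\Delta_{\rm N})_{x,y}(f_y-f_x)$: the diagonal vanishes, the modified Neumann diagonal entries at $x=0,n$ are killed by $f_y-f_x$, and only nearest‑neighbour entries survive, with value $\pm\nabla f_x$. Hence the squared Frobenius norm is $2\sum_{x=0}^{n-1}(\nabla f_x)^2$, and combining the three displays proves the proposition with $c_*=2/C_1=\bigl(8\gamma^2(\om_0^2+4)+8\bigr)^{-1}$, which is independent of $n$. I do not expect a real obstacle here: the only thing to be careful about is the boundary rows of $-\Delta_{\rm N}$, and the genuine content is the commutator identity turning the spectral‑gap weight $(\mu_j-\mu_{j'})^2$ into the discrete gradient squared.
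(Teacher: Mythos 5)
Your proof is correct and follows essentially the same route as the paper's: expand in the eigenbasis $\{\psi_j\}$, bound $1-\Theta(\mu_j,\mu_{j'})\ge C^{-1}(\mu_j-\mu_{j'})^2$ uniformly in $n$ using $\mu_j\in[\om_0^2,\om_0^2+4]$, and identify $\sum_{j,j'}(\mu_j-\mu_{j'})^2 g_{j,j'}^2$ with $2\sum_{x=0}^{n-1}(\nabla f_x)^2$. Your commutator/Frobenius-norm identity is simply a cleaner packaging of the paper's ``careful calculation'' reducing that sum to $\delta_{x,y}\langle\Delta_{\rm N}^2\delta_x,\delta_y\rangle-\langle\Delta_{\rm N}\delta_x,\delta_y\rangle^2=\langle-\Delta_{\rm N}\delta_x,\delta_y\rangle$ (the paper is sloppy about the sign of $\Delta_{\rm N}$ there), and your explicit constant $c_*$ is a harmless bonus.
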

\proof
For any sequence $(f_x)\in\bbR^{n+1}$ we can write
\begin{align}
\label{071504-22}
&\sum_{x,y=0}^n(\delta_{x,y}-M_{x,y})f_yf_x =\sum_{x,y=0}^n \sum_{j,j'=0}^n
 \left(1- \Theta(\mu_j,\mu_{j'})\right)
  \psi_j(x)\psi_{j'}(x)\psi_j(y)\psi_{j'}(y)   f_yf_x\\
&= \sum_{j,j'=0}^n \left(1- \Theta(\mu_j,\mu_{j'})\right)
     \left(\sum_{x=0}^n \psi_j(x)f_x\psi_{j'}(x)\right)^2.\notag
\end{align}
An elementary argument (cf \eqref{eq:47}) shows that there exists $C_*>0$, such that 
$$
1- \Theta(\mu_j,\mu_{j'}) \ge C_*(\mu_j -\mu_{j'})^2
= C_* \left(\la_j-\la_{j'}\right)^2.
%,\quad c,c'\in [\om_0^2,\om^2+4]
$$
Therefore, see \eqref{laps},
\begin{align*}
&\sum_{x,y=0}^n(\delta_{x,y}-M_{x,y})f_yf_x 
\ge C_* \sum_{j,j'=0}^n \left(\la_j-\la_{j'}\right)^2
       \left(\sum_{x=0}^n \psi_j(x)f_x\psi_{j'}(x)\right)^2\\
 &
=C_*\sum_{x,y=0}^n\sum_{j,j'=0}^n \left(\la_j-\la_{j'}\right)^2
        \psi_j(x)\psi_{j'}(x) \psi_j(y)\psi_{j'}(y) f_x f_y\\
&
=2C_*  \sum_{x,y=0}^n\Big\{ \sum_{j=0}^n \la_j^2
  \psi_j^2(x)\delta_{x,y}-\sum_{j,j'=0}^n \la_j
  \la_{j'}\psi_j(x)\psi_{j'}(x)\psi_j(y)\psi_{j'}(y) \Big\}  f_yf_x\\
&
=2C_*\sum_{x,y=0}^n f_yf_x\Big\{
  \delta_{x,y}\langle\Delta_{\rm N}^2 \delta_x,\delta_y\rangle -\langle\Delta_{\rm N}\delta_x,\delta_y\rangle^2\Big\}.
\end{align*}
As before, here $\Delta_{\rm N}$ is the Neumann laplacian.
A careful calculation shows that
$$
\delta_{x,y}\langle\Delta_{\rm N}^2\delta_x,\delta_y\rangle -\langle\Delta_{\rm N} \delta_x,
\delta_y\rangle^2=\langle\Delta_{\rm N} \delta_x,\delta_y\rangle.
$$
Therefore
\begin{align}
&\sum_{x,y=0}^n(\delta_{x,y}-M_{x,y})f_yf_x 
\ge 2C_*\sum_{x=0}^{n-1} (\nabla f_x)^2
\end{align}
and the conclusion of the proposition follows.\qed

\begin{proposition}
\label{prop021012-21}
There exists $C>0$ such that
\begin{equation}
\label{041012-21}
\sup_{x}|G_x^{(n)}| \le \frac{C}{n},\quad n=1,2,\ldots.
\end{equation}
\end{proposition}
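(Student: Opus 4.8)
The plan is to estimate separately the two terms in the definition
$G_x^{(n)} = \big(T_- - \lang p_0^2\rang_n\big) M_{0,x} + \lang \bar p_x^2\rang$
from \eqref{031012-21}, showing that each of them is $O(1/n)$ uniformly in $x$.

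For the first term I would invoke Corollary \ref{thm021911-21}, which asserts
$T_- - \lang p_0^2\rang_n = J^{a,b}/(2\gamma n) + o(1/n)$ and hence in particular gives a bound $|T_- - \lang p_0^2\rang_n| \le C_1/n$. It then remains to bound $M_{0,x}$, and in fact the whole matrix entries $M_{x,y}$, uniformly. From \eqref{eq:47} one has $0 < \Theta(\mu_j,\mu_{j'}) \le 1$ (the denominator $\mu_j+\mu_{j'}$ being positive since $\om_0>0$), and since $\{\psi_j\}_{j=0}^n$ is an orthonormal basis of $\bbR^{n+1}$ we have $\sum_{j=0}^n \psi_j(x)^2 = 1$ for each $x$. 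Therefore, by \eqref{eq:54} and Cauchy--Schwarz,
$$
|M_{x,y}| \le \Big(\sum_{j=0}^n |\psi_j(x)\psi_j(y)|\Big)^2
\le \Big(\sum_{j=0}^n \psi_j(x)^2\Big)\Big(\sum_{j=0}^n \psi_j(y)^2\Big) = 1 ,
$$
so $\big|\big(T_- - \lang p_0^2\rang_n\big) M_{0,x}\big| \le C_1/n$ for every $x$.

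For the second term I would simply note that $\lang \bar p_x^2\rang = \theta_n^{-1}\int_0^{\theta_n}\bar p_x^2(s)\,\dd s \ge 0$, so every summand is dominated by the full sum; by Proposition \ref{prop011012-21} we have $\sum_{x=0}^n \lang \bar p_x^2\rang \approx 1/n$, whence $\lang \bar p_x^2\rang \le C_2/n$ for all $x$. Adding the two estimates yields $\sup_x |G_x^{(n)}| \le (C_1+C_2)/n$, which is the assertion.

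I do not expect any genuine obstacle here: the proposition is essentially a bookkeeping combination of Corollary \ref{thm021911-21} (for the factor $T_- - \lang p_0^2\rang_n$) and Proposition \ref{prop011012-21} (for $\lang \bar p_x^2\rang$), the only extra ingredient being the elementary uniform bound $|M_{x,y}| \le 1$, which follows at once from $0 \le \Theta \le 1$ together with the orthonormality of the eigenfunctions $\psi_j$.
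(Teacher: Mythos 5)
Your proposal is correct and follows essentially the same route as the paper, which simply declares the proposition a straightforward consequence of Corollary \ref{thm021911-21} and Proposition \ref{prop011012-21}; your extra step, the uniform bound $|M_{x,y}|\le 1$ obtained from $0<\Theta\le 1$ and the orthonormality of the $\psi_j$, is exactly the detail the paper leaves implicit (it is also consistent with the bi-stochasticity of $[M_{x,y}]$ noted in Appendix \ref{appB}).
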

\proof
The result is a straightforward consequence of Corollary
\ref{thm021911-21} and Proposition \ref{prop011012-21}.
\qed

From Propositions  \ref{prop031012-21}  and \ref{prop021012-21}  we
immediately conclude the following.
 \begin{corollary}
\label{prop031012-21aa}
There exists $C>0$ such that
\begin{equation}
\label{051012-21}
\sum_{x=0}^{n-1}[\lang p_x^2\rang-\lang p_{x+1}^2\rang]^2 \le
\frac{C}{n+1}\sum_{x=0}^n  \lang p_x^2\rang,\quad n=1,2,\ldots.
\end{equation}
\end{corollary}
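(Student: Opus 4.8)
The plan is to test the quadratic-form inequality of Proposition~\ref{prop031012-21} with the particular sequence $f_x = \lang p_x^2\rang$, $x=0,\dots,n$, and to recognize the resulting left-hand side through the identity \eqref{031012-21}. This is a direct algebraic combination of the two preceding propositions, so I do not expect a genuine obstacle.

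First I would record that, since $\lang p_x^2\rang$ is a time average of the expectation of a square, $\lang p_x^2\rang\ge0$ for every $x$ (in fact $\ge T_-$ by Theorem~\ref{periodic}, but nonnegativity is all that is needed). Plugging $f_x=\lang p_x^2\rang$ into \eqref{lowerM} gives
$$
c_*\sum_{x=0}^{n-1}\big(\nabla \lang p_x^2\rang\big)^2 \;\le\; \sum_{x,y=0}^n(\delta_{x,y}-M_{x,y})\,\lang p_x^2\rang\,\lang p_y^2\rang \;=\; \sum_{x=0}^n \lang p_x^2\rang\Big(\lang p_x^2\rang-\sum_{y=0}^n M_{x,y}\lang p_y^2\rang\Big).
$$
By the defining relation \eqref{031012-21}, the bracket on the right equals $G_x^{(n)}$, so the right-hand side is exactly $\sum_{x=0}^n \lang p_x^2\rang\, G_x^{(n)}$.

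Next, using $\lang p_x^2\rang\ge0$ and then Proposition~\ref{prop021012-21} (i.e. \eqref{041012-21}),
$$
\sum_{x=0}^n \lang p_x^2\rang\, G_x^{(n)} \;\le\; \Big(\sup_x|G_x^{(n)}|\Big) \sum_{x=0}^n \lang p_x^2\rang \;\le\; \frac{C}{n}\sum_{x=0}^n \lang p_x^2\rang .
$$
Combining the two displays and noting $\nabla \lang p_x^2\rang = \lang p_{x+1}^2\rang-\lang p_x^2\rang$, one gets $\sum_{x=0}^{n-1}[\lang p_x^2\rang-\lang p_{x+1}^2\rang]^2 \le \frac{C}{c_* n}\sum_{x=0}^n \lang p_x^2\rang$; since $n/(n+1)$ is bounded, replacing $n$ by $n+1$ in the denominator and renaming the constant yields \eqref{051012-21}.

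The real content has already been spent in proving Proposition~\ref{prop031012-21} (the spectral lower bound on $\delta-M$ in terms of the discrete Dirichlet form) and Proposition~\ref{prop021012-21} (the $O(1/n)$ bound on $G_x^{(n)}$, which itself rests on Corollary~\ref{thm021911-21} and Proposition~\ref{prop011012-21}). The only minor points in the present argument are the use of the elementary identity \eqref{031012-21} to identify the quadratic form with $\sum_x \lang p_x^2\rang G_x^{(n)}$, the positivity of $\lang p_x^2\rang$ which permits passing to $\sup_x|G_x^{(n)}|$, and the harmless adjustment of the constant when trading $1/n$ for $1/(n+1)$.
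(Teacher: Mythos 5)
Your argument is correct and coincides with the paper's own proof: both apply the quadratic-form bound \eqref{lowerM} with $f_x=\lang p_x^2\rang$, identify the resulting right-hand side with $\sum_x G_x^{(n)}\lang p_x^2\rang$ via \eqref{031012-21}, and then invoke Proposition \ref{prop021012-21} together with the positivity of $\lang p_x^2\rang$. The only cosmetic difference is your explicit remark about trading $1/n$ for $1/(n+1)$, which the paper absorbs silently into the constant.
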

\begin{proof}
Using \eqref{lowerM} and then \eqref{031012-21} we get %\marginpar{\red{Corr.}}
  \begin{equation}
    \label{eq:55}
    \begin{split}
    \sum_{x=0}^{n-1}[\lang p_x^2\rang-\lang p_{x+1}^2\rang]^2 \le c_*^{-1}
    \sum_{x,y=0}^n(\delta_{x,y}-M_{x,y}) \lang p_y^2\rang {\lang p_{x}^2\rang} \\
   =\frac1{c_*} \sum_{x=0}^n G^{(n)}_x \lang p_{x}^2\rang \le \frac{C}{c_*(n+1)}\sum_{x=0}^n \lang p_{x}^2\rang
  \end{split}
  \end{equation}
and the conclusion of the corollary follows. 
\end{proof}

\subsection{Upper bound on energy}

The main objective of the present section is the following.
\begin{theorem}
\label{thm011911-21z}
Under assumptions of Theorem \ref{thm-current}, there exists $C>0$ such that 
\begin{equation}
\label{021911-21a}
\frac{1}{n+1}\sum_{x=0}^n\lang {\cal E}_x\rang\le C,\quad n=1,2,\ldots.
\end{equation}
\end{theorem}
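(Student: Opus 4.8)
The plan is to control the averaged total energy $\sum_x \lang \mathcal E_x\rang$ by separately bounding the averaged kinetic energy $\sum_x \lang p_x^2\rang$ and the averaged potential energy. For the kinetic part, the starting point is the identity \eqref{031012-21}, which expresses $\lang p_x^2\rang$ in terms of $M_{x,y}\lang p_y^2\rang$ plus the small correction $G_x^{(n)}$. Summing \eqref{031012-21} over $x$ and using that the matrix $[M_{x,y}]$ has row sums close to $1$ (this follows from $\Theta(\mu_j,\mu_j)=1$ together with a control of the off-diagonal contributions via Proposition \ref{prop031012-21}), one obtains a relation of the form $\sum_x(\text{something small})\cdot\lang p_x^2\rang = \sum_x G_x^{(n)}$, which is $O(1)$ because $\sup_x|G_x^{(n)}|\le C/n$ by Proposition \ref{prop021012-21}. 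The delicate point is that the conclusion of Proposition \ref{prop031012-21} only controls the \emph{gradient} $\sum_x(\nabla \lang p_x^2\rang)^2$, i.e. a Dirichlet form, not the average of $\lang p_x^2\rang$ itself, so one also needs an anchor: the lower bound $\min_x \lang p_x^2\rang \ge T_-$ from Theorem \ref{periodic} pins the profile from below. I would combine these: write $\lang p_x^2\rang = \lang p_0^2\rang + \sum_{y<x}\nabla\lang p_y^2\rang$, square and sum, and use Corollary \ref{prop031012-21aa} to see that the fluctuation of $\lang p_x^2\rang$ around its mean is $O\big(n^{-1}\sum_x\lang p_x^2\rang\big)^{1/2}\cdot$(something); feeding this back gives a self-improving inequality that forces $\frac1{n+1}\sum_x\lang p_x^2\rang\le C$.

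More precisely, set $E_n:=\frac1{n+1}\sum_{x=0}^n\lang p_x^2\rang$. From Corollary \ref{prop031012-21aa} we have $\sum_{x}(\nabla\lang p_x^2\rang)^2\le \frac{C}{n+1}\sum_x\lang p_x^2\rang = C E_n$. By a discrete Cauchy--Schwarz / summation-by-parts estimate this bounds the oscillation: $\max_x \lang p_x^2\rang - \min_x\lang p_x^2\rang \le \sum_x|\nabla\lang p_x^2\rang| \le \sqrt{n}\,\big(\sum_x(\nabla\lang p_x^2\rang)^2\big)^{1/2}\le \sqrt{C n E_n}$. Since $\min_x\lang p_x^2\rang\ge T_-$ is bounded, but more usefully since \eqref{031911-21} from Corollary \ref{thm021911-21} gives $\lang p_0^2\rang = T_- + O(1/n)$, we get $\max_x\lang p_x^2\rang \le T_- + O(1/n) + \sqrt{CnE_n}$, hence $E_n \le T_- + O(1/n) + \sqrt{CnE_n}/(n+1) \le T_- + o(1) + \sqrt{C E_n/n}$. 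For $n$ large this inequality in $E_n$ is only compatible with $E_n$ bounded (if $E_n$ were, say, $\ge K$ large, the right side would be $\le T_- + o(1) + \sqrt{CE_n/n} = T_-+o(1) + o(\sqrt{E_n})$, contradicting $E_n\ge K$ once $K>T_-$ and $n$ large). This yields $\sup_n E_n \le C$.

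Once the kinetic energy is under control, the potential energy follows from the covariance structure of Section \ref{sec:cov}: by \eqref{eq:69} we have $\tilde S^{(q)}_{j,j'} = \frac{2\Theta(\mu_j,\mu_{j'})}{\mu_j+\mu_{j'}}\tilde F_{j,j'}$ with $\Theta\le 1$ and $\mu_j\ge\omega_0^2>0$, so $\sum_{x}\lang S^{(q)}_{x,x}\rang = \sum_{j,j'}(\tilde S^{(q)}_{j,j'})$-type sums are controlled by $\sum_{j,j'}(\tilde F_{j,j'})$-type sums, which in turn are controlled by $\sum_x\lang p_x^2\rang + |T_- - \lang p_0^2\rang|$, i.e. by $O(n)$. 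Adding the mean-square contributions $\sum_x\lang\bar q_x^2\rang\approx n^{2a}$ and $\sum_x\lang\bar p_x^2\rang\approx 1/n$ from Proposition \ref{prop011012-21} (both $O(n)$ since $a\le 0$), and recalling $\mathcal E_x = \frac{p_x^2}2 + \frac12(q_x-q_{x-1})^2+\frac{\omega_0^2}2 q_x^2$ so that each term is $q$-quadratic or $p$-quadratic, we conclude $\frac1{n+1}\sum_x\lang\mathcal E_x\rang \le C$.

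The main obstacle is the self-consistent kinetic-energy bound: Proposition \ref{prop031012-21} alone gives only a gradient (Dirichlet-form) estimate, so without the additional input of the boundary value $\lang p_0^2\rang = T_-+O(1/n)$ from Corollary \ref{thm021911-21} one cannot close the loop. The argument genuinely needs to combine the spectral lower bound on $\mathrm{Id}-M$, the a priori smallness of $G^{(n)}_x$, and the pinned boundary value, and one must be careful that the constant $C$ coming out of the self-improving inequality is uniform in $n$.
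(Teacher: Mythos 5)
There is a genuine gap in the kinetic-energy part of your argument: the self-improving inequality does not close. From Corollary \ref{prop031012-21aa} and the anchor $\lang p_0^2\rang=T_-+O(1/n)$ you correctly get $\max_x\lang p_x^2\rang\le T_-+O(1/n)+\sqrt{n}\bigl(\sum_x(\nabla\lang p_x^2\rang)^2\bigr)^{1/2}\le T_-+O(1/n)+\sqrt{CnE_n}$, but the next step, $E_n\le T_-+o(1)+\sqrt{CnE_n}/(n+1)$, is wrong: the average is bounded by the maximum, not by the maximum divided by $n+1$, so the correct conclusion is only $E_n\le T_-+o(1)+\sqrt{CnE_n}$, i.e. $E_n=O(n)$, not $O(1)$. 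This is not just an algebra slip that can be repaired: a profile with slope of order one per lattice site, $\lang p_x^2\rang\approx T_-+cx$ with $c$ of order $1$, satisfies the Dirichlet-form bound \eqref{051012-21} (both sides are of order $cn$) and the boundary anchor, yet gives $E_n\sim n$. So Proposition \ref{prop031012-21}, Proposition \ref{prop021012-21} and Corollary \ref{thm021911-21} together cannot yield \eqref{021911-21a}; some input quantifying that the microscopic temperature gradient is $O(1/n)$ rather than $O(1)$ is indispensable. (Also, summing \eqref{031012-21} over $x$ gives exactly $0=\sum_xG_x^{(n)}$ because $[M_{x,y}]$ is bi-stochastic, so that step carries no information; and the lower bound $\min_x\lang p_x^2\rang\ge T_-$ cannot help with an upper bound.)

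The missing idea is the fluctuation--dissipation relation \eqref{eq:33}. Time-averaging it gives \eqref{072606-21a}, $\lang\mathfrak F_x\rang=\lang\mathfrak F_0\rang-4\gamma J_n x+(\text{boundary term})$, and since $nJ_n\to J$ (Theorem \ref{thm-current}) and the boundary term is $O(n^{2a})$ by \eqref{022801-22}, one gets $\sum_x\lang\mathfrak F_x\rang\le C(n+1)$ -- this is precisely the statement that the integrated temperature gradient across the chain is $O(1)$. The paper then combines this with the covariance/Green's-function computation (Proposition \ref{prop010803-22} and \eqref{eq:32b}, which themselves use Corollary \ref{prop031012-21aa} and Lemma \ref{lem-ful}) showing $\sum_x\lang\mathfrak F_x\rang=(D+o(1))\sum_x\lang p_x^2\rang+O\bigl((\sum_x\lang p_x^2\rang)^{1/2}\bigr)+O(1)$ with $D>0$, which forces $\frac{1}{n+1}\sum_x\lang p_x^2\rang\le C$. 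Your final step (controlling the potential energy by $\omega_0^{-2}\sum_x\lang p_x^2\rang+O(1)$ via \eqref{eq:69}, $\Theta\le1$, $\mu_j\ge\omega_0^2$, plus Proposition \ref{prop011012-21} for the squared means) is sound, but it presupposes the kinetic bound that your argument does not establish.
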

Before presenting the proof of Theorem \ref{thm011911-21z}
in Section \ref{sec7.3} we show some auxiliary results.
\begin{proposition}
\label{prop010803-22}
As  $n\to+\infty$ we have 
  \begin{equation}
    \label{eq:32a}
    \sum_{x=0}^n \lang q_x^2 \rang =
   \Big( G_{\omega_0}(0) + o(1)\Big) \sum_{x=0}^n \lang p_x^2 \rang + O(1) \left( \sum_{x=0}^n \lang p_x^2 \rang \right)^{1/2} + O(1).
  \end{equation}
\end{proposition}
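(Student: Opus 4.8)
The plan is to express $\lang q_x^2\rang = \lang S^{(q)}_{x,x}\rang + \lang \bar q_x^2\rang$ and analyze the two pieces separately, the covariance part via the spectral formula \eqref{eq:69} and the mean part via Proposition \ref{prop011012-21}. First I would sum \eqref{eq:69} against $\psi_j(x)\psi_{j'}(x)$ over $x$ to obtain
\[
\sum_{x=0}^n \lang S^{(q)}_{x,x}\rang = \sum_{j,j'=0}^n \frac{2\,\Theta(\mu_j,\mu_{j'})}{\mu_j+\mu_{j'}}\,\tilde F_{j,j'}\,\sum_{x=0}^n\psi_j(x)\psi_{j'}(x) = \sum_{j=0}^n \frac{\tilde F_{j,j}}{\mu_j},
\]
using orthonormality of the $\psi_j$ and $\Theta(\mu_j,\mu_j)=1$. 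Then, from \eqref{eq:49}, $\tilde F_{j,j} = \sum_y \psi_j(y)^2 \lang p_y^2\rang + (T_- - \lang p_0^2\rang)\psi_j(0)^2$, so inserting $\mu_j = \om_0^2 + \la_j$ and summing in $j$ gives
\[
\sum_{x=0}^n \lang S^{(q)}_{x,x}\rang = \sum_{y=0}^n \lang p_y^2\rang \sum_{j=0}^n \frac{\psi_j(y)^2}{\mu_j} + \big(T_- - \lang p_0^2\rang\big)\sum_{j=0}^n \frac{\psi_j(0)^2}{\mu_j} = \sum_{y=0}^n \lang p_y^2\rang\, G^n_{\om_0}(y,y) + \big(T_- - \lang p_0^2\rang\big) G^n_{\om_0}(0,0),
\]
recognizing the Green's function representation \eqref{Gpsi} with $\ell=0$.

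Next I would control $G^n_{\om_0}(y,y)$. The key analytic input is that $G^n_{\om_0}(y,y) = G_{\om_0}(0) + o(1)$ uniformly in $y$ as $n\to\infty$ (this is the bulk convergence \eqref{eq:77}, with the boundary behavior of the Neumann Green's function handled by a lemma of the type referenced as Lemma \ref{lem-ful} in \autoref{sec:green-fnc}; in particular $\sup_y |G^n_{\om_0}(y,y)|$ is bounded). This turns the first sum into $(G_{\om_0}(0)+o(1))\sum_y \lang p_y^2\rang$, which is the leading term claimed. For the correction term, $|T_- - \lang p_0^2\rang| = O(1/n)$ by Corollary \ref{thm021911-21} and $G^n_{\om_0}(0,0) = O(1)$, so that contribution is $O(1/n) = O(1)$. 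For the mean part, Proposition \ref{prop011012-21} gives $\sum_x \lang \bar q_x^2\rang \approx n^{2a}$ with $a\le 0$, hence $\sum_x\lang\bar q_x^2\rang = O(1)$; I would absorb this into the $O(1)$ term. (If one wants the middle term of \eqref{eq:32a} to appear honestly, one writes $\sum_x\lang\bar q_x^2\rang^{1/2}$-type bounds, but since $a\le 0$ it is simply $O(1)$; the stated form with the $\big(\sum_x\lang p_x^2\rang\big)^{1/2}$ term is the safe general bound one gets before knowing $a\le 0$, and in any case it is dominated by keeping it.)

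The main obstacle is the uniform-in-$y$ asymptotics of the diagonal Neumann Green's function $G^n_{\om_0}(y,y)$, including $y$ near the boundaries $0$ and $n$: one must show it converges to the whole-line value $G_{\om_0}(0)$ with an error $o(1)$ uniform in $y$, using the explicit eigenfunction expansion \eqref{Gpsi}–\eqref{laps} and comparing the Riemann sum $\frac{2}{n+1}\sum_j \cos^2(\pi j(2y+1)/(2(n+1)))/(\om_0^2 + 4\sin^2(\pi j/(2(n+1))))$ with the integral in \eqref{GR}; the oscillating factor $\cos^2$ averages to $1/2$, and the boundary corrections are controlled by the decay in \eqref{GR} together with the $\om_0>0$ gap in the denominator. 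This is precisely the content deferred to \autoref{sec:green-fnc}, so in this proof I would simply invoke that lemma. Everything else is a short computation once the Green's function identity above is in hand.
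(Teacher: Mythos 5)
Your overall skeleton (split $\lang q_x^2\rang$ into $\lang S^{(q)}_{x,x}\rang + \lang \bar q_x^2\rang$, sum \eqref{eq:69} in $j,j'$ to get $\sum_x\lang S^{(q)}_{x,x}\rang=\sum_y G^n_{\om_0}(y,y)\lang p_y^2\rang+(T_--\lang p_0^2\rang)G^n_{\om_0}(0,0)$, and dispose of the mean part via Proposition \ref{prop011012-21}) is exactly the paper's, but there is a genuine gap in the key analytic step. You assert that $G^n_{\om_0}(y,y)=G_{\om_0}(0)+o(1)$ \emph{uniformly in $y$}, including $y$ near $0$ and $n$. This is false: the Neumann boundary produces an image-charge contribution of order one on the diagonal near the endpoints, and Lemma \ref{lem-ful} only gives $G^n_{\om_0}(y,y)=G_{\om_0}(0)+\tilde H^{(n)}(y)+O(1/n)$ with $|\tilde H^{(n)}(y)|\le C/\chi_n^2(y)$, i.e.\ $\tilde H^{(n)}$ is summable in $y$ but \emph{not} small in a boundary layer of width $O(1)$ (e.g.\ $G^n_{\om_0}(0,0)$ is close to $G_{\om_0}(0)+G_{\om_0}(1)$, not to $G_{\om_0}(0)$).

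Because of this, the middle term $O(1)\big(\sum_x\lang p_x^2\rang\big)^{1/2}$ in \eqref{eq:32a} is not the dispensable ``safe general bound'' you take it to be: it is precisely the price of the boundary layer. With only $\sup_y|\tilde H^{(n)}(y)|=O(1)$ and summability, the error $\sum_y\tilde H^{(n)}(y)\lang p_y^2\rang$ is bounded by $C\sup_y\lang p_y^2\rang$, and at this stage of the argument one does \emph{not} yet know $\sup_y\lang p_y^2\rang=O(1)$ — that is Proposition \ref{prop031012-21a}, which comes after the energy bound, which in turn uses the present proposition, so invoking it here would be circular. The paper handles the term by writing $\lang p_y^2\rang=\lang p_0^2\rang+\sum_{x<y}\nabla\lang p_x^2\rang$, using $\lang p_0^2\rang=O(1)$ from \eqref{031911-21}, and applying Cauchy--Schwarz together with the gradient estimate \eqref{051012-21}, which yields exactly $O(1)\big(\sum_y\lang p_y^2\rang\big)^{1/2}$; your argument is missing this step, and without it (or an equivalent use of \eqref{051012-21}, e.g.\ via $\sup_y\lang p_y^2\rang\le\lang p_0^2\rang+\sqrt n\,(\sum_x(\nabla\lang p_x^2\rang)^2)^{1/2}$) the proof does not close.
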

\begin{proof}
  Since $\lang \bar q_y^2 \rang$ is of order $n^{2a}$, see
  \eqref{021012-21},
  and $a\le 0$ (cf  \eqref{a-b}) we only need to prove \eqref{eq:32} for
  $\sum_x \lang S^{(q)}_{x,x}\rang $. Using \eqref{eq:69} and then
  \eqref{eq:49} we obtain
  \begin{equation}
    \label{eq:31}
    \begin{split}
      &\sum_x \lang S^{(q)}_{x,x}\rang = \sum_j  \tilde  S^{(q)}_{j,j}
      = \sum_j \frac{1}{\mu_j} \tilde F_{j,j}
      = \sum_y \sum_j \frac{\psi_j(y)^2}{\mu_j} \lang p_y^2 \rang
      + (T_- - \lang p_0^2 \rang ) \sum_j \frac{\psi_j(0)^2}{\mu_j} \\
      &= \sum_y G^n_{\omega_0} (y,y) \lang p_y^2 \rang
      + (T_- - \lang p_0^2 \rang )  G_{n,\omega_0} (0,0),
    \end{split}
  \end{equation}
  where $G^n_{\omega_0}$ is the Green's function of
  $\om_0^2-\Delta_{\rm N}$, see Section \ref{sec2.6}.
  The second term on the utmost right hand side of  \eqref{eq:31} is
  of order $\frac 1n$. Concerning the first term,
thanks to Lemma \ref{lem-ful}, it equals 
  \begin{equation}
    \label{eq:56}
    \left(G_{\omega_0} (0) + o(1) \right)  \sum_{y=0}^{n} \lang
    p_y^2\rang+ \sum_{y=0}^{n} \tilde H^{(n)}(y)\lang
    p_y^2\rang  ,
  \end{equation}
where $|\tilde H^{(n)}(y)|$ satisfies estimate \eqref{042912-21}.
Hence,
  \begin{equation}
    \label{eq:57}
    \begin{split}
   \left|\sum_{y=0}^{n} \tilde H^{(n)}(y)\lang
    p_y^2\rang  \right| &\le  \sum_{y=0}^n |\tilde H_y^{(n)}|\left(\sum_{x=0}^{y-1}|\lang
    p_{x+1}^2\rang-\lang
                 p_x^2\rang|\right)\\
&
+ \lang
    p_0^2\rang \sum_{y=0}^{n}|\tilde H^{(n)}(y)|  .
    \end{split}
  \end{equation}
Thanks to \eqref{042912-21} we have 
$$
 H_*:=\sup_{n\ge1}\sum_{y=0}^{n}|\tilde H^{(n)}(y)|<+\infty.
$$
Thus, using 
\eqref{031911-21} we conclude that the second term on the right hand
side of \eqref{eq:57}
stays bounded, as $N\to+\infty$. The first term can be rewritten by
changing order of summation and then estimated using
the Cauchy-Schwarz inequality and bound \eqref{051012-21}
\begin{align}
\label{030703-22}
  & \sum_{x=0}^n |\lang
    p_{x+1}^2\rang-\lang
                 p_x^2\rang|\left( \sum_{y=x+1}^{n}|\tilde
    H_y^{(n)}|\right) \notag\\
&
   \le   \left\{ \sum_{x=0}^n \left( \sum_{y=x+1}^{n}|\tilde H_y^{(n)}|\right)^2\right\}^{1/2}\left\{ \sum_{x=0}^n [\lang
    p_{x+1}^2\rang-\lang
    p_x^2\rang]^2\right\}^{1/2}\\
  &
    \le H_*\sqrt{n}\left\{ \sum_{x=0}^n [\lang
    p_{x+1}^2\rang-\lang
                 p_x^2\rang]^2\right\}^{1/2}\le C  \left(\sum_{y=0}^n \lang
    p_y^2\rang\right)^{1/2}\notag
\end{align}
and the conclusion of the proposition follows.
  % given an increasing sequence $k_n \to \infty$ such that $k_n << \sqrt n$
  % we wite it as the sum of three terms $I_1+I_2+I_3$. Here  $I_1$ -
  % the  bulk term - covers the summation over $k_n< x<n-k_n$, while
  % $I_2$ and $I_3$ correspond to the left and right
  %  boundary terms and cover the summations over $0\le x\le k_n$ and
  %  $n-k_n\le x\le n$, respectively. 

%\textcolor{red}{\bf DOTAD}

\end{proof}

Similarly we obtain
\begin{proposition}
As  $n\to+\infty$ we have 
  \begin{equation}
    \label{eq:32b}
   \begin{split}
  &\sum_{x=0}^n \lang q_x q_{x-1} \rang =
   \left( G_{\omega_0}(1) + o(1)\right) \sum_{x=0}^n \lang p_x^2 \rang
   + O(1) \left( \sum_{x=0}^n \lang p_x^2 \rang \right)^{1/2} + 
   O(1),\\
&\sum_{x=0}^n \lang q_x q_{x+1} \rang =
   \left( G_{\omega_0}(1) + o(1)\right) \sum_{x=0}^n \lang p_x^2 \rang
   + O(1) \left( \sum_{x=0}^n \lang p_x^2 \rang \right)^{1/2} + 
   O(1),\\
&
\sum_{x=0}^n \lang q_{x+1} q_{x-1} \rang =
   \left( G_{\omega_0}(2) + o(1)\right) \sum_{x=0}^n \lang p_x^2 \rang
   + O(1) \left( \sum_{x=0}^n \lang p_x^2 \rang \right)^{1/2} + 
   O(1).
\end{split}
  \end{equation}
\end{proposition}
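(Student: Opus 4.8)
The plan is to reduce each of the three asymptotics to Proposition~\ref{prop010803-22} by means of \emph{exact} discrete integration-by-parts identities, so that only the diagonal entries $\tilde S^{(q)}_{j,j}$ — on which the factor $\Theta(\mu_j,\mu_j)=1$ — ever enter. As in the proof of Proposition~\ref{prop010803-22} I first write $q_x=\bar q_x+q_x'$. By \eqref{021012-21} and the Cauchy--Schwarz inequality each of $\big|\sum_x\lang\bar q_x\bar q_{x\pm1}\rang\big|$, $\big|\sum_x\lang\bar q_{x+1}\bar q_{x-1}\rang\big|$ is at most $\sum_x\lang\bar q_x^2\rang\approx n^{2a}=O(1)$ (as $a\le0$), so the mean parts sit inside the $O(1)$ error and only the fluctuation sums $\sum_x\lang S^{(q)}_{x,x\pm1}\rang$, $\sum_x\lang S^{(q)}_{x+1,x-1}\rang$ remain. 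I also record two a priori bounds used throughout: combining \eqref{051012-21} with \eqref{031911-21} and Cauchy--Schwarz gives $\sup_x\lang p_x^2\rang\le C+C\big(\sum_y\lang p_y^2\rang\big)^{1/2}$; and dominating $\Sigma_2(\lang\frak p^2\rang)$ by $\|\Sigma_2(\lang\frak p^2\rang)\|\,\mathrm{Id}$ inside \eqref{010612-21aa} gives $\lang S^{(q)}_{x,x}\rang\le\|\Sigma_2(\lang\frak p^2\rang)\|\,Q_{qq}(x,x)$, where $Q$ solves $AQ+QA^{T}=\mathrm{Id}$ and $Q_{qq}=\tfrac1{4\gamma}\big[\mathrm{Id}+(1+4\gamma^2)(\om_0^2-\Delta_{\rm N})^{-1}\big]$ has operator norm $O(1)$; since $\|\Sigma_2(\lang\frak p^2\rang)\|\le 4\gamma\big(T_-+\sup_x\lang p_x^2\rang\big)$ this yields $\sup_x\lang S^{(q)}_{x,x}\rang=O(1)+O(1)\big(\sum_y\lang p_y^2\rang\big)^{1/2}$, so all the single boundary entries arising below are absorbed in the $O(1)\big(\sum_y\lang p_y^2\rang\big)^{1/2}+O(1)$ remainder.

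For the nearest-neighbour sums I would use $q_xq_{x\pm1}=\tfrac12\big(q_x^2+q_{x\pm1}^2-(q_{x\pm1}-q_x)^2\big)$; summing over $x$ with the Neumann conventions (and discarding, by the bound above, the two surviving boundary terms) reduces $\sum_x\lang S^{(q)}_{x,x\pm1}\rang$ to $\sum_x\lang(q_x')^2\rang$ — exactly the quantity evaluated in Proposition~\ref{prop010803-22} — minus $\tfrac12\sum_x\lang(\nabla q_x')^2\rang$. Now $\sum_x\lang(\nabla q_x')^2\rang=\mathrm{Tr}\big((-\Delta_{\rm N})\lang S^{(q)}\rang\big)=\sum_j\lambda_j\tilde S^{(q)}_{j,j}$, and \eqref{eq:69}, \eqref{eq:49} on the diagonal give $\tilde S^{(q)}_{j,j}=\mu_j^{-1}\tilde F_{j,j}$, so with $\lambda_j/\mu_j=1-\om_0^2/\mu_j$, $\sum_j\psi_j(y)^2=1$ and $\sum_j\mu_j^{-1}\psi_j(y)^2=G^n_{\om_0}(y,y)$,
\[
\sum_j\lambda_j\tilde S^{(q)}_{j,j}=\sum_y\lang p_y^2\rang-\om_0^2\sum_y G^n_{\om_0}(y,y)\lang p_y^2\rang+O\!\Big(\tfrac1n\Big).
\]
The last sum is precisely the object controlled inside the proof of Proposition~\ref{prop010803-22} (through Lemma~\ref{lem-ful}, summation by parts and \eqref{051012-21}), hence it equals $(G_{\om_0}(0)+o(1))\sum_y\lang p_y^2\rang+O(1)\big(\sum_y\lang p_y^2\rang\big)^{1/2}+O(1)$. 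Collecting terms and using the lattice identity $2G_{\om_0}(1)=(2+\om_0^2)G_{\om_0}(0)-1$ (obtained by evaluating $(\om_0^2-\Delta)G_{\om_0}=\delta_0$ at $x=0$) gives the first two claimed asymptotics.

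For $\sum_x\lang q_{x+1}q_{x-1}\rang$ I would start instead from the exact identity $q_{x+1}q_{x-1}=q_x^2+q_x\Delta q_x-(\nabla q_x)(\nabla q_{x-1})$. Upon summation the first term is again $\sum_x\lang(q_x')^2\rang$, the second is $\mathrm{Tr}(\Delta_{\rm N}\lang S^{(q)}\rang)=-\sum_j\lambda_j\tilde S^{(q)}_{j,j}$ (already handled), and $\sum_x\lang(\nabla q_x')(\nabla q_{x-1}')\rang=\mathrm{Tr}(\hat B\lang S^{(q)}\rang)$, where $\hat B$ is a symmetric matrix coinciding with $-\Delta_{\rm N}-\tfrac12\Delta_{\rm N}^2$ away from the endpoints and differing from it only by a bounded-rank, boundary-supported perturbation (whose trace against $\lang S^{(q)}\rang$ is $O(1)\big(\sum_y\lang p_y^2\rang\big)^{1/2}$ by the a priori bound of the first paragraph). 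The bulk part contributes $\sum_j\big(\lambda_j-\tfrac12\lambda_j^2\big)\tilde S^{(q)}_{j,j}=\sum_j\tfrac{\lambda_j-\lambda_j^2/2}{\mu_j}\tilde F_{j,j}$; writing $\tfrac{\lambda-\lambda^2/2}{\mu}=-\tfrac12\mu+(1+\om_0^2)-\tfrac{\om_0^2+\om_0^4/2}{\mu}$ with $\lambda=\mu-\om_0^2$, and using $\sum_j\mu_j\psi_j(y)^2=\om_0^2-(\Delta_{\rm N})(y,y)$ ($=\om_0^2+2$ in the bulk), this equals $\big[\tfrac12\om_0^2-\om_0^2(1+\tfrac12\om_0^2)G_{\om_0}(0)+o(1)\big]\sum_y\lang p_y^2\rang+O(1)\big(\sum_y\lang p_y^2\rang\big)^{1/2}+O(1)$, which by the identity of the previous paragraph equals $\big[-\om_0^2G_{\om_0}(1)+o(1)\big]\sum_y\lang p_y^2\rang+\dots$. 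Assembling the three pieces and using $G_{\om_0}(2)=(2+\om_0^2)G_{\om_0}(1)-G_{\om_0}(0)$ gives the third asymptotics.

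The real work, and the main obstacle, is the boundary bookkeeping behind these manipulations: one must verify that each discrete identity holds up to corrections supported within $O(1)$ of $x=0$ and $x=n$ that are $O(1)$ or $O(1)\big(\sum_y\lang p_y^2\rang\big)^{1/2}$ — in particular that $\hat B$ really differs from $-\Delta_{\rm N}-\tfrac12\Delta_{\rm N}^2$ only by a bounded-rank boundary perturbation, and that the a priori bound $\sup_x\lang S^{(q)}_{x,x}\rang=O(1)+O(1)\big(\sum_y\lang p_y^2\rang\big)^{1/2}$ holds as stated — after which the algebra collapses onto Proposition~\ref{prop010803-22} together with the two lattice Green's-function relations for $G_{\om_0}(1)$ and $G_{\om_0}(2)$ recorded above.
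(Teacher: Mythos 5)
Your proposal is correct, and it takes a genuinely different route from the paper. The paper expands $\sum_x\lang S^{(q)}_{x,x-1}\rang$ (and the shifted analogues) directly in the Neumann eigenbasis via \eqref{eq:49}, \eqref{eq:69}, introducing the overlap coefficients $\Upsilon_{j,j'}(\ell)$, showing by trigonometric identities that the off-diagonal ones are $O(1/n)$, identifying the diagonal contribution $\frac1{n+1}\sum_j\mu_j^{-1}\cos(\pi j\ell/(n+1))$ as a Riemann sum for $G_{\om_0}(\ell)$, and controlling the remainders $R_n,\tilde R_n$ with the oscillating-kernel bounds of Lemma~\ref{lm012812-21} together with the $H^1$ estimate \eqref{051012-21} — i.e.\ the same machinery as for $\ell=0$, applied uniformly in the shift. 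You instead eliminate the off-diagonal spectral data altogether: the polarization identity and the identity $q_{x+1}q_{x-1}=q_x^2+q_x\Delta q_x-\nabla q_x\nabla q_{x-1}$ reduce everything to $\sum_x\lang (q'_x)^2\rang$ (Proposition~\ref{prop010803-22}) plus traces of polynomials in $\Delta_{\rm N}$ against $\lang S^{(q)}\rang$, which involve only the diagonal entries $\tilde S^{(q)}_{j,j}=\mu_j^{-1}\tilde F_{j,j}$ where $\Theta\equiv1$; the constants $G_{\om_0}(1),G_{\om_0}(2)$ then emerge from the lattice relations $2G_{\om_0}(1)=(2+\om_0^2)G_{\om_0}(0)-1$ and $G_{\om_0}(2)=(2+\om_0^2)G_{\om_0}(1)-G_{\om_0}(0)$, and I checked that your bookkeeping of the coefficients (including $-\tfrac12\sum_j\mu_j\tilde F_{j,j}+(1+\om_0^2)\sum_j\tilde F_{j,j}-(\om_0^2+\tfrac12\om_0^4)\sum_j\mu_j^{-1}\tilde F_{j,j}$ and the bulk symbol identity $-1+2\cos\xi-\cos2\xi=4\sin^2(\xi/2)\cos\xi$ behind your matrix $\hat B$) is exact. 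The price you pay is boundary bookkeeping plus a new a~priori bound $\sup_x\lang S^{(q)}_{x,x}\rang=O(1)+O(1)\big(\sum_y\lang p_y^2\rang\big)^{1/2}$, and your Lyapunov-comparison argument for it is sound: $\Sigma_2(\lang\frak p^2\rang)\preceq\|\Sigma_2\|\,{\rm Id}$ propagates through \eqref{010612-21aa}, the solution of $AQ+QA^T={\rm Id}$ indeed has $Q_{qp}=Q_{pq}=-\tfrac12{\rm Id}$, $Q_{pp}=\tfrac1{4\gamma}({\rm Id}+\om_0^2-\Delta_{\rm N})$, $Q_{qq}=\tfrac1{4\gamma}\big[{\rm Id}+(1+4\gamma^2)(\om_0^2-\Delta_{\rm N})^{-1}\big]$ with norm $O(1)$, and $\sup_x\lang p_x^2\rang\le C+C\big(\sum_y\lang p_y^2\rang\big)^{1/2}$ follows from \eqref{051012-21} and \eqref{031911-21}, all available at this point without circularity. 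Note that you still rely, as the paper does, on Lemma~\ref{lem-ful}, summation by parts and \eqref{051012-21} to handle $\sum_y G^n_{\om_0}(y,y)\lang p_y^2\rang$; what your route buys is the avoidance of the $\Upsilon_{j,j'}$/off-diagonal kernel analysis, while the paper's route is more uniform across the shifts $\ell=0,1,2$ and avoids the boundary-perturbation and sup-bound work.
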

\begin{proof}
We only prove the first formula of \eqref{eq:32b}. The second and
third ones
follow analogously.
 Using \eqref{eq:49} and \eqref{eq:69} we obtain
  \begin{equation}
\label{010703-22a}
    \begin{split}
      &\sum_{x=0}^n   \lang S^{(q)}_{x,x-1}\rang = \sum_{x=0}^n \sum_{j,j'} \tilde S^{(q)}_{j,j'}
      \psi_j(x) \psi_{j'}(x-1)\\
      &= \sum_{j,j'} \frac{ 2  \Theta(\mu_j,\mu_{j'}) \Upsilon_{j,j'}(1)}{\mu_{j} + \mu_{j'}}
      \sum_{y}  \psi_{j}(y) \psi_{j'}(y)  \lang p_y^2 \rang \\
&
+\Big(
      T_--  \lang p_0^2\rang \Big)\sum_{j,j'} \frac{ 2
        \Theta(\mu_j,\mu_{j'}) \Upsilon_{j,j'} (1)\psi_j(0)\psi_{j'}(0)
      }{\mu_{j} + \mu_{j'}}.
\end{split}
  \end{equation}
Here, we have used the convention   $\lang S^{(q)}_{0,-1}\rang =\lang
S^{(q)}_{0,0}\rang$ and the notation 
\begin{equation}
\label{Ups}
\begin{split}
&\Upsilon_{j,j'}(1):=\frac12\sum_{x=0}^n \big[\psi_j(x)
\psi_{j'}(x-1)+\psi_{j'}(x) \psi_{j}(x-1)\big],\\
& 
\Upsilon_{j,j'}(2):=\frac12\sum_{x=0}^n \big[\psi_j(x-1)
\psi_{j'}(x+1)+\psi_{j'}(x+1) \psi_{j}(x-1)\big].
\end{split}
\end{equation}
A simple application of trigonometric
identities leads to the following formulas.
\begin{lemma}
  For $\ell=1,2$
  \begin{align}
\label{020703-22}
     &\Upsilon_{0,j'}(\ell)=0,\quad j'=1,\ldots,n,\notag\\
   &\Upsilon_{j,j'}(\ell)=\frac{1-(-1)^{j+j'} }{ n+1} \cos\left(\frac{\pi
    j'\ell}{2(n+1)}\right) \cos\left(\frac{\pi
     j\ell}{2(n+1)}\right) ,\quad j,j'=1,\ldots,n,\,j\not=j'\\
     &
       \Upsilon_{j,j}(\ell)
    =\cos\left(\frac{\pi\ell
   j}{n+1}\right) ,\quad j=0,\ldots,n.\notag
  \end{align}
  \end{lemma}
This result implies that the second term on the right hand side of
\eqref{010703-22a} is of order $O(1/n)$.
  The first term equals 
\begin{align*}
&
\sum_{j} \frac{    \Upsilon_{j,j}(1)}{\mu_{j}  }
      \sum_{y}  \psi_{j}^2(y)   \lang p_y^2 \rang +R_n,\quad \mbox{where} \\
&
R_n:=
      \sum_{y}  \rho_n(y) \lang p_y^2
  \rang\quad\mbox{and}\\
&
 \rho_n(y):=\sum_{j\not=j'} \frac{ 2  \Theta(\mu_j,\mu_{j'}) \Upsilon_{j,j'}(1)}{\mu_{j} + \mu_{j'}}
         \psi_{j}(y) \psi_{j'}(y)  .
\end{align*}
Using formula  for $\psi_{j}(y)$, see \eqref{laps}, and  for
$\Upsilon_{j,j}$  we can rewrite the
above expression in the form
\begin{align*}
&
     \tilde g_n(1) \sum_{y}   \lang p_y^2 \rang +R_n+\tilde R_n+o(1) \sum_{y}   \lang p_y^2 \rang,\quad \mbox{where} \\
&
\tilde g_n(1) :=\frac{1}{n+1}\sum_{j} \frac{ 1}{\mu_{j}  }\cos\left(\frac{\pi
   j}{n+1}\right),\\
&
\tilde R_n:=
      \sum_{y}  \tilde \rho_n(y) \lang p_y^2 \rang \quad\mbox{and}\\
&
\tilde \rho_n(y):=\frac{2}{n+1}\sum_{j} \frac{ 1 }{\mu_{j}  } \cos\left(\frac{\pi
   j}{n+1}\right)\cos\left(\frac{\pi
   j(2y+1)}{n+1}\right).
\end{align*}
A simple calculation shows that $\tilde g_n(1) $ is the Riemann sum
approximation of order $1/n$ of the integral defining $G_{\om_0}(1)$,
see \eqref{GR}. An application of Lemma \ref{lm012812-21} yields 
that
 both $| \rho_n(y) |$ and $|\tilde \rho_n(y) |$ satisfy the bound \eqref{042912-21}.
Repeating the estimates done in \eqref{eq:57} and \eqref{030703-22} we
conclude that
$$
|R_n| +|\tilde R_n|\le C  \left(\sum_{y=0}^n \lang
    p_y^2\rang\right)^{1/2},
$$
where the constant $C$ is independent of $n$. The conclusion of the
lemma follows.
\end{proof}

%\textcolor{red}{\bf DOTAD}
\subsection{Proof of Theorem \ref{thm011911-21z}}
\label{sec7.3}

Using \eqref{eq:32}, \eqref{eq:32b}  and \eqref{D1} we obtain
\begin{equation}
  \label{eq:58}
  \begin{split}
    \sum_{x=0}^n \lang \mathfrak F_x \rang &=
    \sum_{x=0}^n \left[\lang p_x^2 \rang + \lang q_xq_{x+1} \rang + \lang q_xq_{x-1}\rang
      -  \lang q_{x-1}q_{x+1} \rang - (1+\omega^2_0)  \lang q_x^2 \rang
    \right]\\
    &= \left(D + o(1)\right) \sum_{x=0}^n \lang p_x^2 \rang + O(1) \left( \sum_{x=0}^n \lang p_x^2 \rang \right)^{1/2} + O(1).
  \end{split}
\end{equation}
On the other hand, from \eqref{072606-21a} and \eqref{022801-22}, we conclude
\marginpar{\red{Corr.}}
\begin{equation}
  \label{eq:29}
  \begin{split}
    &\sum_{x=0}^n \lang \mathfrak F_x \rang = (n +1) \lang \mathfrak F_0 \rang
    - 2\gamma J_n n\red{(n+1)},\\
&
+\frac{
  n^{a}}{\theta_n}\int_0^{\theta_{n}}\cF(t/\theta_n) \left(
  \bar q_{n-1}(t)-\bar q_{n}(t) \right)\dd t \\
&
\le (n+1) \lang p_0^2 \rang - 2\gamma n^2 J_n +O(n^{2a})\le C' (n+1)
  \end{split}
\end{equation}
that gives the bound
\begin{equation}
  \label{eq:30}
  \frac 1{n+1} \sum_{x=0}^n \lang p_x^2 \rang \le C,\quad n=1,2,\ldots.
\end{equation}
The conclusion of the theorem then follows from the above estimate and
Proposition \ref{prop010803-22}.
\qed

 \subsection{$H^1$ bound on the energy  density}
\label{sec9}

As a direct consequence of \eqref{eq:30} and Corollary \ref{prop031012-21aa} we have 
\begin{corollary}
  \label{cor011312-21}
 There exists $C>0$ such that
\begin{equation}
\label{071512-21}
\sum_{x=0}^{n-1}[\lang p_x^2\rang-\lang p_{x+1}^2\rang]^2 \le C 
\end{equation}
and
 \begin{equation}
     \label{031312-21a}
 \sup_{x=0,\ldots,n}\lang p_x^2\rang_n \le Cn^{1/2},\quad n=1,2,\ldots
 \end{equation}
  \end{corollary}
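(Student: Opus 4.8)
The plan is to obtain both estimates as elementary consequences of the energy bound \eqref{eq:30} and the discrete gradient bound of Corollary \ref{prop031012-21aa}, exactly as the statement announces.

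First I would prove \eqref{071512-21}. Corollary \ref{prop031012-21aa} states that
\[
\sum_{x=0}^{n-1}\big[\lang p_x^2\rang-\lang p_{x+1}^2\rang\big]^2 \le \frac{C}{n+1}\sum_{x=0}^n \lang p_x^2\rang ,
\]
while \eqref{eq:30} gives $\sum_{x=0}^n \lang p_x^2\rang \le C(n+1)$. Substituting the second inequality into the first, the factors $n+1$ cancel and we obtain a bound that is uniform in $n$, which is precisely \eqref{071512-21}. This is a one-line computation.

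Next I would deduce the $\sup$-bound \eqref{031312-21a} by a discrete Sobolev (Gagliardo--Nirenberg) argument. Set $\bar a := \frac{1}{n+1}\sum_{y=0}^n \lang p_y^2\rang$; since the average is attained from below, there is an index $y_0$ with $\lang p_{y_0}^2\rang \le \bar a$. Then for every $x\in\{0,\dots,n\}$, telescoping from $y_0$ to $x$ and bounding the increments by their absolute values,
\[
\lang p_x^2\rang \le \bar a + \sum_{y=0}^{n-1}\big|\lang p_{y+1}^2\rang-\lang p_y^2\rang\big| \le \bar a + \sqrt{n}\,\Big(\sum_{y=0}^{n-1}\big[\lang p_{y+1}^2\rang-\lang p_y^2\rang\big]^2\Big)^{1/2},
\]
the last step by the Cauchy--Schwarz inequality. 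Now $\bar a\le C$ by \eqref{eq:30} and the sum under the square root is bounded by \eqref{071512-21}; hence the right-hand side is at most $C+C\sqrt n\le C'\sqrt n$ for all $n\ge1$, which is \eqref{031312-21a}.

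There is essentially no genuine obstacle here: the result is a routine combination of two already-established estimates with a telescoping identity and Cauchy--Schwarz. The only point that needs a little care is the choice of the reference index $y_0$ at which the sequence $(\lang p_y^2\rang)_{y}$ sits below its own average, so that a single telescoping bound controls $\lang p_x^2\rang$ from above simultaneously for all $x$.
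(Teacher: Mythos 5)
Your proposal is correct and follows essentially the same route as the paper: \eqref{071512-21} is obtained by the identical one-line combination of Corollary \ref{prop031012-21aa} with the energy bound \eqref{eq:30}, and \eqref{031312-21a} by the same telescoping plus Cauchy--Schwarz argument. The only (inessential) difference is the anchor point: you telescope from an index where $\lang p_{y_0}^2\rang$ is below the average, which is bounded by \eqref{eq:30}, whereas the paper telescopes from $x=0$ and uses that $\lang p_0^2\rang$ stays bounded (by Corollary \ref{thm021911-21}); both anchors work equally well.
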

  \proof Estimate \eqref{071512-21} is obvious in light of \eqref{051012-21}.
   To prove \eqref{031312-21a} note that
   \begin{align*}
 &\lang p_x^2\rang\le \sum_{y=1}^n|\lang p_y^2\rang-\lang
     p_{y-1}^2\rang|+\lang p_0^2\rang\\
     &
       \le \sqrt{n}\left\{\sum_{y=1}^n[\lang p_y^2\rang-\lang
      p_{y-1}^2\rang]^2\right\}^{1/2}+\lang p_0^2\rang
       \le C\sqrt{n} +\lang p_0^2\rang.
   \end{align*}
  \qed

%\textcolor{red}{\bf DOTAD}

The following result  stregthens    the above estimates.
   \begin{proposition}
\label{prop031012-21a}
There exists $C>0$ such that
\begin{equation}
\label{051012-21a}
\begin{split}
&\sum_{x=0}^{n-1}[\lang p_x^2\rang-\lang p_{x+1}^2\rang]^2 \le
\frac{C}{n+1},\quad n=1,2,\ldots,\\
&
\sup_{x=0,\ldots,n}\lang p_x^2\rang\le C.
\end{split}
\end{equation}
\end{proposition}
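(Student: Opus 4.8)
The statement to be proved is a bootstrap improvement of Corollary~\ref{cor011312-21}: we already know from \eqref{eq:30} that $\frac1{n+1}\sum_x\lang p_x^2\rang\le C$, and from Corollary~\ref{prop031012-21aa} that $\sum_{x=0}^{n-1}[\lang p_x^2\rang-\lang p_{x+1}^2\rang]^2\le \frac{C}{n+1}\sum_x\lang p_x^2\rang$. Chaining these two gives only $\sum_x[\nabla\lang p_x^2\rang]^2\le C$, hence the crude $\sup_x\lang p_x^2\rang\le C\sqrt n$ of \eqref{031312-21a}. To break the circularity I would feed the pointwise information back into the energy identity rather than into the gradient bound. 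The plan is to revisit the proof of Theorem~\ref{thm011911-21z}: there the only place the $\sqrt{\sum_x\lang p_x^2\rang}$ error term entered was through the estimates \eqref{eq:57}--\eqref{030703-22}, where a factor $\sqrt n$ was produced by $H_*\sqrt n\cdot(\sum_x[\nabla\lang p_x^2\rang]^2)^{1/2}$. Now that we have the a priori bound \eqref{eq:30}, the quantity $\sum_x[\nabla\lang p_x^2\rang]^2$ is itself $O(1)$ by \eqref{071512-21}, so that error term is already $O(\sqrt n)$, i.e. \emph{lower order} than the leading $(D+o(1))\sum_x\lang p_x^2\rang\approx n$; but more to the point, I would re-run the comparison between \eqref{eq:58} and \eqref{eq:29} keeping track of the constants to get $\sum_x\lang p_x^2\rang = D^{-1}\big[(n+1)\lang p_0^2\rang - 2\gamma n^2 J_n\big] + O(\sqrt n)$, which already pins down the sum up to $O(\sqrt n)$ and will be used again in Proposition~\ref{prop-loceq}.

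For the first inequality of \eqref{051012-21a} the plan is to go back to Proposition~\ref{prop031012-21} and Proposition~\ref{prop021012-21}, which give
\[
\sum_{x=0}^{n-1}[\lang p_x^2\rang-\lang p_{x+1}^2\rang]^2\le c_*^{-1}\sum_{x=0}^n G_x^{(n)}\lang p_x^2\rang,\qquad \sup_x|G_x^{(n)}|\le\frac{C}{n}.
\]
Thus $\sum_{x=0}^{n-1}[\nabla\lang p_x^2\rang]^2\le \frac{C}{c_* n}\sum_x\lang p_x^2\rang$, and now we may insert the just-established bound $\sum_x\lang p_x^2\rang\le C(n+1)$ coming from \eqref{eq:30}: this yields $\sum_{x=0}^{n-1}[\nabla\lang p_x^2\rang]^2\le \frac{C}{n+1}\cdot\frac{\,C(n+1)\,}{?}$ — wait, that only recovers $O(1)$, so a single pass is not enough. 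The genuine improvement requires the sharper, cancellation-aware form of $G_x^{(n)}$: recall $G_x^{(n)}=(T_--\lang p_0^2\rang)M_{0,x}+\lang\bar p_x^2\rang$, where by Corollary~\ref{thm021911-21} $|T_--\lang p_0^2\rang|\le C/n$ and by Proposition~\ref{prop011012-21} $\sum_x\lang\bar p_x^2\rang\approx 1/n$. So in the quadratic form $\sum_x G_x^{(n)}\lang p_x^2\rang$ I would \emph{not} bound $\lang p_x^2\rang$ by its supremum; instead write $\lang p_x^2\rang = \lang p_0^2\rang + \sum_{y<x}\nabla\lang p_y^2\rang$ and use $\sum_x M_{0,x}\le C$ together with $\sum_x\lang\bar p_x^2\rang\le C/n$, so that $\sum_x G_x^{(n)}\lang p_x^2\rang\le \frac{C}{n}\lang p_0^2\rang\cdot O(1)+\big(\tfrac Cn\cdot O(1)+O(\tfrac1n)\big)\cdot\sqrt n\,\big(\sum_x[\nabla\lang p_x^2\rang]^2\big)^{1/2}$; absorbing the $\big(\sum_x[\nabla]^2\big)^{1/2}$ term on the left via Young's inequality gives $\sum_{x=0}^{n-1}[\nabla\lang p_x^2\rang]^2\le \frac{C}{n}$, which is precisely the first claim.

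Once $\sum_{x=0}^{n-1}[\lang p_x^2\rang-\lang p_{x+1}^2\rang]^2\le C/(n+1)$ is in hand, the second claim $\sup_x\lang p_x^2\rang\le C$ follows exactly as in the proof of \eqref{031312-21a} but with the improved gradient bound: for any $x$,
\[
\lang p_x^2\rang\le \lang p_0^2\rang+\sum_{y=1}^{n}\big|\lang p_y^2\rang-\lang p_{y-1}^2\rang\big|\le \lang p_0^2\rang+\sqrt{n}\Big(\sum_{y=1}^{n}[\lang p_y^2\rang-\lang p_{y-1}^2\rang]^2\Big)^{1/2}\le \lang p_0^2\rang+\sqrt n\cdot\sqrt{\frac{C}{n+1}},
\]
and $\lang p_0^2\rang\le T_-+C/n$ by Corollary~\ref{thm021911-21}, so the right-hand side is bounded uniformly in $n$. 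The main obstacle, as indicated, is the near-circularity: the naive substitution of the $O(n)$ energy bound into the gradient estimate only reproduces $O(1)$, so the argument must exploit the extra $1/n$ coming jointly from $|T_--\lang p_0^2\rang|=O(1/n)$ and $\sum_x\lang\bar p_x^2\rang=O(1/n)$, combined with a Young-inequality absorption of the gradient term — that bookkeeping step is where the real work lies, everything else being routine.
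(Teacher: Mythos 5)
Your argument is correct and is essentially the paper's own bootstrap: both exploit the refined form $G_x^{(n)}=(T_--\lang p_0^2\rang)M_{0,x}+\lang \bar p_x^2\rang$ together with Corollary \ref{thm021911-21}, Proposition \ref{prop011012-21} and the Cauchy--Schwarz/telescoping link between $\sup_x\lang p_x^2\rang$ and $\sum_x[\nabla\lang p_x^2\rang]^2$. The only difference is bookkeeping: you substitute the telescoped representation into the quadratic form and absorb via Young's inequality to close the loop in $D_n$ first, whereas the paper closes the equivalent coupled system in the variable $\sup_x\lang p_x^2\rang$ (your opening detour through Theorem \ref{thm011911-21z} is unnecessary but harmless).
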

\proof
From \eqref{031012-21} and  Propositions \ref{prop031012-21} and \ref{prop011012-21}
there exists $C>0$ such that
\begin{align}
  \label{051012-21bb}
  & \sum_{x=0}^{n-1}\Big(\lang p_x^2\rang-\lang p_{x+1}^2\rang\Big)^2   \le
C\left|\sum_{x=0}^n G_x^{(n)}\lang p_x^2\rang\right|\notag\\
& \le \frac{C}{n+1} +C\sum_{x=0}^n  \lang \bar p_x\rang^2 \lang p_x^2\rang\\
&  \le \frac{C}{n+1} +C \sup_x\lang p_x^2\rang \sum_{x=0}^n  \lang \bar p_x\rang^2
  \le \frac{C}{n+1} +\frac{C}{n+1}  \sup_x\lang p_x^2\rang,
\end{align}
% By Proposition \ref{prop011012-21}
% there exists $C>0$ such that
% \begin{align}
%   \label{051012-21bc}
%   \sum_{x=0}^{n-1}\Big(\lang p_x^2\rang-\lang p_{x+1}^2\rang\Big)^2   
% \le \frac{C}{n+1} +\frac{C}{n+1}  \sup_x\lang
%   p_x^2\rang, 
% \end{align}
% $x=0,\ldots,n$, $n\ge1$.
Using the Cauchy-Schwarz inequality we conclude
\begin{align}
  \label{051012-21bd}
& \sup_{x}\lang
  p_x^2\rang\le \lang
  p_0^2\rang +\sum_{x=0}^{n-1}\Big|\lang p_x^2\rang-\lang p_{x+1}^2\rang\Big|\notag\\
&
\le  \lang
  p_0^2\rang +\sqrt{n}\left\{\sum_{x=0}^{n-1}\Big(\lang p_x^2\rang-\lang p_{x+1}^2\rang\Big)^2 \right\}^{1/2}  
\end{align}
Denote
$
D_n:=\sum_{x=0}^{n-1}\Big(\lang p_x^2\rang-\lang p_{x+1}^2\rang\Big)^2.
$
We can summarize the inequalities obtained as follows: there exists
$C>0$ such that
\begin{align}
  \label{051012-21bddd}
&
D_n\le \frac{C}{n+1} +\frac{C}{n+1}  \sup_x\lang
  p_x^2\rang, \\
& \sup_{x}\lang
  p_x^2\rang\le \lang
  p_0^2\rang +\sqrt{n+1}D_n^{1/2} \le\lang
  p_0^2\rang +C+C \sup_x\lang
  p_x^2\rang^{1/2},\notag
\end{align}
for all $n=1,2,\ldots$.
% Hence
% \begin{align}
%   \label{051012-21bdd}
% D_n\le \frac{C}{n+1} +\frac{C \lang
%   p_0^2\rang }{n+1}  + \frac{C}{ \sqrt{n+1}}D_n^{1/2}
% \end{align}
% or equivalently
% \begin{align}
%   \label{051012-21be}
% D_n\le \frac{A_n}{n+1} + \frac{B}{ \sqrt{n+1}}D_n^{1/2},
% \end{align}
% where $A_n:=C(1+ \lang
%   p_0^2\rang_n) $ and $B:=C$
Thus the second estimate of \eqref{051012-21a} follows, which in turn
implies the first estimate of
\eqref{051012-21a}  as well.
% \begin{align}
%   \label{051012-21bf}
% (n+1)D_n\le A_n +  B[(n+1)D_n]^{1/2}.
% \end{align}
% Since $\sup_{n\ge1}A_n<+\infty$,  the above implies that
% $$
% \sup_{n\ge1}(n+1)D_n<+\infty
% $$
% and the first estimate \eqref{051012-21a} follows. To conclude the
% second one one can apply \eqref{051012-21bd}.
\qed

\section{Convergence of the energy density}
\label{sec-converg}

\subsection{Identification of the macroscopic   energy density
  limit. Proof of Theorem \ref{th1}}
% We have
% $$
% \ll \tilde F_x\gg=\ll p_x^2\gg-\om_0^2 \ll q_x^2\gg-\om_0^2 \ll q_xq_{x-1}\gg
% $$
Let
$$
e_n(u):=\lang p_x^2\rang,\quad u\in\Big[\frac{x}{n+1},
\frac{x+1}{n+1}\Big),\quad x=0,\ldots,n.
$$
By Proposition \ref{prop031012-21a} and Corollary \ref{thm011911-21z}  we have
$$
\int_0^1e_n^2(u)\dd u \le
\Big(\sup_{x}\lang p_x^2\rang\Big)\Big(\frac{1}{n+1}\sum_{x=0}^n\lang
p_x^2\rang\Big)\le C.
$$
% in light of and Proposition
% \ref{prop031012-21}.

Let $\tilde e_n(u)$ be piecewise linear function obtained by the linear
interpolation between the nodal points
$
P_x:=\Big(\frac{x}{n+1},
\lang p_x^2\rang\Big) $, $x=0,\ldots,n+1$ (here $p_{n+1}=p_n$).
 By Proposition \ref{prop031012-21a}
$$
\int_0^1[\tilde e_n'(u)]^2\dd u= (n+1)\sum_{x=0}^n\Big(\lang p_{x+1}^2\rang-\lang p_{x}^2\rang\Big)^2\le
  C,\quad n=1,2,\ldots.
$$
Therefore $\left(\tilde e_n(u)\right)$ is relatively compact in both
$L^2(0,1)$ and $C[0,1]$. Since
$$
\int_0^1[\tilde e_{n}(u)-e_{n}(u)]^2\dd u\le \frac{1}{n+1}\sum_{x=0}^n\Big(\lang p_{x+1}^2\rang-\lang p_{x}^2\rang\Big)^2\le
\frac{C}{(n+1)^2}
$$
also $\left( e_n(u)\right)$ is relatively compact in
$L^2(0,1)$.

For some subsequence $n'\to+\infty$ (which we still denote by $n$) we
have therefore
$$
\lim_{n\to+\infty}e_n(u)=e_{\rm thm}(u),\qquad \mbox{ in the $L^2$ sense}
$$
and $e_{\rm thm}\in C[0,1]$. {In order to  show the
convergence of $\frac 1n \sum_x \varphi\left(\frac xn \right) \lang
p^2_x\rang$ in \eqref{eq:3} it suffices therefore to prove the following.}
\begin{theorem}
\label{thm012912-21}
We have
\begin{equation}
\label{ethm}
e_{\rm thm}(u)=T(u),\quad u\in[0,1],
\end{equation}
where $T$ is given by \eqref{eq:5}.  
\end{theorem}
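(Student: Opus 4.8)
To establish \eqref{ethm}, the plan is to combine the exact fluctuation--dissipation identity \eqref{072606-21a} for $\mathfrak F_x$ with a local equilibrium statement, and then to pin down the remaining additive constant by means of the boundary value $\lang p_0^2\rang_n\to T_-$ already furnished by Corollary \ref{thm021911-21}. The first and main step is the \emph{local equilibrium} estimate: for every $\varphi\in C[0,1]$,
\[
\lim_{n\to\infty}\frac1n\sum_{x=0}^n\varphi\Big(\frac xn\Big)\Big(\lang\mathfrak F_x\rang-D\lang p_x^2\rang\Big)=0 .
\]
This is the test-function weighted counterpart of the global identity \eqref{eq:58} and is exactly the content of Proposition \ref{prop-loceq}. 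To prove it I would reuse the representations from Proposition \ref{prop010803-22} and the proposition containing \eqref{eq:32b}, which express the sums of the position covariances as convolutions of $\lang p_y^2\rang$ against the Neumann Green's function $G^n_{\om_0}$; inserting the weight $\varphi(x/n)$, the exponential decay of $G_{\om_0}$ from \eqref{GR} together with the uniform $H^1$ bound of Proposition \ref{prop031012-21a}, which makes $y\mapsto\lang p_y^2\rang$ equicontinuous on the lattice scale, allows one to replace $\lang p_y^2\rang$ by $\lang p_x^2\rang$ near $x$ and to collapse each convolution to pointwise multiplication, the sites within $O(1)$ of the two endpoints being absorbed using Lemma \ref{lem-ful}. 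By the identity \eqref{D1} for $D$ this yields $\lang\mathfrak F_x\rang\sim D\lang p_x^2\rang$ in the above weak sense.

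Given this, and recalling that along the extracted subsequence $e_n\to e_{\rm thm}$ in $L^2(0,1)$, the displayed identity forces
\[
\frac1n\sum_{x=0}^n\varphi\Big(\frac xn\Big)\lang\mathfrak F_x\rang\ \longrightarrow\ D\int_0^1\varphi(u)\,e_{\rm thm}(u)\,\dd u .
\]
On the other hand, \eqref{072606-21a} gives $\lang\mathfrak F_x\rang=\lang\mathfrak F_0\rang-4\gamma J_n x$ for $x\le n-1$, the only further term being supported at $x=n$ and of size $O(n^{2a})$ by the argument behind \eqref{022801-22}, hence contributing $\varphi(1)\,O(n^{2a-1})=o(1)$ to the normalized sum since $a\le0$. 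Therefore
\[
\frac1n\sum_{x=0}^n\varphi\Big(\frac xn\Big)\lang\mathfrak F_x\rang
=\lang\mathfrak F_0\rang\,\frac1n\sum_{x}\varphi\Big(\frac xn\Big)
-4\gamma\,(nJ_n)\,\frac1{n^2}\sum_{x}x\,\varphi\Big(\frac xn\Big)+o(1),
\]
where the two Riemann sums converge to $\int_0^1\varphi$ and $\int_0^1 u\varphi(u)\,\dd u$ and $nJ_n\to J$ by Theorem \ref{thm-current}. Taking $\varphi\equiv1$ shows that $\lang\mathfrak F_0\rang$ itself converges, to some $F_0^{\ast}$, and then for general $\varphi$ we obtain $D\int_0^1\varphi\,e_{\rm thm}=F_0^{\ast}\int_0^1\varphi-4\gamma J\int_0^1 u\varphi$; since $\varphi$ is arbitrary and $e_{\rm thm}$ is continuous this gives $D\,e_{\rm thm}(u)=F_0^{\ast}-4\gamma Ju$ on $[0,1]$.

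It remains to identify $F_0^{\ast}$. Since $\tilde e_n\to e_{\rm thm}$ uniformly on $[0,1]$ and $\tilde e_n(0)=\lang p_0^2\rang_n$, Corollary \ref{thm021911-21} yields $e_{\rm thm}(0)=\lim_n\lang p_0^2\rang_n=T_-$; evaluating the last relation at $u=0$ gives $F_0^{\ast}=D\,T_-$, hence $e_{\rm thm}(u)=T_--4\gamma Ju/D=T(u)$, which is \eqref{ethm}. Since this limit is independent of the subsequence, \eqref{eq:3} follows as well.

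The only genuinely hard step is the local equilibrium estimate of the first paragraph: the obstruction is the breaking of translation invariance at the thermostatted and at the forced ends, and overcoming it relies precisely on the boundary asymptotics of $G^n_{\om_0}$ encoded in Lemma \ref{lem-ful} together with the $H^1$ bound of Proposition \ref{prop031012-21a}, which jointly guarantee that only an $O(1/n)$ fraction of the sites feels the boundary and that convolution against $G_{\om_0}$ acts, to leading order, as multiplication by the local temperature $\lang p_x^2\rang$.
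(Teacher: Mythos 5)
Your proposal is correct, and it reaches \eqref{ethm} by a slightly different route than the paper, while relying on the same technical core. The paper keeps the weight on the \emph{current}: it writes $nJ_n\varphi(1-\tfrac1n)=\sum_x n[\varphi(\tfrac{x+1}n)-\varphi(\tfrac xn)]\lang j_{x,x+1}\rang$, uses \eqref{eq:74}--\eqref{eq:75} and a second summation by parts to arrive at the weak identity $J\varphi(1)=\frac{D}{4\gamma}\int_0^1\varphi''(u)e_{\rm thm}(u)\,\dd u$ for $\varphi\in C^2$ with ${\rm supp}\,\varphi\subset(0,1]$, $\varphi'(1)=0$, with the remainder $R_n$ killed by the energy bound \eqref{021911-21a}; together with $e_{\rm thm}(0)=T_-$ this identifies $e_{\rm thm}=T$. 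You instead exploit the already telescoped identity \eqref{072606-21a}, which says $\lang\frak F_x\rang$ is affine in $x$ up to the $O(n^{2a})$ boundary term, test against an arbitrary $\varphi\in C[0,1]$, extract convergence of $\lang\frak F_0\rang$ by taking $\varphi\equiv1$, and pin the constant by $e_{\rm thm}(0)=T_-$ (Corollary \ref{thm021911-21} plus uniform convergence of $\tilde e_n$). Both arguments stand on the same pillars: Theorem \ref{thm-current} ($nJ_n\to J$), the energy and $H^1$ bounds, and the local equilibrium statement. One caveat: the statement you invoke, $\frac1n\sum_x\varphi(\tfrac xn)(\lang\frak F_x\rang-D\lang p_x^2\rang)\to0$ for \emph{every} $\varphi\in C[0,1]$, is formally stronger than Proposition \ref{prop-loceq} as stated (whose weights are $\varphi''$ with $\varphi$ supported in $(0,1]$ and $\varphi'(1)=0$); however, the proof of that proposition uses the support condition only to discard an $O(\delta)$ boundary contribution via the uniform bound on $H_n(x)$ and Lemma \ref{lem-ful}, so it carries over verbatim to general continuous weights, exactly as you sketch. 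What your route buys is that no restricted test-function class and no double summation by parts are needed; what it costs is the (harmless) extra step of proving convergence of $\lang\frak F_0\rang$ and the mildly extended local-equilibrium statement.
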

\proof 
Since $\lang p_0^2 \rang \to T_-$, compactness of $\left(\tilde e_n(u)\right)$
in $C[0,1]$
implies that $\lim_{u\to 0} e_{\rm thm}(u) = T_-$.
It follows that we only have to verify that
\begin{equation}
  \label{eq:34}
  J\varphi(1) = \frac{D}{4\gamma} \int_0^1 \varphi''(u) e_{\rm thm}(u) du, 
\end{equation}
for any $\varphi\in C^2[0,1]$ such that ${\rm
  supp}\,\varphi\subset (0,1]$ and $\varphi'(1)=0$. Here and below, for
abbreviation sake, we let $J:=J^{b-1/2,b}$, $J_n:=J_n^{b-1/2,b}$ (see
\eqref{eq:38a} and \eqref{051021-05}).

The left hand side of \eqref{eq:34} is given by
\begin{equation}
  \label{eq:35}
  \lim_{n\to\infty} n J_n  \varphi\left(1-\frac 1n\right)  = J\varphi(1).
\end{equation}
On the other hand, from \eqref{eq:33}, for $n$ sufficiently large so $\varphi(1/n)=0$
\begin{equation}
  \label{eq:36}
  \begin{split}
    n J_n \varphi\left(1-\frac 1n\right) &= \sum_{x=1}^{n-2}
    n \left[\varphi\left(\frac {x+1}n\right) - \varphi\left(\frac {x}n\right)\right] \lang j_{x,x+1} \rang\\
    &=- \frac 1{4\gamma} \sum_{x=1}^{n-2}
    n \left[\varphi\left(\frac {x+1}n\right) - \varphi\left(\frac {x}n\right)\right]
    \lang \nabla \frak F_x \rang\\
   &= \frac 1{4\gamma}  \frac 1n \sum_{x=1}^{n-2}
    n^2 \left[\varphi\left(\frac {x+1}n\right) + \varphi\left(\frac {x-1}n\right)
      - 2\varphi\left(\frac {x}n\right)\right]
    \lang \frak F_x \rang \\
    &= \frac 1{4\gamma}  \frac 1n \sum_{x=0}^{n-2}
    \varphi''\left(\frac xn\right)  \lang \frak F_x \rang + R_n
  \end{split}
\end{equation}
with
\begin{equation}
  \label{eq:59}
  |R_n| \le \frac{C}{n^2} \sum_x |\lang \frak F_x \rang| \mathop{\longrightarrow}_{n\to\infty} 0.
\end{equation}
 following from \eqref{021911-21a}.

Then we are left to prove that
\begin{equation}
  \label{eq:60}
 \lim_{n\to\infty} \frac 1n \sum_{x=0}^{n-2} \varphi''\left(\frac xn\right)
  \left(\lang \frak F_x \rang - D \lang p_x^2 \rang \right) = 0.
\end{equation}
This will follows directly if we prove the following
\begin{proposition}
  \label{prop-loceq}
For any test function  $\varphi\in C^2([0,1])$ such that ${\rm
  supp}\,\varphi\subset (0,1]$ and $\varphi'(1)=0$ we have
\begin{equation}
  \label{eq:61}
\lim_{n\to\infty} \frac 1n \sum_{x=0}^{n-2} \varphi''\left(\frac xn\right)
  \left(\lang q_xq_{x+\ell } \rang - G_{\omega_0}(\ell) \lang p_x^2 \rang \right) = 0,
  \qquad \ell= 0, 1 ,2.
\end{equation}
\end{proposition}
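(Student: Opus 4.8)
The plan is to work in the Fourier basis $\{\psi_j\}$ of the Neumann Laplacian, where the exact formula \eqref{eq:69} for $\tilde S^{(q)}_{j,j'}$ in terms of $\tilde F_{j,j'}$ is available, and to show that the only part of this expression that survives after convolution with $\varphi''$ is the diagonal ``local equilibrium'' contribution $G_{\omega_0}(\ell)\lang p_x^2\rang$. First I would reduce \eqref{eq:61} to a statement about $\lang S^{(q)}_{x,x+\ell}\rang$ only: the contribution of $\lang \bar q_x\bar q_{x+\ell}\rang$ is controlled because, by \eqref{021012-21}, $\sum_x\lang\bar q_x^2\rang = O(n^{2a})$ with $a\le 0$, so after dividing by $n$ and summing against the bounded function $\varphi''$ it is $o(1)$ (and in fact the cross-term with $G_{\omega_0}(\ell)\lang p_x^2\rang$ poses no problem since $\sum_x\lang p_x^2\rang = O(n)$ by Theorem \ref{thm011911-21z}).

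Next I would insert \eqref{eq:69} and \eqref{eq:49}, writing
\begin{equation}
\label{eq:plan-decomp}
\lang S^{(q)}_{x,x+\ell}\rang = \sum_y K^{(n)}_\ell(x,y)\lang p_y^2\rang
+ \big(T_- - \lang p_0^2\rang\big) K^{(n)}_\ell(x,0),
\end{equation}
where $K^{(n)}_\ell(x,y) = \sum_{j,j'} \frac{2\Theta(\mu_j,\mu_{j'})}{\mu_j+\mu_{j'}}\,\Upsilon_{j,j'}(\ell)\,\psi_j(x)\psi_{j'}(y)$ plays the role of an approximate kernel. The term carrying $T_- - \lang p_0^2\rang$ is $O(1/n)$ by Corollary \ref{thm021911-21} and drops out. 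For the main term I would split $K^{(n)}_\ell(x,y)$ into its ``diagonal in $j=j'$'' part, which by the lemma computing $\Upsilon_{j,j}(\ell)=\cos(\pi\ell j/(n+1))$ becomes $\frac{1}{n+1}\sum_j\frac{1}{\mu_j}\cos\!\big(\tfrac{\pi\ell j}{n+1}\big)\cos\!\big(\tfrac{\pi j(2x+1)}{n+1}\big)\cdot(\cdots)$ and whose ``slowly varying'' piece is exactly the Riemann sum for $G_{\omega_0}(\ell)$ (this is the computation already carried out just before the proposition), plus an off-diagonal remainder $\rho_n(y)$ that by Lemma \ref{lm012812-21} satisfies the summability bound \eqref{042912-21}. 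The diagonal piece yields, after replacing the oscillatory remainder by its bound exactly as in \eqref{eq:57}–\eqref{030703-22}, a contribution of the form $G_{\omega_0}(\ell)\lang p_x^2\rang$ plus a term of order $(\sum_y\lang p_y^2\rang)^{1/2} = O(n^{1/2})$, which is $o(n)$.

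The technical heart is controlling the off-diagonal remainder after it is tested against $\varphi''$: one must show $\frac1n\sum_x \varphi''(x/n)\sum_y \rho_n^{(\ell)}(x,y)\lang p_y^2\rang = o(1)$. Here I would use that by \eqref{042912-21} the kernels decay fast enough that $\sup_x\sum_y|\rho_n^{(\ell)}(x,y)|$ and $\sup_y\sum_x|\rho_n^{(\ell)}(x,y)|$ are small (of order $o(1)$ or controllably summable), combine this with the a priori bounds $\sup_x\lang p_x^2\rang\le C$ and $\sum_x[\lang p_{x+1}^2\rang - \lang p_x^2\rang]^2\le C/n$ from Proposition \ref{prop031012-21a}, and mimic the Cauchy--Schwarz argument of \eqref{030703-22} (writing $\lang p_y^2\rang$ as $\lang p_0^2\rang$ plus a telescoping sum of gradients) to absorb the error. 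I expect the main obstacle to be precisely the boundary behavior of $K^{(n)}_\ell$ near $x=0$ and $x=n$: the identity $K^{(n)}_\ell \approx G_{\omega_0}(\ell)$ holds only in the bulk, and the Riemann-sum approximation degrades near the edges, which is exactly why the statement is restricted to test functions $\varphi$ supported in $(0,1]$ with $\varphi'(1)=0$ — so the argument must exploit this support condition (together with the sharp Green's function estimates of Lemma \ref{lem-ful}) to discard the boundary layers, and getting the bookkeeping of those error terms to close against only $O(n^{1/2})$-type bounds rather than $O(n)$ is the delicate point.
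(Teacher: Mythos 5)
Your plan follows essentially the same route as the paper's proof: reduce to the fluctuation covariance $\lang S^{(q)}_{x,x+\ell}\rang$ via \eqref{021012-21}, expand it through \eqref{eq:69} and \eqref{eq:49} into a kernel acting on $(\lang p_y^2\rang)$, use the oscillatory-sum estimates of Lemma \ref{lm012812-21} for the kernel decay, identify the row sum of the bulk kernel with the Riemann sum for $G_{\omega_0}(\ell)$, and absorb the difference $\lang p_y^2\rang-\lang p_x^2\rang$ by Cauchy--Schwarz together with the $H^1$ bound \eqref{051012-21a}. Two points need correction, however. First, the displayed kernel is garbled: $\Upsilon_{j,j'}(\ell)$ of \eqref{Ups} is already a sum over $x$, so it cannot appear in a kernel that still carries the variable $x$; it belongs to the summed computation behind \eqref{eq:32b}, whereas here the correct kernel is $\sum_{j,j'}\frac{2\Theta(\mu_j,\mu_{j'})}{\mu_j+\mu_{j'}}\,\psi_j(x)\psi_{j'}(x+\ell)\psi_j(y)\psi_{j'}(y)$, exactly as in the paper's proof. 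Second, and this is the substantive slip: the support condition ${\rm supp}\,\varphi\subset(0,1]$ removes only the boundary layer at $x\approx 0$; it gives nothing near $x\approx n$, where $\varphi''$ need not vanish, and $\varphi'(1)=0$ plays no role in this proposition (it is used only in the summation by parts in the proof of Theorem \ref{thm012912-21}). The right boundary layer must instead be handled as the paper does: the uniform $\ell^1$ bound on the kernel (cf. \eqref{023112-21ac}, \eqref{eq:63a}) together with $\sup_x\lang p_x^2\rang\le C$ makes each term uniformly bounded, so the contribution of $x\in((1-\delta)n,n)$ is at most $C\|\varphi''\|_\infty\delta$, and one concludes by letting $\delta\to 0$ after the limit $n\to\infty$. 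With these two repairs your argument closes and coincides with the paper's.
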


\begin{proof}
  % \marginpar{\red{Corr.}}
  {By virtue of \eqref{021012-21} we have
 \begin{equation}
    \label{eq:61ab}
    \lim_{n\to+\infty}\frac{1}{n} \sum_{x=0}^{n}
    \varphi\left(\frac{x}{n}\right)  \lang
      \bar q_x \bar q_{x+\ell}\rang=0.
  \end{equation}
It suffices therefore to prove that 
\begin{equation}
    \label{eq:61ap}
    \lim_{n\to+\infty}\frac{1}{n} \sum_{x=0}^{n}
    \varphi\left(\frac{x}{n}\right) \int_0^{t}  \Big\{\lang
       q_x'  q_{x+\ell}'\rang-  G_{\omega_0}(\ell)
   \lang p_x^2\rang\Big\}\dd s=0.
  \end{equation}}
We first prove \eqref{eq:61ap} for $\ell=0$. By   \eqref{eq:69} we have %\marginpar{\red{Corr.}}
  \begin{equation}
    \label{eq:62}
    \begin{split}
      {\lang (q_x')^2 \rang} = \sum_{j,j'} {\tilde S^{(q)}_{j,j'}}
      \psi_j(x) \psi_{j'}(x)   
      = H_{n}(x) +O\Big(\frac{1}{n}\Big) ,
    \end{split}
\end{equation}
with
\begin{equation}
  \label{eq:63}
 \begin{split}
& H_{n}(x) =
\sum_{y=0}^n \sum_{j,j'} \Phi\left(\frac{j}{n+1}, \frac{ j'}{n+1}\right)
  \psi_j(y) \psi_{j'}(y) \psi_j(x) \psi_{j'}(x)  \lang p_y^2 \rang
\\
&
= H_n^1(x)+\frac{1}{n+1} H_n^2(x)+\frac{\Phi\left(0,0\right)}{(n+1)^2}\sum_{y=0}^n \lang p_y^2 \rang.
\end{split}
\end{equation}
Here $\Phi\left(\frac{j}{n+1}, \frac{ j'}{n+1}\right)
= \frac{2\Theta(\mu_j,\mu_{j'})}{\mu_j+\mu_{j'}}$ and 
$
H_n^j(x):= \sum_{y=0}^n K^{(n)}_j(x,y)  \lang p_y^2 \rang,
$
with
\begin{equation}
  \label{eq:70}
  \begin{split}
&K_1 ^{(n)} (x,y) :=\frac{1}{(n+1)^2}\sum_{j,j'=0}^n \Phi\left(\frac{j}{n+1}, \frac{j'}{n+1}\right)
    \Big[\cos\left(\frac{\pi j(y-x)}{n+1}\right)+\cos\left(\frac{\pi
        j(y+x+1)}{n+1}\right) \Big] \\
    &\times  \Big[\cos\left(\frac{\pi
        j'(y-x)}{n+1}\right)+\cos\left(\frac{\pi
        j'(y+x+1)}{n+1}\right) \Big],\\
&
  K_2 ^{(n)} (x,y) :=-\frac{1}{n+1}\sum_{j=0}^n \Phi\left(\frac{j}{n+1},0\right)
    \Big[\cos\left(\frac{\pi j(y-x)}{n+1}\right)+\cos\left(\frac{\pi
        j(y+x+1)}{n+1}\right) \Big] .
  \end{split}
\end{equation}
Using Lemma \ref{lm012812-21}
we obtain that
\begin{equation}
\label{023112-21aa}
\begin{split}
&|K_j^{(n)}(x,y)|\le
C\left(\frac{1}{\chi^2_n((x-y)/2)}+\frac{1}{\chi^2_n((x+y)/2)}\right), \quad x,y=0,\ldots,n,\,n=1,2,\ldots
\end{split}
\end{equation}
for $j=1,2$.
In particular the above estimate implies that
\begin{equation}
\label{023112-21ac}
K_{j,*}:=\sup_{x,n}\sum_{y=0}^n|K^{(n)}_j(x,y)|<+\infty,\quad j=1,2.
\end{equation}
In consequence, by virtue of \eqref{051012-21a},
\begin{equation}
  \label{eq:63a}
  |H_{n}^j(x)| \le \sum_{y=0}^n| K^{(n)}_j(x,y)|  \lang p_y^2 \rang\le K_{j,*}\sup_{y,n} \lang p_y^2 \rang=:H_{j,*}<+\infty
\end{equation}
and the term corresponding to $H_{n}^2(x)$ is negligible, as $n\to+\infty$.

Choose $\delta\in(0,1)$ sufficiently small, so that $\varphi(u)=0$,
when $u\in(0,\delta)$.
%\marginpar{\red{Corr.}}
Then, there exist $C>0$ and
%\marginpar{\red{Corr.}}
{$n_0\ge 1$} such that
\begin{align*}
&\frac 1n \left|\sum_{x\in(0,\delta n)\cup((1-\delta)n,n)}\varphi''\left(\frac xn\right)
                 {\lang (q_x')^2 \rang} \right|\le \frac 1n\sum_{ x \in ((1-\delta)n,n)}
                 \left|\varphi''\left(\frac xn\right)\right|
  \left(|H_n(x)|+O\left(\frac1n\right)\right)\\
&
\le C\|\varphi''\|_\infty\delta,\quad {n=n_0+1,n_0+2,\ldots.}
\end{align*}
For  $\delta n\le x\le (1-\delta)n$ inequality \eqref{023112-21aa} implies that
there exists $C>0$ such that  $K_1 ^{(n)} (x,y)=\bar K_1 ^{(n)}
(x,y)+O(1/n^2)$, where
\begin{equation}
  \label{eq:70b}
\bar K_1 ^{(n)} (x,y) :=\frac{1}{4(n+1)^2}\sum_{j,j'=-n-1}^n \Phi\left(\frac{j}{n+1}, \frac{j'}{n+1}\right)
    \cos\left(\frac{\pi j(y-x)}{n+1}\right) \cos\left(\frac{\pi
        j'(y-x)}{n+1}\right),
\end{equation}
and, by Lemma \ref{lm012812-21}, there exists $C>0$ such that
\begin{equation}
\label{023112-21b}
|\bar K_1 ^{(n)} (x,y)|\le
\frac{C}{1+(x-y)^2},\quad y=0,\ldots,n,\,x\in(\delta n,(1-\delta)n)
\end{equation}
for $n=1,2,\ldots.$
To prove \eqref{eq:61} it suffices therefore to show that for any
$\delta\in(0,1/2)$ we have
\begin{equation}
  \label{eq:61d}
\lim_{n\to\infty} \frac 1n \sum_{\delta n\le x\le (1-\delta)n}  \varphi''\left(\frac xn\right)
  \left(\bar H_n^1(x) - G_{\omega_0}(0) \lang p_x^2 \rang \right) = 0,
\end{equation}
where $\bar H_n^1(x):= \sum_{y=0}^n \bar K^{(n)}_1(x,y)  \lang p_y^2 \rang$.
Using Cauchy-Schwarz inequality,  estimates \eqref{023112-21b} and \eqref{051012-21a} we conclude, that
\begin{equation}
\label{011403-22}
  \begin{split}
&\sum_{y=0}^n \Big|\lang p_y^2\rang-\lang p_x^2\rang\Big|
|\bar K^{(n)}_1({x,y})|\le
\sum_{y=0}^n |y-x|^{1/2}\left|\sum_{z=0}^{n-1}
\left(\lang p_{z+1}^2\rang-\lang p_z^2\rang\right)^2\right|^{1/2}|\bar K^{(n)}_1({x,y})|\\
&\le \frac{C}{n^{1/2}} \sum_{y=0}^n |y-x|^{1/2} |\bar K^{(n)}_1({x,y})|\le
\frac{C'}{n^{1/2}}
\end{split}
 \end{equation}

% we conclude that
% for any small $\delta>$ there exists $C_\delta>0$
%  such that
% \begin{equation}
% \label{023112-21}
% |K_{y,x}^{(n)}|\le \frac{C_\delta}{1+(y-x)^2},\quad \forall x\in \{0,\dots, n\delta\}\cup
% \{n(1-\delta),\dots, n\}, y=0,\ldots,n.
% \end{equation}
% The above estimate and \eqref{051012-21a} allows us to conclude, by
% using Cauchy-Schwarz inequality, that
% \begin{equation*}
%   \begin{split}
% \sum_{y=0}^n \Big|\lang p_y^2\rang-\lang p_x^2\rang\Big|
% |K_{y,x}^{(n)}|\le
% \sum_{y=0}^n |y-x|^{1/2}\left|\sum_{z=0}^{n-1}
% \left(\lang p_{z+1}^2\rang-\lang p_z^2\rang\right)^2\right|^{1/2}|K_{y,x}^{(n)}|\\
% \le \frac{C}{n^{1/2}} \sum_{y=0}^n |y-x|^{1/2} |K_{y,x}^{(n)}| \le
% \frac{C'}{n^{1/2}}
% \end{split}
%  \end{equation*}
% for $0\le x\le \delta n$, or
% $n\ge x\ge n(1-\delta)$.
Therefore
  \begin{equation*}
\begin{split}
&\frac{1}{n+1} \sum_{\delta n\le x\le (1-\delta)n} \varphi''\left(\frac{x}{n+1}\right)   \sum_{y=0}^n \lang p_y^2\rang
 \bar K^{(n)}_1({x,y})
\\
&
  = \frac{1}{n+1} \sum_{\delta n\le x\le (1-\delta)n}  \varphi''\left(\frac{x}{n+1}\right) \lang p_x^2\rang
  \sum_{y=0}^n \bar K^{(n)}_1({y,x})+ O\left(\frac 1{\sqrt n}\right).\label{eq:71}
\end{split}
\end{equation*}
Another application of Lemma \ref{lm012812-21} allows us to conclude that
\begin{equation}
\label{023112-21a}
|\sum_{y=0}^n\bar K^{(n)}_1({y,x})-\sum_{y=x-n-1}^{x+n}\bar K^{(n)}_1({y,x})|\le
\frac{C}{n^2},\quad     \delta n\le x \le (1-\delta)n,
\end{equation}
Using elementary trigonometric identities we conclude that
$$
 2\sum_{z=-n-1}^n \cos\left(\frac{\pi j z}{n+1}\right) \cos\left(\frac{\pi
        j'z}{n+1}\right)=(n+1)\Big(\delta_{j,-j'}+\delta_{j,j'}\Big).
$$
Hence, by virtue of Lemma \ref{lem-ful}, we get
\begin{equation}
\begin{split}
&\sum_{y=x-n-1}^{x+n}\bar K^{(n)}_1({y,x})=
\frac{1}{n+1}\sum_{j=0}^n  \Phi\left(\frac{j}{n+1}, \frac{ j}{n+1}\right) \\
&= \frac{1}{n+1}\sum_{j=0}^n  \frac 1{\mu_j} = G_{\omega_0}(0) + o(1),
\end{split}
\label{eq:72}
\end{equation}
for $\delta n\le x \le (1-\delta)n$.  We have shown therefore that
\begin{align*}
\frac 1n \sum_{\delta n\le x\le (1-\delta)n}  \varphi''\left(\frac xn\right)
   \bar H_n^1(x) = \frac 1n G_{\omega_0}(0) \sum_{\delta n\le x\le (1-\delta)n}  \varphi''\left(\frac xn\right) \lang p_x^2 \rang +o(1)
\end{align*}
and \eqref{eq:61d} follows. This ends the proof of \eqref{eq:61} for $\ell=0$.

\bigskip

 \subsubsection*{Proof for $\ell \neq 0$}.
  The proof is similar to the previously considered case, so we only
  sketch it.
  As before, by \eqref{eq:69} we have
  %\marginpar{\red{Corr.}}
  \begin{equation}
    \label{eq:62}
    \begin{split}
      {\lang q_x' q_{x+\ell}' \rang} =
      \sum_{j,j'} \tilde S^{(q)}_{j,j'} \psi_j(x) \psi_{j'}(x+\ell) + O(1/n)
      = H_{n,\ell}(x) + O(1/n),
    \end{split}
  \end{equation}
with
\begin{equation}
  \label{eq:63}
 \begin{split}
 &
H_{n,\ell}(x) = \sum_{y=0}^n \sum_{j,j'} \frac{2\Theta(\mu_j,\mu_{j'})}{\mu_j+\mu_{j'}}
  \psi_j(y) \psi_{j'}(y) \psi_j(x) \psi_{j'}(x+\ell)  \lang p_y^2 \rang
 \\
&
 = \sum_{y=0}^n \bar K^{(n,\ell)}({x,y})  \lang p_y^2 \rang+O\left(\frac1n\right).
\end{split}
\end{equation}
Here   
\begin{equation}
  \label{eq:70}
   \bar  K^{(n,\ell)}({x,y}):=\frac{1}{4(n+1)^2}\sum_{j,j'=-n-1}^n \Phi\left(\frac{j}{n+1}, \frac{j'}{n+1}\right)
     \cos\left(\frac{\pi j(y-x)}{n+1}\right) \cos\left(\frac{\pi j'(y-x-\ell)}{n+1}\right).
\end{equation}
Using Lemma \ref{lm012812-21}, in a manner similar to what we have
done in the argument leading up  to  \eqref{eq:61d},
we conclude that   it suffices   to show that for any
$\delta\in(0,1/2)$ we have
\begin{equation}
  \label{eq:61e}
\lim_{n\to\infty} \frac 1n \sum_{\delta n\le x\le (1-\delta)n}  \varphi''\left(\frac xn\right)
  \left(\bar H_{n,\ell}(x)  - G_{\omega_0}(\ell) \lang p_x^2 \rang \right) = 0,
\end{equation}
with $\bar H_{n,\ell}(x) := \sum_{y=0}^n \bar K^{(n,\ell)}({x,y})  \lang p_y^2 \rang$.
Furthermore, thanks to estimates analogous to those leading up to
\eqref{023112-21a}, we get
\begin{equation}
\label{023112-21a}
|\sum_{y=0}^n  \bar  K^{(n,\ell)}({y,x})-\sum_{y=x-n-1}^{x+n}\bar  K^{(n,\ell)}({y,x})|\le
\frac{C}{n^2},\quad     \delta n\le x \le (1-\delta)n.
\end{equation}
Using elementary trigonometric identities we conclude that
$$
 2\sum_{z=-n-1}^n \cos\left(\frac{\pi j z}{n+1}\right) \cos\left(\frac{\pi
        j'(z-\ell)}{n+1}\right)=(n+1)\cos\left(\frac{j\ell }{n+1} \right)\Big(\delta_{j,-j'}+\delta_{j,j'}\Big).
$$
Hence,  
\begin{equation}
\begin{split}
&\sum_{y=x-n-1}^{x+n}\bar K^{(n)}_1({y,x})=
\frac{1}{n+1}\sum_{j=0}^n  \Phi\left(\frac{j}{n+1}, \frac{ j}{n+1}\right) \cos\left(\frac{j\ell }{n+1} \right) \\
&= G_{\omega_0}(\ell) + o(1),
\end{split}
\label{eq:72}
\end{equation}
for $\delta n\le x \le (1-\delta)n$.  We have shown therefore that
\begin{align*}
\frac 1n \sum_{\delta n\le x\le (1-\delta)n}  \varphi''\left(\frac xn\right)
   \bar H_{n,\ell}(x) = \frac 1n G_{\omega_0}(\ell) \sum_{\delta n\le x\le (1-\delta)n}  \varphi''\left(\frac xn\right) \lang p_x^2 \rang +o(1)
\end{align*}
and \eqref{eq:61e} follows. This ends the proof of \eqref{eq:61} for
$\ell\not=0$.

{Finally, to finish the proof of Theorem \ref{th1}
we show the following result, that is a form of the equipartition
property of the energy.
\begin{lemma}
\label{cor012912-21}
  Suppose that $\varphi\in C^1[0,1]$. Then,
  \begin{equation}
    \label{011312-21}
\lim_{n\to+\infty}\frac{1}{n+1}\sum_{x=0}^n \varphi\left(\frac{x}{n+1}\right)\Big(\lang p_x^2\rang-\lang r_x^2\rang-\om_0^2 \lang q_x^2\rang\Big)=0.
\end{equation}
Here $r_x:=q_x-q_{x-1}$, $x=1,\ldots,n$ and $r_0:=0$.
  \end{lemma}
  \proof
After a simple calculation we obtain
\begin{equation}
  \label{042606-21aa}
  \lang p_x^2\rang =   \lang \om^2_0q_x^2 -(\Delta q_{x}) q_x\rang 
  -\frac{\delta_{x,n}n^a}{\theta_n }\int_0^{\theta_n}\bar
  q_x(t)\cF(t/\theta)\dd t
\end{equation}
 for $x=0,\ldots,n$.
 Therefore, by \eqref{022801-22} (see also \eqref{In})
\begin{equation}
  \label{081312-21}
  \begin{split}
&\frac{1}{n+1}\sum_{x=0}^n \varphi\left(\frac{x}{n+1}\right)\Big(\lang p_x^2\rang-\lang r_x^2\rang-\om_0^2 \lang q_x^2\rang\Big)\\
&
=\frac{1}{n+1}\sum_{x=0}^{n}\varphi\left(\frac{x}{n+1}\right)\lang r_x
q_x\rang
-\frac{1}{n+1}\sum_{x=1}^{n}\varphi\left(\frac{x-1}{n+1}\right)\lang r_x
                             q_{x-1}\rang
\\
&
-\frac{1}{n+1}\sum_{x=0}^{n}\varphi\left(\frac{x}{n+1}\right)\lang r_x^2\rang+O\left(\frac{1}{n^{1-a}}\right).
\end{split}
\end{equation}
Since $r_0=0$ the last expression equals
\begin{align*}
  &
\frac{1}{n+1}\sum_{x=1}^{n}\Big[\varphi\left(\frac{x}{n+1}\right) -\varphi\left(\frac{x-1}{n+1}\right)\Big]\lang r_x
                             q_{x-1}\rang+O\left(\frac{1}{n^{1-a}}\right)=O\left(\frac{1}{
     n}\right),
\end{align*}
by virtue of \eqref{021911-21a}.
This ends the proof of \eqref{011312-21}.\qed}

\end{proof}

\section{Vanishing time variance of the kinetic energy}
\label{sec:vanish-time-vari}

A natural question is about the time averaging of the energy functional.
We prove   that the time variance of the average kinetic energy
vanishes as $n\to\infty$. We consider only the case when $b=0$ and
$a^{-1/2}$    in the dynamics described by \eqref{eq:flip} --
\eqref{Fnt}.
\begin{theorem}
  \label{thm:var}
Under the assumption stated above there exists a constant
$C>0$ such that  
  \begin{equation}
    \label{eq:91}
  {    \sum_{x=0}^n \frac 1\theta
    \int_0^\theta \left(\bar{p_x^2}(t) - \lang p_x^2\rang \right)^2
    \dd t  
  \le \frac{C}{n^2},\quad n=1,2,\ldots.}
  \end{equation}
\end{theorem}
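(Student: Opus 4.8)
The starting point is the decomposition $\bbE[p_x^2(t)]=\bar p_x(t)^2+S^{(p)}_{x,x}(t)$, with $\bar p_x$ as in \eqref{eq:6} and $S^{(p)}$ the block of \eqref{S1ts}; since $t\mapsto\bbE[p_x^2(t)]-\lang p_x^2\rang$ is $\theta$--periodic with zero time average (recall $\theta_n=\theta$ as $b=0$), Parseval's identity gives
\[
\sum_{x=0}^n\frac1\theta\int_0^\theta\big(\bbE[p_x^2(t)]-\lang p_x^2\rang\big)^2\dd t=\sum_{x=0}^n\sum_{\ell\neq0}\big|\widehat{p_x^2}(\ell)\big|^2,\qquad\widehat{p_x^2}(\ell):=\frac1\theta\int_0^\theta e^{-2\pi i\ell t/\theta}\,\bbE[p_x^2(t)]\,\dd t,
\]
so it is enough to bound the time--Fourier coefficients $\widehat{p_x^2}(\ell)$, $\ell\neq0$, in $\ell^2$. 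The plan has three parts: (i) show the periodic means are uniformly $O(n^{-1/2})$; (ii) turn \eqref{010612-21} into a closed linear system for the $\widehat{p_x^2}(\ell)$; (iii) invert that system with a bound uniform in $n$ and $\ell\neq0$.

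For (i): by \eqref{eq:6aa} $\tilde p_x(0)=0$, and by the first equation of \eqref{eq:qdynamicsbulk-av-f} together with \eqref{021205-21a}, $\tilde p_x(\ell)=\tfrac{2\pi i\ell}{\theta}\,n^{-1/2}\,\tilde\cF(\ell)\,G^n_{\om_0,\theta,\ell}(x,n)$ for $\ell\neq0$. I would read off $|G^n_{\om_0,\theta,\ell}(x,n)|\le\theta/(4\gamma\pi|\ell|)$ straight from \eqref{Gpsi}: its denominator has imaginary part $4\gamma\pi\ell/\theta$, and $\sum_j|\psi_j(x)\psi_j(n)|\le1$ by Cauchy--Schwarz and orthonormality of $\{\psi_j\}$. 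Since $\tilde\cF(0)=0$ by \eqref{eq:2} and the hypothesis $\sum_\ell\ell^2|\tilde\cF(\ell)|^2<\infty$ forces $\sum_\ell|\tilde\cF(\ell)|<\infty$, this yields $\sup_{x,t}|\bar p_x(t)|\le Cn^{-1/2}$. Combined with Proposition \ref{prop011012-21} ($\sum_x\lang\bar p_x^2\rang\approx n^{-1}$), the source term $w_x(\ell):=\widehat{(\bar p_x)^2}(\ell)$ obeys
\[
\sum_{x=0}^n\sum_{\ell\neq0}|w_x(\ell)|^2=\sum_{x=0}^n\Big(\frac1\theta\int_0^\theta\bar p_x(t)^4\,\dd t-\lang\bar p_x^2\rang^2\Big)\le\Big(\sup_{x,t}\bar p_x(t)^2\Big)\sum_{x=0}^n\lang\bar p_x^2\rang\le\frac{C}{n^2}.
\]

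For (ii): taking the $\ell$--th time--Fourier coefficient of \eqref{010612-21} (equivalently, of the identity \eqref{eq:covev}, with $\widehat{S'}(\ell)=\tfrac{2\pi i\ell}{\theta}\widehat S(\ell)$), and noting that for $\ell\neq0$ the constant entry $T_-$ of $D_2$ drops out, the matrix $\widehat S(\ell):=\tfrac1\theta\int_0^\theta e^{-2\pi i\ell t/\theta}S(t)\,\dd t$ solves the Sylvester equation $\big(A+\tfrac{2\pi i\ell}{\theta}\big)\widehat S(\ell)+\widehat S(\ell)A^T=\widehat\Sigma_2(\ell)$, where $\widehat\Sigma_2(\ell)$ has lower--right block $4\gamma\,\mathrm{diag}\big(0,\widehat{p_1^2}(\ell),\dots,\widehat{p_n^2}(\ell)\big)$; the defining integral converges by Proposition \ref{prop012212-21}. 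I would then pass to the basis $\{\psi_j\}$ and eliminate the blocks $\widehat S^{(q)},\widehat S^{(q,p)},\widehat S^{(p,q)}$ from the four block relations exactly as in the passage \eqref{163011-21a}--\eqref{eq:47}, now carrying the extra terms proportional to $z_\ell:=2\pi i\ell/\theta$; this produces $\widehat S^{(p)}_{j,j'}(\ell)=\Theta_{z_\ell}(\mu_j,\mu_{j'})\,\tilde F_{j,j'}(\ell)$ with $\tilde F_{j,j'}(\ell)=\sum_{y=1}^n\psi_j(y)\psi_{j'}(y)\widehat{p_y^2}(\ell)$, and $\Theta_z(\mu,\mu')$ an explicit rational function of $(z,\mu,\mu')$ equal at $z=0$ to $\Theta(\mu,\mu')$ of \eqref{eq:47}. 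Summing over $j,j'$ and substituting $\widehat{p_x^2}(\ell)=w_x(\ell)+\widehat S^{(p)}_{x,x}(\ell)$ yields
\[
\big(I-M^{(\ell)}\big)\big(\widehat{p_x^2}(\ell)\big)_{x=0}^{n}=\big(w_x(\ell)\big)_{x=0}^{n},\qquad M^{(\ell)}_{x,y}:=\sum_{j,j'=0}^n\Theta_{z_\ell}(\mu_j,\mu_{j'})\psi_j(x)\psi_{j'}(x)\psi_j(y)\psi_{j'}(y)
\]
for $x\in\{0,\dots,n\}$, $y\in\{1,\dots,n\}$, with $M^{(\ell)}_{x,0}:=0$; at $\ell=0$ this is precisely the matrix $M$ of \eqref{eq:54}.

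Part (iii) is where the real work lies. I need $\|(I-M^{(\ell)})^{-1}\|\le C$ uniformly in $n$ and $\ell\neq0$, for then $\sum_{\ell\neq0}\sum_x|\widehat{p_x^2}(\ell)|^2\le C^2\sum_{\ell\neq0}\sum_x|w_x(\ell)|^2\le C'n^{-2}$, which is \eqref{eq:91}. As in \eqref{071504-22}, $\langle M^{(\ell)}f,f\rangle=\sum_{j,j'}\Theta_{z_\ell}(\mu_j,\mu_{j'})\big|\sum_x\psi_j(x)\psi_{j'}(x)f_x\big|^2$ with $\sum_{j,j'}\big|\sum_x\psi_j(x)\psi_{j'}(x)f_x\big|^2=\|f\|^2$, so the numerical range $W(M^{(\ell)})$, restricted to $\mathrm{span}(e_1,\dots,e_n)$, lies in the convex hull of $\{\Theta_{z_\ell}(\mu_j,\mu_{j'})\}$; it therefore suffices to prove that there is $c\in(0,1)$, depending only on $\om_0,\gamma,\theta$, with $|\Theta_{2\pi i\ell/\theta}(\mu,\mu')|\le1-c$ for all $\mu,\mu'\in[\om_0^2,\om_0^2+4]$ and $\ell\in\bbZ\setminus\{0\}$. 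Indeed, this gives $\mathrm{dist}\big(1,W(M^{(\ell)})\big)\ge c$, hence $\|(I-M^{(\ell)})^{-1}\|\le c^{-1}$ on the indices $\{1,\dots,n\}$, and the boundary index $x=0$ (absent from the column support of $M^{(\ell)}$) is absorbed through the block triangular structure of $I-M^{(\ell)}$ plus a bound $\sum_y|M^{(\ell)}_{0,y}|^2\le C$ that follows, uniformly in $\ell$, from the kernel--decay estimates of Lemmas \ref{lm012812-21} and \ref{lem-ful} applied to the smooth bounded multiplier $\Theta_{z_\ell}$. The contraction bound itself I would establish in \autoref{appC} by an explicit computation of $|\Theta_z|^2$, resting on three facts: $\Theta_0(\mu,\mu')=\big[1+(\mu-\mu')^2/(8\gamma^2(\mu+\mu'))\big]^{-1}\in(0,1]$ equals $1$ only when $\mu=\mu'$; for $z=iy$ with $y\neq0$ one has $\Theta_{iy}(\mu,\mu)\neq1$ — the numerator of $\Theta_{iy}(\mu,\mu)-1$ is $2\gamma y^2+iy(y^2-4\mu)$, of nonvanishing real part — and more generally $|\Theta_{iy}(\mu,\mu')|<1$; and $\Theta_z\to0$ as $|z|\to\infty$ uniformly for bounded $\mu,\mu'$ (its numerator is quadratic, its denominator cubic in $z$). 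These three facts, with compactness of $[\om_0^2,\om_0^2+4]^2$ and $|2\pi\ell/\theta|\ge2\pi/\theta$, yield the uniform gap $c$. The hard part is thus this uniform contraction estimate for $\Theta_{z_\ell}$: it quantifies the fact that the constant--in--time profile is the only time--stationary kinetic--energy profile compatible with the dynamics, and it is what produces the sharp $O(n^{-2})$ rate rather than the mere vanishing of \eqref{eq:84}.
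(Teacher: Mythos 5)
Your proof is correct in outline and follows the same skeleton as the paper's: expand $\bar{p_x^2}(t)-\lang p_x^2\rang$ in time-Fourier modes, derive the closed linear system whose kernel is exactly the paper's $M_{x,y}(m)$ (your Sylvester equation for $\widehat S(\ell)$ is \eqref{DA}, and your $\Theta_{z_\ell}$ is the paper's $\Theta_m$ of \eqref{Thetam}, obtained by the same elimination \eqref{011304-22a}), and control the source by $\sup_{x,t}|\bar p_x(t)|\le Cn^{-1/2}$ together with $\sum_x\lang\bar p_x^2\rang\approx n^{-1}$, which is precisely Lemma \ref{lm011904-22} (your Green's-function bound $|G^n_{\om_0,\theta,\ell}(x,n)|\le \theta/(4\gamma\pi|\ell|)$ is a clean substitute for the paper's spectral formula \eqref{021904-22}). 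Where you genuinely diverge is in closing the linear system. The paper keeps the thermostatted site inside the quadratic form, which produces the boundary term $-\tilde V_0(m)\sum_x M_{x,0}(m)\tilde V_x^\star$ in \eqref{031404-22}, and must then prove the separate bootstrap estimate of Lemma \ref{lm031504-22} and close with a Young inequality in \eqref{eq:39}; you instead observe that the $T_-$ entry of $D_2$ is time-independent, so column $0$ of $M^{(\ell)}$ vanishes for $\ell\neq0$, making $I-M^{(\ell)}$ block triangular — then only the row bound $\sum_y|M^{(\ell)}_{0,y}|^2\le C$ is needed, which by the symmetry of $M(m)$ is the paper's Lemma \ref{lm021504-22}. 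This is a real simplification: it eliminates Lemma \ref{lm031504-22} entirely. Your second substitution is to replace the coercivity statement \eqref{021504-22} by the stronger uniform modulus bound $\sup_{\ell\neq0}\sup_{\mu,\mu'}|\Theta_{z_\ell}(\mu,\mu')|\le 1-c$ plus the numerical-range resolvent estimate; note this is in one respect more robust than the paper's own route, since a pointwise bound $|1-\Theta_m|\ge{\frak C}_*$ does not by itself control the modulus of a complex-weighted positive combination, whereas a disk bound on all the values does. The price is that the key analytic fact — $|\Theta_{iy}(\mu,\mu')|<1$ for all $y\neq0$ \emph{including off the diagonal} $\mu\neq\mu'$, together with continuity of the denominator and the decay as $|y|\to\infty$ — is only asserted and sketched in your write-up (your diagonal computation is correct in spirit, and I verified the diagonal case closes algebraically, with the real part of the numerator of $1-\Theta_{iy}(\mu,\mu)$ proportional to $\gamma y^2$); this is the analogue of the paper's Appendix \ref{appC} computation \eqref{070505-22}--\eqref{040505-22a} and is of comparable difficulty, so you should regard it as the one item that still requires a written proof before your argument is complete.
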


\begin{proof}
  From \eqref{P2c} we get
  \begin{equation}
    \label{eq:92}
    \bar{p_x^2}(t) = T_- M_{x,0} + \sum_{x'=1}^n \int_0^\theta
    {\frak g}_{x,x'}(s) \bar{p_{x'}^2}(t-s) \dd s
    + \bar{p_x}^2(t),
  \end{equation}
where  ${\frak g}_{x,x'}(s)$ is defined in \eqref{011504-22}.
Averaging over the $t$ variable we get
  \begin{equation}
    \label{eq:92aa}
    \lang
    p_x^2\rang = T_- M_{x,0} + \sum_{x'=1}^n \int_0^\theta
    {\frak g}_{x,x'}(s) \dd s\lang
    p_{x'}^2\rang
    +\lang
   \bar{p_x}^2\rang .
  \end{equation}
  Then, denoting
  \begin{align*}
&V_x(t) :=   \bar{p_x^2}(t) - \lang
    p_x^2\rang  ,\\
&
v_x(t):=  \bar{p_x}^2(t)-\lang
   \bar{p_x}^2\rang,
\end{align*}
we can write
 \begin{equation}
    \label{eq:92ab}
V_x(t)=  \sum_{x'=1}^n \int_0^\theta
    {\frak g}_{x,x'}(s) V_{x'}(t-s) \dd s+ v_x(t).
  \end{equation}

For $m\in\bbZ$ define 
\begin{equation}
\label{Mxym}
\begin{split}
&M_{x,y}(m):=\int_0^\theta
    {\frak g}_{x,y}(s)e^{-2\pi i ms/\theta} \dd s \\
&=4\ga\int_0^{+\infty}e^{-2\pi i
  ms/\theta}\Big([e^{-As}]_{x+n+1,y+n+1}\Big)^2\dd s
\end{split}
\end{equation}
(cf \eqref{011504-22}) and
$$
\tilde V_x(m):=\frac{1}{\theta}\int_0^{\theta}e^{-2\pi  \ii m
  t/\theta}V_x(t)\dd t\qquad\mbox{and}\qquad \tilde v_x(m):=\frac{1}{\theta}\int_0^{\theta}e^{-2\pi  \ii m
  t/\theta}v_x(t)\dd t.
$$
Note that obviously $\tilde V_x(0)=0$ and $M_{x,y}(0)=M_{x,y}$, see \eqref{eq:51}. From \eqref{eq:92ab} we get
\begin{equation}
    \label{021404-22m}
\tilde V_x(m)=  \sum_{x'=1}^n M_{x,x'}(m) \tilde V_{x'}(m)+ \tilde v_x(m).
  \end{equation}
Multiplying both sides by $\tilde V^\star_x(m)$ and summing over $x$
we get
\begin{equation}
    \label{021404-22m}
\sum_{x=0}^n |\tilde V_x(m)|^2=  \sum_{x=0}^n \sum_{x'=1}^n
M_{x,x'}(m) \tilde V_{x'}(m) \tilde V^{\star}_{x}(m)+ \sum_{x=0}^n\tilde v_x(m) \tilde V^{\star}_{x}(m).
  \end{equation}
Hence,
\begin{equation}
    \label{031404-22}
\begin{split}
&\sum_{x,x'=0}^n \Big(\delta_{x,x'}-
M_{x,x'}(m) \Big)\tilde V_{x'}(m) \tilde V^{\star}_{x}(m)\\
&
={ - \tilde V_{0}(m) \sum_{x=0}^n 
M_{x,0}(m) \tilde V^{\star}_{x}(m)} + \sum_{x=0}^n\tilde v_x(m) \tilde V^{\star}_{x}(m).
\end{split}
  \end{equation}
We have the following.
\begin{lemma}
\label{lm011504-22}
There exists a constant ${\frak C}>0$, such that
\begin{align}
\label{021504-22}
\Big|\sum_{x,y=0}^n(\delta_{x,y}-M_{x,y}(m))f_y^\star f_x\Big|\ge  
    {\frak C} \sum_{x=0}^n |f_x |^2,\quad (f_0,\ldots,f_n)\in\mathbb C^{n+1}
\end{align}  
for $m\not=0$ and $n=1,2,\ldots$.
\end{lemma}
The lemma is shown in Appendix \ref{appC}, where we also prove the following
\begin{lemma}
\label{lm021504-22}
We have
\begin{align}
\label{041504-22}
{\frak M}:=\sup_m \left\{ \sum_{x=0}^n
  |M_{x,0}(m)|^2\right\}^{1/2}<+\infty.
\end{align}  
\end{lemma}

{\begin{lemma}
\label{lm011904-22}
There exists a constant ${\frak v}_*>0$ such that
\begin{align}
\label{011904-22}
 \left\{ \sum_{x=0}^n \lang  v_x^2\rang\right\}^{1/2}\le \frac{{\frak v}_*}{n},
\quad n=1,2,\ldots  
\end{align}  
\end{lemma}}

\begin{lemma}
\label{lm031504-22}
There exists a constant ${\frak V}_*>0$ such that
\begin{align}
\label{061504-22}
  \lang  V_0^2\rang^{1/2}\le \frac{{\frak V}_*}{n^{1/2}}
  \left\{ \sum_{x=0}^n \lang  V_x^2\rang\right\}^{1/4}
\end{align}  
\end{lemma}
The lemmas are shown in Appendix \ref{appC}.

We show how to apply these to
finish the proof of the theorem.

From \eqref{031404-22}, \eqref{021504-22} and the Cauchy-Schwarz
inequality we conclude that {
\begin{equation}
    \label{031504-22}
\begin{split}
  & {\frak C} \sum_{x=0}^n |\tilde V_x(m)|^2\le
  \left[ |\tilde V_0(m)|\left\{ \sum_{x=0}^n |M_{x,0}(m)|^2\right\}^{1/2}
    + \left\{ \sum_{x=0}^n |\tilde v_{x}(m)|^2\right\}^{1/2} \right]
    \left\{ \sum_{x=0}^n |\tilde V_x(m)|^2\right\}^{1/2},
\end{split}
\end{equation}
i.e.
\begin{equation}
  \label{eq:25}
  \begin{split}
  & {\frak C}^2 \sum_{x=0}^n |\tilde V_x(m)|^2\le
  \left[ {\frak M} |\tilde V_0(m)|
    + \left\{ \sum_{x=0}^n |\tilde v_{x}(m)|^2\right\}^{1/2} \right]^2
  \le 2 {\frak M}^2  |\tilde V_0(m)|^2 + 2 \sum_{x=0}^n |\tilde v_{x}(m)|^2.
\end{split}
\end{equation}
Summing over $m$ and using \eqref{061504-22} together with \eqref{011904-22} we obtain
\begin{equation}
  \label{eq:39}
  \begin{split}
  {\frak C}^2  \sum_{x=0}^n \lang  V_x^2\rang \le
  2  {\frak M}^2  \lang V_0^2\rang + 2 \sum_{x=0}^n \lang v_{x}^2\rang
  \le  \frac{2  ({\frak M}{\frak V}_*)^2}{n}
  \left\{ \sum_{x=0}^n \lang  V_x^2\rang\right\}^{1/2} + \frac{{\frak v}_*^2}{n^2}\\
  \le  \frac{({\frak M}{\frak V}_*)^4}{{\frak C}^2 n^2} +
  \frac {{\frak C}^2 }2 \sum_{x=0}^n \lang  V_x^2\rang + \frac{{\frak v}_*^2}{n^2},
\end{split}
\end{equation}
and \eqref{eq:91} follows.
}

% Continuing with the proof
% of the theorem, we  conclude therefore form \eqref{031504-22}, upon
% summing over $m$ that
% \begin{equation}
%   \label{eq:24}
%   \begin{split}
%     {\frak C}  \sum_{x=0}^n \lang  V_x^2\rang
%     \le  \sum_m \left[ {\frak M} |\tilde V_0(m)| +
%       \left\{ \sum_{x=0}^n |\tilde v_{x}(m)|^2\right\}^{1/2} \right]
%     \left\{ \sum_{x=0}^n |\tilde V_x(m)|^2\right\}^{1/2}
%   \end{split}
% \end{equation}
% \begin{equation}
%     \label{051504-22}
% \begin{split}
% &\left\{ \sum_{x=0}^n \lang  V_x^2\rang\right\}^{1/2}\le  {\frak M}\lang  V_0^2\rang^{1/2}
% + \left\{ \sum_{x=0}^n \lang  v_{x}^2\rang\right\}^{1/2} \\
% &
%  \le  \frac{{\frak M}{\frak v}_*}{n^{1/2}}\left\{ \sum_{x=0}^n \lang  V_x^2\rang\right\}^{1/4}
% + \left\{ \sum_{x=0}^n \lang  v_{x}^2\rang\right\}^{1/2} \\
% &
%  \le  \frac{{\frak M}{\frak v}_*}{n^{1/2}}\left\{ \sum_{x=0}^n \lang  V_x^2\rang\right\}^{1/4}
% +\frac{C}{n}
% \end{split}
%   \end{equation}
% for some constants $C,C'>0$ independent of $n$. The first two
% inequalities follow from applications of Lemmas \ref{lm021504-22} and
% \ref{lm031504-22}, while the last one 
% follows from \eqref{021012-21}. The proof of \eqref{eq:91} is
% therefore concluded.
\end{proof}

\section{Concluding remarks}
\label{sec:conclusions}

In this article we studied the energy transport
in the {periodic} state of a
pinned harmonic chain with bulk dynamics perturbed by a random flip
of the signs of the velocities. Work was done on the system by a
periodic forcing {acting} on the right
hand side of the chain and the heat was absorbed   by a heat bath
coupled to the system via a Langevin stochastic
thermostat at temperature $T_-$ on the left.
The asymptotic temperature profile \eqref{eq:5} should be seen as
the stationary solution of the heat equation:

%\pagebreak

\begin{equation}
    \label{eq:5ns}
    \begin{split}
      \partial_t T(t,u) &= \frac {D}{4\gamma} \partial_u^2 T(t,u) \qquad u\in (0,1)\\
      T(t,0) &= T_-, \quad  \partial_u T (t,1) = -\frac{4\gamma   J}{D},
      \quad T(0,u) = T_0(u),
    \end{split}
  \end{equation}
  where $  J$ is given by \eqref{051021-05}.
  In a companion work \cite{klo2} we prove that,
  {under proper conditions on the initial distribution,}
  for any {compactly supported}  test function $\varphi\in C([0,+\infty)\times[0,1])$,
  \begin{equation}
    \label{eq:3ns}
    \lim_{n\to\infty} \frac 1n \sum_{x=0}^n
    \int_{0}^{+\infty}\varphi\left(t,\frac xn \right) \mathcal E_x(n^2
    t)\dd t
    = \int_{0}^{+\infty}\int_0^1 \varphi (t,u) T(t,u) \dd t\dd u,
    \quad\mbox{in probability,}
  \end{equation}
  with $T(t,u)$  the solution of \eqref{eq:5ns}. 
  Notice the diffusive rescaling of space and time.

  {The diffusion  coefficient appearing in \eqref{eq:5ns} (defined in
  \eqref{eq:10}) has been computed in a different way in \cite[Theorem
    formula (74)]{BOS10} and \cite[Theorem
  3.2, formula (3.21)]{jko15}. It turns out that it equals to the
  diffusivity of a phonon performing a random walk on 
  the integer lattice
 with random
 scattering generated by the   noise. As a result
  \begin{equation}
    D=2\int_{\bbT}\left(\frac{\om'(k)}{2\pi}\right)^2dk
    \label{eq:41}
\end{equation}
where $\om'(k)/(2\pi)$ is a group velocity of a phonon of frequency
$k$ belonging to the one dimensional unit torus $\bbT$, that 
is the interval $[-1/2,1/2]$ with identified endpoints. Here
$\om(k)=\sqrt{\om_0^2+4\sin^2(\pi k)}$ is the dispersion relation of
the harmonic lattice considered in the present paper. In fact, if we
consider a more general type of noise that allows to scatter the
phonons of given frequency $k$, with the total scattering kernel
$R(k)$ (in the case of the flip noise $R(k)\equiv 1$) we would have
$$
D=2\int_{\bbT}\left(\frac{\om'(k)}{2\pi}\right)^2\frac{dk}{R(k)}.
$$}

As noted before the velocity reversals introduced in the dynamics serve the purpose of
making the heat conductivity finite. In their absence the harmonic crystal has an infinite
conductivity \cite{RLL67}.
The velocity reversals are thus an idealized substitute for the anharmonicities,
impurities and other defects which scatter  phonons and produce a
finite conductivity and {establishes} the validity of Fourier's law
in real solids.

An alternative way of modeling anharmonicity 
for achieving a finite conductivity is the introduction of
"self consistent" reservoirs. This was
introduced in \cite{BRV70,BRV75}  and fully analyzed in \cite{bll}
for describing the heat flow in a harmonic crystal in contact
with two heat reservoirs at different temperatures and  
no external force. In that model one introduces, in addition
to the external 
reservoirs, also "internal" {Langevin}  reservoirs for each particle.
% \tk{that replace the velocity flips of our present model}
Letting $T_x$ be the temperature of the reservoir at position $x=0,\ldots,n$,
with the same coupling $\gamma$ as used here,
Eq. \eqref{eq:qdynamicsbulk-av} remains unchanged while Eq. \eqref{eq:current}
will have an extra
term, $2\gamma(T_x-p_x^2)$, on its right hand side.
Solving for the periodic first and second moments of the system the internal
$T_x, x=1,\dots, n-1,$ are then determined
by the requirement that the time average of the internal heat flux, given by
the term in the square bracket above, vanishes in the stationary state.
{As a result,} there is no contribution to the average current from these
internal reservoirs and the limiting macroscopic behavior is the same
{as in a corresponding  velocity flip model. 
This approach can be
  modified % to apply to {our} system 
  by considering a periodic, instead of a constant,
self-consistent temperature profile. This will make the dynamics of
the first and second moments of the   position and momenta variables
identical with that of the flip model.}
An important property of the self consistent reservoir model is that the periodic
measure is Gaussian, consequently it is
determined by the first and second moments already computed here.

{Various possible extension of the present model are presented in the review article
  \cite{klos}: forcing acting on a particle in the bulk, unpinned dynamics,
  higher dimensional lattice, anharmonic interactions.}

 \appendix

\section{The proof of Theorem \ref{periodic}}
\label{appB}

Since the result does not depend on the scaling factor $n^a$,
standing by the force
${\cal F}(t)$, and the period size $\theta_n$ we assume that $a=0$ and $\theta_n=\theta$. Given a Borel probability measure $\mu$ on $\bbR^{2(n+1)}$ (see \eqref{eq:1})
we denote by $\big(\mathbf q_{\mu}(t), \mathbf p_{\mu}(t)\big)$ the
solution of \eqref{eq:flip}--\eqref{eq:pbdf}) such that  $\big(\mathbf q_{\mu}(0), \mathbf p_{\mu}(0)\big)$
is distributed according to $\mu$. Denote then by  $\big(\bar{\mathbf
  q}_{\mu}(t), \bar{\mathbf p}_{\mu}(t)\big)$ and $C_{\mu}(t)$ the
vector of averages and matrix of the mixed second moments of the solution, correspondingly. They are
defined by formulas \eqref{eq:6} and a $2\times 2$ block matrix
$$
 C _{\mu} (t)=\left[
  \begin{array}{ll}
    C^{(q)}_{\mu} (t)&C^{(q,p)}_{\mu} (t)\\
 C^{(p,q)}_{\mu} (t)   &C^{(p)}_{\mu} (t)
  \end{array}
\right]
$$
Each block is an $(n+1)\times(n+1)$ matrix
\begin{align*}
&C^{(q)}_{\mu} (t)=[  q_x(t) q_{x'}(t) ]_{x,x'=0,\ldots,n}\quad
C^{(p)}_{\mu} (t)=\bbE[  p_x(t) p_{x'}(t) ]_{x,x'=0,\ldots,n},\\
&C^{(q,p)}_{\mu} (t)=\bbE[   q_x(t) p_{x'}(t)]_{x,x'=0,\ldots,n}
\end{align*}
and $C^{(p,q)}_{\mu} (t)=[C^{(q,p)}_{\mu} (t)]^T$, where the initial state is taken to be $\mu$.

{By similar calculation as done in \eqref{eq:covev},}
their evolution is described by the system of linear differential
equations with periodic forcing
\begin{equation}
\label{010501-22}
\begin{split}
&\frac{\dd }{\dd t}\left(
  \begin{array}{c}\bar{\bf q}_{\mu}(t)\\
    \bar{\bf p}_{\mu} (t)
    \end{array}\right)=-A \left(
  \begin{array}{c}\bar{\bf q}_{\mu} (t)\\
    \bar{\bf p}_{\mu} (t)
    \end{array}\right)+ 
\;\cF(t/\theta){\rm e}_{p,n+1},\\
&
\frac{\dd }{\dd t}C_{\mu}(t)=-A C_{\mu}(t)-
C_{\mu}(t)A^T+\Sigma_2\big({\bf c}_{2,\mu}(t)\big)+  \cF(t/\theta) F(t),
\end{split}
\end{equation}
where 
\begin{equation}
\label{c2C}
{\bf c}_{2,\mu}(t)=\left(
  \begin{array}{c}
  C^{(p)}_{0,0,\mu}(t)  \\
    \vdots\\
     C^{(p)}_{n,n,\mu}(t)
  \end{array}
    \right) 
  \end{equation}
and 
\begin{equation}
\label{Ft}
                              F(t):=\left[
  \begin{array}{cc}
  0&   \bar {\bf q}_\mu(t) \otimes {\rm e}_{p,n+1} \\
&\\
     {\rm e}_{p,n+1}\otimes\bar {\bf q}_\mu (t)  & \ {\rm
                                                   e}_{p,n+1}\otimes
                                                   \bar {\bf p}_\mu
                                                   (t)+\bar {\bf
                                                   p}_\mu (t) \otimes
                                                   {\rm e}_{p,n+1} 
  \end{array}
\right]
 \end{equation}
Here
${\rm e}_{p,n+1}$ and $\Sigma_2$ are defined in  \eqref{epq} and
\eqref{S2} respectively. Suppose that we are given  a vector
$\bar X\in \bbR^{2(n+1)}$ and a symmetric non-negative definite
$2(n+1)\times2(n+1)$    matrix $S\ge \bar X\otimes \bar X$. Then,
equations \eqref{010501-22} describe the evolution of the first two
moments of the solution of \eqref{eq:flip}--\eqref{eq:pbdf})
whose initial distribution is a random vector with the first two
moments 
given by $\bar X$ and   $S$, respectively.

\subsection{The existence and uniqueness of  the  periodic mean and second
moment}
In the first step we show the existence of a periodic solution of 
\eqref{010501-22} that corresponds to the mean and covariance of a
certain probability evolution.
\begin{proposition}
\label{propB1}
There exists a  unique vector $\bar{\bf X}_{\rm per}=(\bar{\bf q}_{\rm per},
    \bar{\bf p}_{\rm per}  ) \in\bbR^{2(n+1)}$  and a non-negative symmetric
    matrix  $C_{\rm per}\ge \bar{\bf X}_{\rm per}\otimes \bar{\bf X}_{\rm per}$
    such that the solution %$\Big((\bar{\bf q}(t)$,
   of \eqref{010501-22} with
$$
\Big((\bar{\bf q}(0),
    \bar{\bf p} (0) ) ,C(0)\Big)=\Big((\bar{\bf q}_{\rm per},
    \bar{\bf p}_{\rm per}  ) ,C_{\rm per}\Big)
$$
satisfies
\begin{equation}
\label{030501-22}
\Big((\bar{\bf q}(0),
    \bar{\bf p} (0) ) ,C(0)\Big)=\Big((\bar{\bf q}(\theta),
    \bar{\bf p} (\theta) ) ,C(\theta)\Big).
\end{equation}
In addition, we have 
\begin{equation}
\label{011401-12}
 {C^{(p)}_{x,x}(t)} \ge T_-,\quad x=0,\ldots,n.
\end{equation}
\end{proposition}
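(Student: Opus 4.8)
The plan is to solve the linear system \eqref{010501-22} block by block, using throughout Proposition \ref{prop012212-21}, i.e. that every eigenvalue of $A$ has positive real part. \emph{Step 1: the periodic mean.} Since $\mathrm{Re}\,\sigma(A)>0$, the matrix $\mathrm{Id}-e^{-A\theta}$ is invertible. The first equation in \eqref{010501-22} is linear with $\theta$-periodic forcing, so it admits a $\theta$-periodic solution iff its value at $t=0$ is a fixed point of the affine map $\bar{\bf X}\mapsto e^{-A\theta}\bar{\bf X}+\int_0^\theta e^{-A(\theta-s)}\,\cF(s/\theta)\,{\rm e}_{p,n+1}\,\dd s$; there is exactly one such fixed point $\bar{\bf X}_{\rm per}=(\bar{\bf q}_{\rm per},\bar{\bf p}_{\rm per})$, hence a unique $\theta$-periodic mean $t\mapsto(\bar{\bf q}_{\rm per}(t),\bar{\bf p}_{\rm per}(t))$.

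\emph{Step 2: reduction of the second-moment equation.} With the periodic mean fixed, a direct computation from \eqref{Ft} shows $F(t)={\rm e}_{p,n+1}\otimes\bar{\bf X}_{\rm per}(t)+\bar{\bf X}_{\rm per}(t)\otimes{\rm e}_{p,n+1}$, so that $\cF(t/\theta)F(t)$ equals $\frac{\dd}{\dd t}\big(\bar{\bf X}_{\rm per}\otimes\bar{\bf X}_{\rm per}\big)+A\big(\bar{\bf X}_{\rm per}\otimes\bar{\bf X}_{\rm per}\big)+\big(\bar{\bf X}_{\rm per}\otimes\bar{\bf X}_{\rm per}\big)A^T$. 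Consequently, writing $\widetilde C(t):=C(t)-\bar{\bf X}_{\rm per}(t)\otimes\bar{\bf X}_{\rm per}(t)$, the second equation in \eqref{010501-22} becomes the autonomous affine equation $\dot{\widetilde C}=\mathcal L(\widetilde C)+R(t)$ on the space of symmetric $2(n+1)\times2(n+1)$ matrices, where $\mathcal L(C):=-AC-CA^{T}+\mathcal B(C)$, $\mathcal B$ being the linear part of $\Sigma_2(\cdot)$ (it places $4\gamma\,C^{(p)}_{x,x}$ in the $(p_x,p_x)$ slot for $x=1,\dots,n$, and zero elsewhere; see \eqref{D2}–\eqref{S2}), and $R(t)$ is the $\theta$-periodic, diagonal, positive semidefinite matrix carrying $4\gamma T_-$ in the $(p_0,p_0)$ slot and $4\gamma\,\bar p_x^{\,2}(t)$ in the $(p_x,p_x)$ slot for $x\ge1$.

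\emph{Step 3: the spectral input.} The crucial fact, established for the structurally identical covariance equation of the self-consistent chain in \cite{bll}, is that $\sigma(\mathcal L)\subset\{\mathrm{Re}\,z<0\}$, equivalently $\|e^{t\mathcal L}\|\to0$. One sees this as follows: $e^{t\mathcal L}$ leaves the cone of positive semidefinite matrices invariant, since for $C\ge0$ and $v\in\ker C$ one has $\langle\mathcal L(C)v,v\rangle=\langle\mathcal B(C)v,v\rangle=4\gamma\sum_{x=1}^{n}C^{(p)}_{x,x}|v_{p_x}|^2\ge0$; moreover $V(C):=\mathrm{tr}(HC)$, with $H$ the positive definite Hessian of $\mathcal H_n$, is a Lyapunov functional for $\dot C=\mathcal L(C)$, satisfying $\frac{\dd}{\dd t}V(C(t))=-2\gamma\,C^{(p)}_{0,0}(t)\le0$ — the Hamiltonian part and the velocity flips conserving $V$, only the Langevin damping at site $0$ dissipating it — and a LaSalle-type argument, using that the pinned chain is fully coupled to (controllable through) the single damping channel at site $0$, forces $V(C(t))\to0$, hence $e^{t\mathcal L}C(0)\to0$ for all $C(0)\ge0$ and therefore for all symmetric $C(0)$. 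This controllability step is the one I expect to be the main obstacle, and the point at which the concrete structure of the chain genuinely enters, rather than a soft dissipativity estimate.

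\emph{Step 4: conclusion and the lower bound.} Granting Step 3, $\mathrm{Id}-e^{\theta\mathcal L}$ is invertible, so $\dot{\widetilde C}=\mathcal L(\widetilde C)+R(t)$ has a unique $\theta$-periodic solution $\widetilde C_{\rm per}(t)=\int_0^{\infty}e^{u\mathcal L}R(t-u)\,\dd u$, which is positive semidefinite because each integrand is, by cone invariance and $R\ge0$. Putting $C_{\rm per}:=\widetilde C_{\rm per}(0)+\bar{\bf X}_{\rm per}(0)\otimes\bar{\bf X}_{\rm per}(0)$ gives the required periodic pair, with $C_{\rm per}\ge\bar{\bf X}_{\rm per}\otimes\bar{\bf X}_{\rm per}$ and $C_{\rm per}\ge0$; uniqueness of the pair follows from the uniqueness statements of Steps 1 and 3, and the periodicity relation \eqref{030501-22} holds by construction. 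Finally, since $R(t)\ge4\gamma T_-\,E_{00}$, where $E_{00}$ is the matrix with a single $1$ in the $(p_0,p_0)$ slot, cone monotonicity of $e^{u\mathcal L}$ yields $\widetilde C_{\rm per}(t)\ge4\gamma T_-\int_0^\infty e^{u\mathcal L}E_{00}\,\dd u=(-\mathcal L)^{-1}(4\gamma T_-E_{00})=:C_\infty$, the unique stationary covariance of the chain with the Langevin thermostat at temperature $T_-$ and \emph{no} forcing. That stationary state is the Gibbs measure at temperature $T_-$ (it is invariant under the Hamiltonian flow, under the velocity flips, and under the Langevin part), for which equipartition gives $(C_\infty)^{(p)}_{x,x}=T_-$. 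Hence $C^{(p)}_{x,x}(t)=\widetilde C^{(p)}_{x,x}(t)+\bar p_x^{\,2}(t)\ge T_-$, which is \eqref{011401-12}.
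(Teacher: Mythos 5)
Your Steps 1 and 2 are correct (the periodic mean and the reduction to $\dot{\widetilde C}=\mathcal L(\widetilde C)+R(t)$ with $\mathcal L(C)=-AC-CA^{T}+\mathcal B(C)$ are exactly right), and Step 4 would follow from Step 3. But Step 3, the assertion that $\sigma(\mathcal L)\subset\{\mathrm{Re}\,z<0\}$, is a genuine gap, and it is the heart of the proof. It is not "established in \cite{bll}": what \cite{bll} provides (and what the paper imports as Proposition \ref{prop012212-21}) is the stability of the drift matrix $A$ alone, i.e.\ of the pure Lyapunov part $C\mapsto -AC-CA^{T}$. Your $\mathcal L$ contains in addition the feedback $\mathcal B(C)$, which reinjects on the diagonal of the $(p,p)$ block precisely the damping that the flip noise contributes to $A$; stability of this Lyapunov-plus-feedback operator does not follow from stability of $A$ and must encode the fact that all the energy eventually drains through the single Langevin channel at site $0$. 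Your LaSalle sketch concedes the point: the key claim that the only flow-invariant set of positive semidefinite matrices on which $C^{(p)}_{0,0}\equiv 0$ is $\{0\}$ (an observability/Kalman-type statement for the covariance flow) is stated, not proved, and without it neither the invertibility of $\mathrm{Id}-e^{\theta\mathcal L}$, nor uniqueness, nor the representation $\widetilde C_{\rm per}(t)=\int_0^\infty e^{u\mathcal L}R(t-u)\,\dd u$, nor the comparison with the Gibbs covariance in Step 4 is available.

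For contrast, the paper sidesteps this spectral question entirely: it Duhamel-solves the matrix equation using only the stability of $A$, closes the problem on the vector of kinetic energies via \eqref{P2c}, and shows the resulting affine map is a strict contraction in the sup norm because the kernel matrix $M_{x,y}=\int_0^\theta{\frak G}_{x,y}(s)\,\dd s$ of \eqref{eq:51} is bistochastic with $M_{x,0}>0$ for every $x$; this strict positivity is proved concretely from the resolvent formula \eqref{022212-21} and the fact that $\psi_j(x)\psi_j(0)$ cannot vanish for all $j$ (cf.\ \eqref{eq:51a} and \eqref{laps}), which is exactly the quantitative form of the "every site is coupled to the damping at site $0$" property you leave open. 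Moreover, the paper obtains \eqref{011401-12} for free, since the contraction preserves the set $\{T_x\ge T_-\}$ by bistochasticity, whereas your (elegant) comparison of $\widetilde C_{\rm per}$ with the temperature-$T_-$ Gibbs covariance is again conditional on Step 3. To salvage your architecture you would need to actually prove Step 3 — e.g.\ show that if $C\ge 0$ and the $(p_0,p_0)$ entry of $e^{t\mathcal L}C$ vanishes for all $t\ge0$, then the $p_0$ row and column vanish and the ODE hierarchy, together with $\psi_j(0)\neq0$ for all $j$, forces $C=0$ — or replace it by the paper's contraction argument.
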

The remaining part of this section is devoted to the proof of this results.

\subsubsection{The existence of the periodic first moment}
Let
\begin{equation}
\label{010501-22a}
\left(
  \begin{array}{c}\bar{\bf q}\\
    \bar{\bf p}
    \end{array}\right):=  \int_{-\infty}^0 \cF(s /\theta)
e^{As}{\rm e}_{p,n+1}\dd s.
\end{equation}
Thanks to Proposition \ref{prop012212-21}
the  vector
$(\bar{\bf q},
    \bar{\bf p}  )$ is well defined. One can easily check that the
    solution of the first equation of 
  \eqref{010501-22} starting from the vector is given by 
\begin{equation}
\label{010501-22b}
\bar{\bf X}(t)=\left(
  \begin{array}{c}\bar{\bf q}(t)\\
    \bar{\bf p}(t)
    \end{array}\right):=  \int_{-\infty}^t \cF(s/\theta )
e^{-A (t-s)}{\rm e}_{p,n+1}\dd s.
\end{equation}
and is therefore
$\theta$-periodic. In
  fact, thanks to Proposition \ref{prop012212-21} the periodic
  solution has to be unique. Since the coordinates of
  $\bar{\bf X}(t)$ satisfy the first equation of \eqref{010501-22} we
  conclude that the matrix
$\bar{\bf X}_2(t):=\bar{\bf X}(t)\otimes \bar{\bf X}(t)$ satisfies
$$
\frac{\dd }{\dd t}\bar{\bf X}_2(t)=-A \bar{\bf X}_2(t)- \bar{\bf X}_2(t)
A^T+  \cF(t /\theta) F(t),
$$
and it is given by the formula
\begin{equation}
\label{P2}
\bar{\bf X}_2(t)= 
\int_{-\infty}^t\cF(s/\theta )e^{-A(t-s)} F(s)e^{-A^T(t-s)}\dd s ,\quad t\in\bbR.
\end{equation}

\subsubsection{The existence of the periodic second moment}
 Now we are going to establish the
  existence of a periodic second moment. 
Suppose that $C(t)$ is a periodic solution of the second equation of
\eqref{010501-22}. Using the argument made in the proof of Proposition
\ref{prop010612-21} we can conclude that it satisfies the equation
\begin{equation}
        \label{010612-21a}
        \begin{split}
          C(t) =  \int_{-\infty}^te^{-A(t-s)}\left(\Sigma_2\big({\bf c}_{2}(s)\big)
              + \cF(s/\theta ) F(s)\right) e^{-A^T(t-s)}\dd s\\
          =\int_{-\infty}^te^{-A(t-s)}\Sigma_2\big({\bf c}_{2}(s)\big) e^{-A^T(t-s)}\dd s
          +\bar{\bf X}_2(t)\\
          = \int_{0}^\infty
          e^{-As}\Sigma_2\big({\bf c}_{2}(t-s)\big) e^{-A^T s} \dd s +\bar{\bf X}_2(t)\\
          = \sum_{\ell=0}^\infty \int_{0}^\theta e^{-A(s+\ell \theta)}
          \Sigma_2\big({\bf c}_{2}(t-s)\big) e^{-A^T(s+\ell \theta)} \dd s +\bar{\bf X}_2(t)\\,\quad t\in\bbR,
\end{split}
      \end{equation}
      where the matrix $\Sigma_2$ is defined by \eqref{S2},
      ${\bf c}_{2}(s)$ relates to $C(s)$ via \eqref{c2C} and $F(s)$ is
defined by \eqref{Ft}, using $(\bar{\bf q}(t),
    \bar{\bf p}(t)  )$ instead of $(\bar{\bf q}_\mu(t),
    \bar{\bf p}_\mu(t)  )$. Conversely, any periodic symmetric matrix
    valued function $C(t)$ satisfying \eqref{010612-21a} is a periodic
    solution to the second equation of \eqref{010501-22}.

For $x,x',y=0,\ldots,n$ define
\begin{equation}
\label{gxx}
{g_{x,x',y}(s):=\sum_{\ell=0}^{+\infty}
\Big[e^{-A(s+\ell\theta)}\Big]_{x+n+1,y+n+1}\Big[e^{-A^T(s+\ell\theta)}\Big]_{y+n+1,x'+n+1}.}
 \end{equation}
Consider the following linear mapping: 
${\cal L}:[C(\bbT_\theta)]^{n+1}\to [C(\bbT_\theta)]^{n+1}$, where 
$\bbT_\theta:=\theta\bbT$ is the torus of size $\theta$,  that assigns to a given
vector  of $\theta$-periodic functions
${\bf T}(s)=[T_{0}(s),\ldots,T_n(s)]$ a vector valued function
\begin{equation}
\label{VPT}
{\cal L}{\bf T} :=({\frak G}_0{\bf T},\ldots, {\frak G}_n{\bf T}),
\end{equation}
where %\marginpar{\red{Corr.}}
\begin{align}
\label{021711-21}
 {\frak G}_x{\bf T}(t) = \sum_{y=0}^n \int_0^{\theta}{{\frak G}_{x,y}(s)} T_y\left(t-s\right)\dd s.
\end{align}
 Here     %\marginpar{\red{Corr.}}
\begin{equation}
\label{011504-22}
{{\frak G}_{x,y}(s)} = 4 \ga  g_{x,x,y}(s),\quad y=0,\ldots,n.
\end{equation} {Obviously, from \eqref{gxx}, we have ${\frak G}_{x,y}(s)\ge0$.}
Note also that although $ g_{x,x',y}(\cdot)$ need not be $\theta$-periodic
the functions ${\frak G}_x{\bf T}(t)$, $x=0,\ldots,n$  are
$\theta$-periodic. In addition, if $C(t)$ satisfies
\eqref{010612-21a}, then
\begin{equation}
\label{P2c}
{\bf c}_{2}(t) ={\cal L}{\cal T}({\bf c}_{2})(t) +\bar{\bf p}^2(t),
\end{equation}
where for a given ${\bf T}^T=(T_0,\ldots,T_n)\in\bbR^{n+1}$
$$
{\cal T} ({\bf T})= \left(
  \begin{array}{c}
T_-\\
  T_1 \\
    \vdots\\
     T_n
  \end{array}
    \right) ,\quad \bar{\bf p}^2(t) = \left(
  \begin{array}{c}
\bar p_0^2(t)\\
    \vdots\\
    \bar p_n^2(t)
  \end{array}
    \right).
$$
Conversely, by finding a solution ${\bf c}_{2}$ of \eqref{P2c} one can define then a
$\theta$-periodic function $C(t)$ by the right hand side of
\eqref{010612-21a}. The entries of the function corresponding to
{$C_{x,x}$, $x=n+1,\ldots,2n+1$} coincide with the coordinates of the
vector ${\bf c}_{2}$, by virtue of \eqref{P2c}. Thus, the function $C(t)$
solves equation \eqref{010612-21a}. We have reduced therefore the
problem of finding a periodic solution to the second equation of
\eqref{010501-22} to solving equation \eqref{P2c}.

\subsubsection{Solution of (\ref{P2c})}

Let 
{$$
{\cal C}_+:=[{\bf T}=(T_0,T_1,\ldots,T_n):\,T_x\in
C(\bbT_\theta)\,\mbox{and }T_x\ge T_-,\,x=0,\ldots,n].
$$
It is a closed subset of $\Big(C(\bbT_\theta)\Big)^{n+1}$, equipped
with the norm 
$$
|\!\|{\bf T}\|\!|:=\max\{ \|T_x\|_\infty,\,x=0,\ldots,n\}.
$$}
 Consider the mapping
\begin{equation}
\label{fP1}
{\frak T}=({\frak T}_0,\ldots, {\frak T}_n): {\cal C}_+\to \Big(C(\bbT_\theta)\Big)^{n+1},
\quad\mbox{where }\quad {\frak T}{\bf T}:={\cal L}{\cal T}({\bf T})+\bar{\bf p}^2.
\end{equation}
% where
% \begin{equation}
% \label{fP2}
% {\frak P}({\bf T})= [{\frak p}_0({\bf T}),\ldots, {\frak p}_n({\bf T})]
% \end{equation}
% for ${\bf T}^T=[T_0,T_1,\ldots,T_n]$ and,
% Using the notation of \eqref{021711-21} and \eqref{P2c} we have
% $$
% {\frak p}_x({\bf T})(t):=T_-\int_0^{\theta} {G_{x,0}}(s)\dd
% s+\sum_{x'=1}^n\int_0^\theta
% G_{x,x'}(s)T_{x'}\left(t-s\right)\dd s+\bar
% p_x^2(t),\quad x=0,\ldots,n.
% $$ 
% A temperature profile $\tilde{\bf T}^T=[\tilde
% T_0,\ldots,\tilde T_n]$ is self-consistent iff it is a fixed point for
% the mapping ${\frak P}$
% \begin{align}
% \label{030511-21}
% {\frak P}(\tilde{\bf T}) = \tilde{\bf T}.
% \end{align}

% The components of the mapping ${\cal L}$
% take on non-negative values when the components of ${\bf T}(s)$ are
% non-negative.
% Therefore, all the functions $G_{x,y}\in C[0,\theta]$ are non-negative.

{
  Using the notation of \eqref{021711-21} and \eqref{P2c} we have
$$
{\frak T}_x({\bf T})(t):=T_-\int_0^{\theta} {\frak G}_{x,0}(s)\dd
s+\sum_{x'=1}^n\int_0^\theta
{\frak G}_{x,x'}(s)T_{x'}\left(t-s\right)\dd s+\bar
p_x^2(t),\quad x=0,\ldots,n.
$$ 
%\marginpar{\red{Corr.!}}
%\red
{Comparing \eqref{010612-21aa} with \eqref{010612-21a}, after time averaging over a period, it is easy to identify}
  \begin{equation}
    \label{eq:51}
    \int_0^{\theta} {\frak G}_{x,y}(s)\dd s = M_{x,y}
  \end{equation}
  defined by \eqref{eq:54}. The matrix $[M_{x,y}]_{x,y=0}^n$ is symmetric,
  bi-stochastic  %\marginpar{\red{Corr.!}}\red
  {(as can be easily seen from \eqref{eq:54}).}.
  It also follows immediately that
  \begin{equation}
  \label{022910-21-b}
  1= \sum_{y=0}^n
  \int_0^{\theta} {\frak G}_{x,y}(s)\dd s,\quad x=0,1,\ldots,n
\end{equation}
and, as a consequence, ${\frak T}\big({\cal C}_+\big)\subset {\cal
  C}_+$. Furthermore, we claim that $M_{x,y}>0$
for all $x,y=0,\ldots,n$. Indeed,
a simple
calculation, using \eqref{A} and \eqref{012212-21}, yields
\begin{align}
\label{022212-21}
   &\big[ (\la+A)^{-1}\big]_{x+n+1,y+n+1}=
    \sum_{j=0}^n\frac{\la \psi_j(x)\psi_j(y) }{\la^2+2\ga \la+\mu_j},\notag\\
  &\big[ (\la+A)^{-1}\big]_{x,y+n+1}= \sum_{j=0}^n\frac{
     \psi_j(x)\psi_j(y)}{\la^2+2\ga \la+\mu_j}, \\
    &\big[ (\la+A)^{-1}\big]_{x+n+1,y}
    =-\sum_{j=0}^n\frac{  \mu_{j}\psi_j(x)\psi_j(y) }{\la^2+2\ga \la+\mu_j}, \notag\\
  &  \big[ (\la+A)^{-1}\big]_{x,y+n+1}= \sum_{j=0}^n\frac{ \psi_j(x)\psi_j(y)}{\la^2+2\ga \la+\mu_j}. \notag
\end{align}
The poles of the meromorphic functions appearing in \eqref{022212-21}
are given by %\marginpar{\red{Corr.}}
{\begin{equation}
\label{lapm}
\la_{j,\pm}=-\Big(\ga\pm\sqrt{\ga^2-\mu_j}\Big).
\end{equation}}
Suppose that $M_{x,y}=0$ for some
  $x,y$. From \eqref{eq:51} we conclude then that
  \begin{equation}
    \label{eq:51a}
  0=   M_{x,y}=\int_0^{\theta} {\frak G}_{x,y}(s)\dd s
  =4\ga\int_0^{+\infty}\Big[e^{-As}\Big]^2_{x+n+1,y+n+1}\dd s,
\end{equation}
which in turn would implies that
$\Big[e^{-As}\Big]_{x+n+1,y+n+1}\equiv0$ for all $s\ge0$, thus also
$$
0\equiv \big[ (\la+A)^{-1}\big]_{x+n+1,y+n+1}=
    \sum_{j=0}^n\frac{\la \psi_j(x)\psi_j(y) }{\la^2+2\ga \la+\mu_j}.
$$
As a result, we conclude that $ \psi_j(x)\psi_j(y)=0$, for all
$j=0,\ldots,n$, which is impossible.
}

\bigskip

We shall show that the mapping ${\frak T}$ has  a unique fixed point
in ${\cal C}_+$ by proving  that the mapping
  is a  contraction   in the  
  norm  $|\!\|\cdot\|\!|$.
Indeed, for ${\bf T}_j^T=[T_{j,0},T_{j,1},\ldots,T_{j,n}]$, $j=1,2$, we
have
 \begin{align*}
&
|{\frak T}_x({\bf T}_1)(t) -{\frak T}_x({\bf T}_2)(t)|=\Big|
  \sum_{x'=1}^n\int_0^\theta {\frak G}_{x,x'}(s)\Big[T_{1,x'}\left(t-s\right)-
  T_{2,x'}\left(t-s\right)\Big]\dd s\Big|\\
&
\le 
  \sum_{x'=1}^n\int_0^\theta {\frak G}_{x,x'}(s) \Big| T_{1,x'}\left(t-s\right)-
  T_{2,x'}\left(t-s\right)\Big|\dd s\\
&
\le \sum_{x'=1}^n \left(\int_0^\theta {\frak G}_{x,x'}(s)\dd s\right)\|
  T_{1,x'} -
  T_{2,x'}\|_{\infty}\\
&
\le \left(1- \int_0^\theta {\frak G}_{x,0}(s)\dd s\right)|\!\|{\bf T}_1-{\bf T}_2|\!\|,\quad x=0,\ldots,n.
\end{align*}
Therefore
 \begin{align*}
|\!\|{\frak T}({\bf T}_1)-{\frak T}({\bf T}_2)\|\!|
\le \rho |\!\|{\bf T}_1-{\bf T}_2|\!\|,
\end{align*}
where
$$
\rho:=\max \Big[ 1- \int_0^\theta {\frak G}_{x,0}(s)\dd s ,\,x=0,\ldots,n\Big]<1.
$$
We have shown that
$
\|{\frak T}({\bf T}_1) -{\frak T} ({\bf T}_2)\|_\infty
\le \rho\|{\bf T}_1-{\bf T}_2\|_{\infty}$
and the existence  of a unique fixed point follows.
This ends the proof of Proposition \ref{propB1}.
\qed

\subsection{The end of the proof of Theorem \ref{periodic}}

Suppose now that $\nu$ is a probability law whose first and second
moments are $\theta$-periodic, e.g. it could be a Gaussian
distribution with the mean and the second moment given by ${\bf
  P}_{\rm per}$ and $C_{\rm per}$, respectively.
Denote by 
$$
{\cal P}_{s,t}F({\bf q}, {\bf p})=\int_{\bbR^{2(n+1)}}F({\bf q}', {\bf
  p}'){\cal P}_{s,t}({\bf q}, {\bf p}; \dd {\bf q}', \dd {\bf p}')
$$ the evolution family of transition
probability operators corresponding to the dynamics described by
\eqref{eq:flip}--\eqref{eq:pbdf}. 
Consider the event
$
E:=[N_x({\theta})=0,\,x=1,\ldots,n].
$ 
We have $\bbP[E]>0$. Suppose that the dynamics  starts at  $({\bf q},
{\bf p})$. Then, for any $F\ge0$ we can write
\begin{equation}
\label{030801-22}
{\cal P}_{0,\theta}F({\bf q}, {\bf p})\ge \bbE\Big[F({\bf q}(\theta), {\bf p}(\theta),\,E\Big]=\bbP[E]{\cal Q}_{0, \theta}F({\bf q}, {\bf p}),
\end{equation}
{where ${\cal Q}_{s,t}$ is the evolution family of transition probability operators for the
non-homogeneous Ornstein-Uhlenbeck process $V(t)$ that corresponds to the
generator 
$  \mathcal G_t^{(g)} =  \mathcal A_t
 + 2  \gamma S_{-} $, see \eqref{eq:7} and \eqref{eq:8}. Using
 the hypoellipticity  of the
time homogeneous Ornstein-Uhlenbeck process $U(t)$ that corresponds to the
generator 
 $
 \mathcal G^{(g)}:=\mathcal A
 + 2  \gamma S_{-} ,
 $
 where
 $$
 \mathcal A = \sum_{x=0}^n p_x \partial_{q_x}
  + \sum_{x=0}^n  (\Delta_N q_{x}-\om^2_0q_x) \partial_{p_x},
$$
  see Section \ref{appA3}
 below, we can prove that there
 exist strictly positive transition
 probability density kernels $\rho_{s,t}$ corresponding to ${\cal
   Q}_{s,t}$. Suppose that the time homogeneous Ornstein-Uhlenbeck  process $U(t;u)$
 satisfies 
   the S.D.E.
 $$
dU(t;u)=-A_EU(t;u)\dd t+\sqrt{4 \gamma T_-}\Sigma \dd W(t),\quad U(0;u)=u
$$
where $A_E$ and $\Sigma$   are given by   $2\times
2$ block matrices, whose entries are $(n+1)\times (n+1)$ matrices 
\begin{equation}
\label{Sig}
A_E=
\left(
  \begin{array}{cc}
    0&-{\rm Id}_{n+1}\\
    -\Delta_{\rm N} +\om_0^2& 2\ga E
  \end{array}
\right)\quad \mbox{and}\quad \Sigma =
\left(
  \begin{array}{cc}
    0&0\\
    0& E
  \end{array}
\right),
\end{equation}
with
\begin{equation}
  \begin{split}
E: =
       \begin{bmatrix}
1& 0 &  \dots&0\\
                     0& 0   &\dots&0\\
                     \\
                     \vdots   & \vdots & \vdots&\vdots\\
 0& 0   & \dots &0
                        \end{bmatrix}.
 \end{split}
\label{eq:22}
\end{equation}
Here
$$
\dd W(s) =\left(\begin{array}{c}
                  dw_0^{(q)}(s)\\
                  \vdots\\
                  dw_n^{(q)}(s)\\
                  dw_0^{(p)}(s)\\
                                  \vdots\\
                  dw_n^{(p)}(s)
\end{array}\right).
$$
is a $2(n+1)$-dimensional standard Wiener process. Due to the
hypoellipticity, the  probability distribution of $U(t;u)$ have
densities that are given by $C^\infty$
smooth Gaussians.}

{The
non-homogeneous Ornstein-Uhlenbeck process $V(t;v)$ that corresponds to the
$  \mathcal G_t^{(g)}$ and  satisfies $V(0;v)=v$, can be described by
(cf \eqref{epq})
the solution of 
$$
dV(t;v)=\Big[-A_EV(t;v)+{\cal F}(t){\rm e}_{p,n+1}
\Big]\dd t+\Sigma \dd W(t),\quad V(0;v)=v.
$$
Hence
\begin{align*}
&V(t;v)=e^{-A_Et}v+\int_0^te^{-A_E(t-s)}\Sigma \dd
                 W(s)+\int_0^te^{-A_E(t-s)}{\cal F}(s){\rm e}_{p,n+1}\dd s\\
  &
    =U(t;v)+\int_0^te^{-A_E(t-s)}{\cal F}(s){\rm e}_{p,n+1}\dd s.
\end{align*}
Thus the distribution of $V(t;v)$ has
a density that is also given by a $C^\infty$
smooth Gaussian.}

 Thanks to \eqref{030801-22} we conclude that
\begin{equation}
\label{040801-22}
{\cal P}_{0, \theta}({\bf q}, {\bf p}; \dd {\bf q}', \dd {\bf p}')
\ge c_*\rho_{0,\theta}({\bf q}, {\bf p};  {\bf q}',  {\bf p}'), {\bf q}' \dd {\bf p}',
\end{equation}
where $c_*:=\bbP[E]$.
Then, $\nu_{0,t}:=\nu {\cal P}_{0,t}$ describes the law of $({\bf q}(t),
    {\bf p}(t)  )$ with the prescribed initial data. Thanks to
    Proposition \ref{propB1} we can see that {the total energy
      ${\cal H}(t):=\sum_{x=0}^n{\cal E}_x(t)$} (see \eqref{Ex}) is a Lyapunov
    function for the above system, since $\bbE{\cal H}(t)$ is $\theta$-periodic. 
The above implies that the family of laws $\{\nu_{0,t},\,t\ge0\}$ is
tight in $\bbR^{2(n+1)}$. Thus, also the family
$\mu_N:=N^{-1}\int_0^{N\theta}\nu_{0,s}\dd s$ is tight. Suppose that $\mu_\infty$
is its limiting measure, i.e. there exists a sequence $N'\to+\infty$
such that $\mu_{N'}\to\mu_\infty$, in the topology of  weak
convergence. Since  ${\cal
  P}_{s,t}$ has the Feller property one can easily conclude that $\mu_\infty {\cal
  P}_{0,\theta}=\mu_\infty$. Hence $\mu_s^P:=\mu_\infty {\cal
  P}_{0,s}$ , $s\in[0,+\infty)$ is a periodic stationary state. 

  Suppose that $\mu(\dd {\bf q}, \dd {\bf p})=f ({\bf
  q}, {\bf p}) \dd {\bf q} \dd {\bf p}$, where  $f$ is a $C^\infty$ smooth
probability density. One can show, using the regularity theory of
stochastic differential equations, that  $\mu {\cal P}_{0,\theta}$ is absolutely
continuous w.r.t. the Lebesgue measure and its density is also
$C^\infty$  smooth, see e.g. \cite[Corollary III.3.4, p. 303]{GS79}. This allows us to conclude further that $\mu {\cal P}_{0,\theta}$ is absolutely continuous, provided that $\mu$ is absolutely
continuous. We shall  denote by ${\cal P}_{0,\theta}$ the corresponding operator induced 
on $L^1(\bbR^{2(n+1)})$.  The operator ${\cal Q}_{0,\theta}$
corresponding to the Gaussian dynamics transforms  $\mu_\infty$ into an
absolutely continuous measure.  Thanks to \eqref{040801-22} we conclude that
\begin{equation}
\label{040801-22a}
\mu_\infty (\dd {\bf q}', \dd {\bf p}')=\mu_\infty{\cal P}_{0,\theta}(\dd {\bf q}', \dd {\bf p}')\ge c_*\mu_\infty{\cal Q}_{0,\theta}( {\bf q}', {\bf p}')\dd {\bf q}' \dd {\bf p}'.
\end{equation}
Therefore the   singular part of $\mu_\infty $ is of at most mass
$1-c_*$. Since ${\cal P}_{0,\theta}$ transforms the space of abolutely continuous
measures into itself, both the 
singular and absolutely continuous parts of $\mu_\infty $, after
normalization, become invariant   under ${\cal P}_{0,\theta}$. Iterating
this procedure we conclude, after $m$ steps, that the singular part
can be of at most mass $(1-c_*)^m$, which eventually leads to the
conclusion that the measure $\mu_\infty$ is absolutely continuous. The
respective density is positive, due to \eqref{040801-22}. This ends
the proof of Theorem \ref{periodic}.\qed

\subsection{Hypoellipticity of $  \mathcal G^{(g)}$}

\label{appA3}

We show that the operator
$
 \Big(\mathcal G^{(g)}\Big)^*-\partial_t
$
is hypoelliptic in $\bbR\times\bbR^{2(n+1)}$. 
Here
\begin{equation}
  \label{eq:7-1b}
 \Big(\mathcal G^{(g)}\Big)^*-\partial_t= {\cal X}_0+{\cal X}_1^2 +2  \ga,
\end{equation}
where
$$
{\cal X}_0=-\partial_t-\mathcal A +  2 \ga p_0\partial_{p_0}  \quad
\mbox{and}\quad {\cal X}_1=\sqrt{ 2 \ga T_-}\partial_{z_0} .
$$
% We have
% $$
% \big[{\cal X}_0, {\cal X}_1\Big]=\sqrt{ 2 \ga T_-}\big[{\cal X}_0,\partial_{p_0}\Big]=\sqrt{ 2 \ga T_-}\Big(\partial_{y_0}-2\ga\partial_{p_0}\Big).
% $$
Let 
$
{\cal X}_1^{(0)}:={\cal X}_1=\sqrt{ 2\ga T_-}\partial_{p_0} .
$
Since
$$
\big[{\cal X}_0, \partial_{p_0}\big]   
= \partial_{q_0}-2\gamma\partial_{p_0}
$$
the commutator
\begin{align*}
  {\cal X}_2^{(0)}:=\big[{\cal X}_0, {\cal X}_1^{(0)}\big]=\sqrt{ 2\ga T_-}
    \big[{\cal X}_0, \partial_{p_0}\big]   
    =\sqrt{ 2\ga T_-}\{\partial_{q_0}-2\gamma\partial_{p_0}\}.
\end{align*}

Next, we have
$$
\big[{\cal X}_0,\partial_{q_0}\Big]=-(1+\om_0^2)\partial_{p_0}+\partial_{p_1}.
$$
Hence,
\begin{align*}
  &{\cal X}_1^{(1)}:=\big[{\cal X}_0, {\cal X}_2^{(0)}\big]
    =\sqrt{ 2\ga T_-}\Big\{\big[{\cal
    X}_0,\partial_{q_0}]-2\gamma\big[{\cal
    X}_0,\partial_{p_0}]\Big\}\\
  &
   =\sqrt{ 2\ga T_-}\partial_{p_1}+c_0^{(0)} \partial_{p_0}+d_0^{(0)}  \partial_{q_0}
\end{align*}
for some constants $c_0^{(0)}$ and $d_0^{(0)}$.
Note that,
$
\big[{\cal X}_0,\partial_{p_1}\Big]=\partial_{q_1}.
$
Therefore,
\begin{align*}
  &{\cal X}_2^{(1)}:=\big[{\cal X}_0, {\cal X}_1^{(1)}\big]
   =\sqrt{ 2 \ga T_-}\big[{\cal X}_0,
    \partial_{p_1}\big]+c_0^{(0)}\big[{\cal X}_0,
    \partial_{z_0}\big]+d_0^{(0)} \big[{\cal X}_0, \partial_{y_0}\big]\\
  &
    =\sqrt{ 2\ga T_-}
    \partial_{q_1}+c_1^{(1)} \partial_{p_1}+c_0^{(1)} \partial_{p_0}+d_0^{(1)} \partial_{q_0}.
\end{align*}
for some constants $c_0^{(1)}$,  $c_1^{(1)}$  and
$d_0^{(1)}$.
Continuing  calculations along those lines we get that
\begin{align*}
  &{\cal X}_1^{(m)}:=\big[{\cal X}_0, {\cal X}_2^{(m-1)}\big]
    =\sqrt{ 2 \ga T_-}
    \partial_{p_m}+\sum_{j=0}^{m-1}c_j^{(m)}
    \partial_{p_j}+\sum_{j=0}^{m-1}d_j^{(m)} \partial_{q_j},\\
  &{\cal X}_2^{(m)}:=\big[{\cal X}_0, {\cal X}_2^{(m-1)}\big]
    =\sqrt{ 2\ga T_-}
    \partial_{q_m}+c_m^{(m)}
    \partial_{p_m}+\sum_{j=0}^{m-1}c_j^{(m)}(t)
    \partial_{z_j}+\sum_{j=0}^{m-1}d_j^{(m)} \partial_{q_j}.
\end{align*}
for some constants $c_j^{(m)},d_j^{(m)}$.  In conclusion, we can see that
${\cal X}_0$, ${\cal X}_1^{(m)}$ and ${\cal X}_2^{(m)}$, $0\le m\le n$ generate the
tangent space to $\bbR\times \bbR^{2(n+1)}$ so the operator $
 \Big(\mathcal G^{(g)}\Big)^*-\partial_t
$
is hypoelliptic in $\bbR\times\bbR^{2(n+1)}$ by virtue of the
H\"ormander theorem.

\section{Green Functions convergence}
\label{sec:green-fnc}

Recall that $G_{\om_0}$ and $G^n_{\om_0}$ are the Green's functions
corresponding to
$\om_0^2-\Delta$ and $\om_0^2-\Delta_{\rm N}$, where $\Delta$ is the
free lattice laplacian on $\mathbb Z$ and $\Delta_{\rm N}$ is the Neumann
discrete laplacian on $\{0,1,\dots , n\}$, 
see Sections \ref{sec2.6} and \ref{sec2.7}, respectively.

\subsection{Estimates on oscillating sums}
\label{sec:estim-oscill-sums}

 Define $\chi_n(x)$ as the $n+1$-periodic extension of 
$\chi_n(x):=(1+x)\wedge (n+2-x)$, $x\in[0,n+1]$.
Suppose that $\Phi:\bbR^2\to \mathbb C$ is a $\theta,\theta'$-periodic function in
each variable respectively.
Denote
\begin{align*}
H_{x,x'}^{(n)}
  =\frac{1}{(n+1)^2}\sum_{j,j'=0}^n\Phi\left(\frac{\theta
  j}{n+1}, \frac{\theta' j'}{n+1}\right) \exp\left\{\frac{ 2i\pi  
  j
  x}{n+1}\right\} \exp\left\{\frac{ 2i\pi   j'x'}{n+1}\right\}
\end{align*}
 for $x,x'\in\bbZ$.
\begin{lemma}
\label{lm012812-21}
Suppose that $\Phi$ is of $C^k$-class for some $k\ge1$. Then, there
exists $C$ such that 
\begin{equation}
\label{012812-21}
\Big|H_{x,x'}^{(n)}\Big|\le\frac{C}{\chi^\ell_n(x) \chi^{\ell'}_n(x')},\quad
x,x'\in \bbZ,\,n\ge 1,\,\ell,\ell'\ge0,\,\ell+\ell'\le k.
\end{equation}
\end{lemma}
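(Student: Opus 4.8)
The plan is to read $H^{(n)}_{x,x'}$ as a two-dimensional discrete Fourier coefficient and exploit the crucial fact that, by the $\theta,\theta'$-periodicity of $\Phi$, the array $a_{j,j'}:=\Phi\bigl(\tfrac{\theta j}{n+1},\tfrac{\theta' j'}{n+1}\bigr)$ is $(n+1)$-periodic in each index, so that summation by parts over a full period produces \emph{no boundary terms}. Write $N:=n+1$ and $\omega:=e^{2\pi i/N}$, so that $H^{(n)}_{x,x'}=N^{-2}\sum_{j,j'=0}^{N-1}a_{j,j'}\,\omega^{jx}\omega^{j'x'}$. The elementary identity to be used repeatedly is: for any $N$-periodic sequence $(c_j)_{j\in\bbZ}$ and any $x$ with $\omega^x\ne1$,
\[
\sum_{j=0}^{N-1}c_j\,\omega^{jx}=\frac{\omega^x}{1-\omega^x}\sum_{j=0}^{N-1}(\Delta c)_j\,\omega^{jx},\qquad (\Delta c)_j:=c_{j+1}-c_j,
\]
which follows from the cyclic shift $j\mapsto j+1$ and has no boundary term precisely because the sum runs over a full period.

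First I would treat the generic case $x\not\equiv0$ and $x'\not\equiv0\pmod N$. Applying the identity $\ell$ times in the variable $j$ (with $j'$ frozen) and $\ell'$ times in $j'$ — legitimate since $\Delta^{r}_j\Delta^{r'}_{j'}a$ is again $(n+1)$-periodic in each index, and $\Delta_j,\Delta_{j'}$ commute — one obtains
\[
H^{(n)}_{x,x'}=\frac{1}{N^2}\left(\frac{\omega^x}{1-\omega^x}\right)^{\!\ell}\!\left(\frac{\omega^{x'}}{1-\omega^{x'}}\right)^{\!\ell'}\sum_{j,j'=0}^{N-1}\bigl(\Delta^{\ell}_j\Delta^{\ell'}_{j'}a\bigr)_{j,j'}\,\omega^{jx}\omega^{j'x'}.
\]
Then I would estimate the three factors separately. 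By the iterated integral (mean-value) representation of finite differences, $|\Delta^{\ell}_j\Delta^{\ell'}_{j'}a|\le(\theta/N)^{\ell}(\theta'/N)^{\ell'}\,\|\partial_1^{\ell}\partial_2^{\ell'}\Phi\|_\infty$, which is finite as soon as $\ell+\ell'\le k$; hence the last sum is bounded by $N^2\cdot C_{\ell,\ell'}N^{-(\ell+\ell')}$. For the oscillatory denominators, $|1-\omega^x|=2|\sin(\pi x/N)|\ge 4\,\mathrm{dist}(x,N\bbZ)/N=4(\chi_n(x)-1)/N\ge 2\chi_n(x)/N$, using $\chi_n(x)-1=\mathrm{dist}(x,N\bbZ)$ and $\chi_n(x)\ge2$ for $x\not\equiv0$; thus $|1-\omega^x|^{-\ell}\le(N/2\chi_n(x))^{\ell}$, and likewise for $x'$. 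Multiplying the three bounds, every power of $N$ cancels and one is left with $|H^{(n)}_{x,x'}|\le C\,\chi_n(x)^{-\ell}\chi_n(x')^{-\ell'}$, with $C$ depending only on $\ell,\ell',\theta,\theta'$ and on the norms of derivatives of $\Phi$ up to order $k$.

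Finally I would dispose of the degenerate cases. If $x\equiv0\pmod N$ then $\omega^{jx}=1$ and $\chi_n(x)=1$, so $H^{(n)}_{x,x'}=N^{-2}\sum_{j'}b_{j'}\omega^{j'x'}$ with $b_{j'}:=\sum_j a_{j,j'}$, an $N$-periodic sequence with $|\Delta^{\ell'}_{j'}b|\le N\cdot(\theta'/N)^{\ell'}\|\partial_2^{\ell'}\Phi\|_\infty$; the same single-variable summation-by-parts estimate yields $|H^{(n)}_{x,x'}|\le C\,\chi_n(x')^{-\ell'}=C\,\chi_n(x)^{-\ell}\chi_n(x')^{-\ell'}$. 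The case $x'\equiv0$ is symmetric, and if $x\equiv x'\equiv0$ the trivial bound $|H^{(n)}_{x,x'}|\le\|\Phi\|_\infty$ suffices since then $\chi_n(x)=\chi_n(x')=1$. The only mildly delicate point — and it is where the hypothesis $\ell+\ell'\le k$ with $\Phi\in C^k$ is used — is the finite-difference bound on $\Delta^{\ell}_j\Delta^{\ell'}_{j'}a$, which needs the mixed partial $\partial_1^{\ell}\partial_2^{\ell'}\Phi$ to exist and be bounded; everything else is bookkeeping of powers of $N$.
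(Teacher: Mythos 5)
Your proof is correct and follows essentially the same route as the paper: summation by parts over a full period (with no boundary terms thanks to the periodicity of $\Phi$), the finite-difference bound $O(N^{-\ell-\ell'})$ coming from the $C^k$ regularity, and the lower bound $|1-e^{2\pi i x/(n+1)}|\ge c\,\chi_n(x)/(n+1)$, iterated $\ell$ times in $j$ and $\ell'$ times in $j'$. The only difference is that you carry out the iteration explicitly and treat the degenerate residues $x\equiv 0$ or $x'\equiv 0 \pmod{n+1}$ separately, a point the paper's sketch leaves implicit.
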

\proof To simplify the notation we suppose that $\theta=\theta'=1$.
Summation by parts yields
\begin{align*}
&H_{x,x'}^{(n)}
=\frac{1}{(n+1)^2}\sum_{j,j'=0}^n\Phi\left(\frac{
  j}{n+1}, \frac{ j'}{n+1}\right)
\frac{\exp\left\{\frac{ 2 i\pi
  x}{n+1}\right\}-1}{\exp\left\{\frac{ 2 i\pi
  x}{n+1}\right\}-1}\exp\left\{\frac{ 2 i\pi
  jx}{n+1}\right\} \exp\left\{\frac{  2i\pi j'x'}{n+1}\right\} 
\end{align*}
\begin{align*}
&
  =\frac{1}{(n+1)^2 [\exp\left\{\frac{  i\pi
  x}{n+1}\right\}-1]} \sum_{j,j'=0}^n\Big[\Phi\left(\frac{
  j-1}{n+1}, \frac{ j'}{n+1}\right)-
  \Phi\left(\frac{
  j}{n+1}, \frac{ j'}{n+1}\right)\Big]
\\
&
\times \exp\left\{\frac{ 2 i\pi
  jx}{n+1}\right\} \exp\left\{\frac{ 2 i\pi j'x}{n+1}\right\}
\end{align*}
Since $\Phi$ is of $C^1$ class
$$
\Big|\Phi\left(\frac{
  j-1}{n+1}, \frac{ j'}{n+1}\right)-
  \Phi\left(\frac{
  j}{n+1}, \frac{ j'}{n+1}\right)\Big|\le \frac{C}{n+1}
$$
for some constant $C>0$. 
In addition, there exists $c>0$ such that
$$
\left|\exp\left\{\frac{ 2 i\pi
  x}{n+1}\right\}-1\right|\ge \frac{c\chi_n(x)  }{n+1}
$$
for
$x,x'\in \bbZ,\,n\ge1$.
Thus, there exists $C>0$ such that
$$
|H_{x,x'}^{(n)}|\le \frac{C}{\chi_n(x)},\quad x,x'\in\bbZ. 
$$
Iterating this argument in the regularity degree $k$ of $\Phi$ we conclude \eqref{012812-21}.\qed

\subsection{Application}

An application concerns the approximation of the Green's function
$G_{\om_0}$ by $G_{\om_0}^n$ along the diagonal.
  \begin{lemma}\label{lem-ful}
   We have
\begin{equation}
    \label{010703-22}
G_{\om_0}^n(y,y)=G_{\om_0}(0)+\tilde H^{(n)}(y) +O\Big(\frac{1}{n}\Big),\quad y=0,\ldots,n,\,n\ge1.
\end{equation}
 Here for some constant $C>0$ we have
\begin{align}
\label{042912-21}
 |\tilde H^{(n)}(y)|\le \frac{C}{\chi^2(y)} ,\quad y=0,\ldots,n,\,n\ge1.
\end{align}
  \end{lemma}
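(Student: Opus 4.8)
The plan is to start from the explicit eigenfunction expansion \eqref{Gpsi}, peel off a non-oscillating Riemann-sum part that produces $G_{\om_0}(0)$, and control the remaining oscillatory part by the summation-by-parts estimate of Lemma \ref{lm012812-21}. Using \eqref{laps} and $\cos^2\alpha=\tfrac12(1+\cos 2\alpha)$, for $y\in\{0,\dots,n\}$ one has
\begin{equation*}
G^n_{\om_0}(y,y)=\sum_{j=0}^n\frac{\psi_j(y)^2}{\mu_j}
=\frac{1}{(n+1)\om_0^2}+\frac{1}{n+1}\sum_{j=1}^n\frac{1}{\mu_j}
+\frac{1}{n+1}\sum_{j=1}^n\frac{\cos\!\big(\pi j(2y+1)/(n+1)\big)}{\mu_j},
\end{equation*}
with $\mu_j=\om_0^2+4\sin^2(\pi j/(2(n+1)))$. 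The first term is $O(1/n)$. The second is the Riemann sum of $g(u)=(\om_0^2+4\sin^2(\pi u/2))^{-1}$, which is $C^\infty$ on $[0,1]$ since $\om_0>0$; hence it equals $\int_0^1 g(u)\,\dd u+O(1/n)$, and the change of variables $u\mapsto 2u$ together with the symmetry of $u\mapsto\sin^2(\pi u)$ about $u=1/2$ gives $\int_0^1 g(u)\,\dd u=\int_0^1(\om_0^2+4\sin^2(\pi u))^{-1}\dd u=G_{\om_0}(0)$ by \eqref{GR}. This already yields \eqref{010703-22}, with $\tilde H^{(n)}(y)$ taken to be the third, oscillatory, sum.

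It then remains to prove the decay bound \eqref{042912-21}. Writing $\cos=\operatorname{Re}e^{\ii(\cdot)}$ and using that $j\mapsto\cos(\pi j(2y+1)/(n+1))$ and $j\mapsto 1/\mu_j$ are both even and $2(n+1)$-periodic in $j$, I would fold the oscillatory sum onto a full period $N:=2(n+1)$, obtaining --- up to an explicit $O(1/n)$ remainder produced by the two ``diagonal'' indices $j=0$ and $j=n+1$, which is absorbed into the error term of \eqref{010703-22} ---
\begin{equation*}
\tilde H^{(n)}(y)=\operatorname{Re}\Big[\frac{1}{N}\sum_{j=0}^{N-1}\Phi(j/N)\,e^{2\pi\ii j(2y+1)/N}\Big],\qquad
\Phi(u)=\big(\om_0^2+4\sin^2(\pi u)\big)^{-1},
\end{equation*}
with $\Phi$ of class $C^\infty$ and $1$-periodic. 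By the single-variable summation-by-parts argument of Lemma \ref{lm012812-21} (applied over the full period $j\in\{0,\dots,N-1\}$, with $\ell=2$), this sum is $\le C\,\chi(2y+1)^{-2}$, $\chi$ being the corresponding periodic weight. Since for $y\in\{0,\dots,n\}$ the frequency $2y+1$ lies in $(0,N)$ and $\chi(2y+1)$ is, up to a bounded factor, $\min(1+y,\,n+2-y)=\chi_n(y)$, we conclude $|\tilde H^{(n)}(y)|\le C'\chi_n(y)^{-2}$, which is \eqref{042912-21}.

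The main obstacle is precisely this last step. Lemma \ref{lm012812-21} is stated for a double sum over $\{0,\dots,n\}^2$ with an $(n+1)$-periodic symbol, whereas here we meet a single sum whose natural period is $2(n+1)$ and which carries the half-integer frequency shift $2y+1$ inherited from the Neumann eigenfunctions \eqref{laps}; the boundary/diagonal terms generated by the period-folding must be tracked carefully so that they fall into the $O(1/n)$ error rather than spoil the $\chi_n^{-2}$ bound. A clean alternative that avoids the oscillating-sum machinery is the method of images: from \eqref{011205-21c} and uniqueness of the Green's function one verifies
\begin{equation*}
G^n_{\om_0}(x,y)=\sum_{k\in\bbZ}\Big[G_{\om_0}\big(x-y+2k(n+1)\big)+G_{\om_0}\big(x+y+1+2k(n+1)\big)\Big]
\end{equation*}
(the conventions $q_{-1}=q_0$, $q_{n+1}=q_n$ force reflections about $-\tfrac12$ and $n+\tfrac12$); then at $x=y$ the $k=0$ term of the first sum is $G_{\om_0}(0)$, the remaining terms of that sum are $O(e^{-cn})$, and $\tilde H^{(n)}(y):=\sum_{k\in\bbZ}G_{\om_0}(2y+1+2k(n+1))$ is dominated by $G_{\om_0}(2y+1)+G_{\om_0}(2(n-y)+1)$, which by the exponential formula \eqref{GR} is $\le C'\chi_n(y)^{-2}$ because exponential decay beats the quadratic weight (note $\chi_n(y)\le 1+y$ and $\chi_n(y)\le 2(n+1-y)$ for $y\le n$).
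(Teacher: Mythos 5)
Your main line of argument is the same as the paper's proof: expand $G^n_{\om_0}(y,y)$ through \eqref{Gpsi}--\eqref{laps}, use $\cos^2\alpha=\tfrac12(1+\cos2\alpha)$ to split off the Riemann sum of $u\mapsto(\om_0^2+4\sin^2(\pi u/2))^{-1}$, identify its limit with $G_{\om_0}(0)$ via \eqref{GR} up to $O(1/n)$, and define $\tilde H^{(n)}(y)$ as the remaining oscillatory sum, folded onto the full period $2(n+1)$ with frequency $2y+1$ and bounded by the summation-by-parts estimate of Lemma \ref{lm012812-21}. The ``obstacle'' you flag at the end is real but mild: Lemma \ref{lm012812-21} is stated for a double sum with an $(n+1)$-periodic symbol, whereas here one needs its one-variable analogue over a period of length $2(n+1)$ evaluated at the odd frequency $2y+1$; the paper cites the lemma directly at this point, exactly as you do, and the summation-by-parts argument indeed carries over verbatim (each integration by parts gains a factor $\lesssim N/\chi_{N-1}(2y+1)$ with $N=2(n+1)$, and $\chi_{N-1}(2y+1)\asymp\chi_n(y)$, which you check correctly), so your treatment is if anything more explicit than the paper's. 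Your alternative via the method of images is not in the paper but is correct: the reflection formula $G^n_{\om_0}(x,y)=\sum_{k\in\bbZ}\bigl[G_{\om_0}(x-y+2k(n+1))+G_{\om_0}(x+y+1+2k(n+1))\bigr]$ does satisfy \eqref{011205-21c} with the Neumann conventions $q_{-1}=q_0$, $q_{n+1}=q_n$ (the image sources fall outside $\{0,\dots,n\}$), and since $G_{\om_0}$ decays exponentially by \eqref{GR}, it yields \eqref{010703-22} with an exponentially small error and \eqref{042912-21} immediately from $G_{\om_0}(2y+1)+G_{\om_0}(2(n-y)+1)\le C\chi_n^{-2}(y)$; this route avoids the oscillating-sum machinery altogether and is arguably cleaner, though it buys nothing extra for the way the lemma is used later, where only the stated $O(1/n)$ accuracy and the $\chi_n^{-2}$ decay are needed.
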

\proof
Using the definition of the Green's function \eqref{Gpsi} (with $\ell=0$) and formulas
\eqref{laps}
  we obtain
\begin{align*}
  G_{\om_0}^n(y,y)
=\frac{ 1}{n+1}\sum_{j=0}^n \Xi\left(\frac{  j}{n+1}\right) \left[1+\cos\left(\frac{\pi j(2y+1)}{n+1}\right) \right]+O\Big(\frac{1}{n}\Big),
\end{align*}
where 
$$
\Xi\left(u \right)=\left\{4\sin^2\left(\frac{\pi
     u}{2}\right)+\om_0^2  \right\}^{-1}.
$$
As a result we write $G_{\om_0}^n(y,y)$ in the form
\eqref{010703-22}, with 
\begin{align*}
\tilde H_y^{(n)}=\frac{ 1}{2(n+1)}\sum_{j=-n-1}^n\cos\left(\frac{\pi
  j(2y+1)}{n+1}\right) \Xi\left(\frac{  j}{n+1}\right).
\end{align*} Estimate \eqref{042912-21} is then a consequence of Lemma \ref{lm012812-21}.
\qed

  \section{Proofs of Lemmas \ref{lm011504-22}, \ref{lm021504-22} and \ref{lm031504-22}}

\label{appC}

\subsection{Proof of Lemma \ref{lm011504-22}}

For $m\in\bbZ$ and $g=(g_0,\ldots,g_n)\in\bbR^{n+1}$ define 
$$
S(m):=\int_0^{+\infty}e^{-2\pi i
  ms/\theta}e^{-As}\Sigma(g) e^{-A^Ts} \dd s=
\left[
  \begin{array}{cc}
  S^{q}(m)&   S^{qp}(m)\\
&\\
     S^{pq}(m)&   S^{p}(m)
  \end{array}
\right],
$$
where $\Sigma_2$ is defined in \eqref{S2}.
Note that
\begin{equation}
\label{MmD}
\sum_{y=0}^nM_{x,y}(m)g_y=S^{p}_{x,x}(m).
\end{equation}
Arguing similarly as in the proof of \eqref{SA} we get 
\begin{equation}
  \label{DA}
A S(m)+ S(m) A^T +\frac{2\pi i
  m}{\theta}S(m)=\Sigma_2(g).
\end{equation}
Denote
$$
 \tilde{ S}^{q,p}_{j,j'}(m)=\sum_{x,x'=0}^nS^{q,p}_{x,x'}(m)\psi_j(x) \psi_{j'}(x')
$$ 
Following the same manipulations as those leading to \eqref{163011-21a} 
we obtain
\begin{align}
\label{011304-22a}
 &\tilde S^{p}_{j,j'} (m)=\frac12 \tilde S^{q}_{j,j'}(m)
   \Big[\mu_{j'}+\mu_j + \Big(2\ga  +\frac{2\pi i
  m}{\theta}\Big) \frac{\pi i
  m}{\theta} \Big]
   , \notag\\
  &\tilde S^{q}_{j,j'}(m)\Big[\mu_{j'}-\mu_j - \frac{2\pi i
  m}{\theta}\Big(2\ga  +\frac{2\pi i
  m}{\theta}\Big)\Big]+2\Big(2\ga  +\frac{2\pi i
  m}{\theta}\Big)
\tilde S^{qp}_{j,j'} (m) =0,\\
  &  \tilde S^{qp}_{j,j'}(m)\Big(\mu_{j}-\mu_{j'} \Big)+ \frac{2\pi i
  m}{\theta}\tilde S^{q}_{j,j'}(m)\mu_{j'}  +\Big(4\ga  +\frac{2\pi i
  m}{\theta}\Big)\tilde S^{p}_{j,j'}(m)= \tilde F_{j,j'}
   \notag
\end{align}
and
$
\tilde F_{j,j'}=\sum_{x=0}^n\psi_j(x)\psi_{j'}(x)g_x.
$
{Determine $\tilde S^{qp}_{j,j'}$ from the second equation. It equals
\begin{align*}
\tilde S^{qp}_{j,j'}=[2\Big(2\ga  +\frac{2\pi i
  m}{\theta}\Big)]^{-1}\tilde S^{q}_{j,j'}\Big[\mu_{j}-\mu_{j'} + \frac{2\pi i
  m}{\theta}\Big(2\ga  +\frac{2\pi i
  m}{\theta}\Big)\Big] .
\end{align*}
Substituting into the third equation we get
\begin{align*}
&  \tilde S^{p}_{j,j'}= \Big(4\ga  +\frac{2\pi i
  m}{\theta}\Big)^{-1}\Big\{ F_{j,j'}-\tilde S^{qp}_{j,j'}\Big(\mu_{j}-\mu_{j'}  \Big) - \frac{2\pi i
  m}{\theta}\tilde S^{q}_{j,j'}\mu_{j'}  \Big\}
   \\
&
=\Big(4\ga  +\frac{2\pi i
  m}{\theta}\Big)^{-1}\Big\{ F_{j,j'}
 -\tilde S^{q}_{j,j'}\Big\{[2\Big(2\ga  +\frac{2\pi i
  m}{\theta}\Big)]^{-1}(\mu_{j}-\mu_{j'})^2 + \frac{\pi i
  m}{\theta} (\mu_{j}+\mu_{j'}) \Big\}
\end{align*}
Hence
\begin{align}
  \label{020505-22}
&  \tilde S^{q}_{j,j'}
=\Big\{[2\Big(2\ga  +\frac{2\pi i
  m}{\theta}\Big)]^{-1}(\mu_{j}-\mu_{j'})^2 + \frac{\pi i
  m}{\theta} (\mu_{j}+\mu_{j'}) \Big\}^{-1} F_{j,j'}
\\
&
-\tilde S^{p}_{j,j'}\Big(4\ga  +\frac{2\pi i
  m}{\theta}\Big) \Big\{[2\Big(2\ga  +\frac{2\pi i
  m}{\theta}\Big)]^{-1}(\mu_{j}-\mu_{j'})^2 + \frac{\pi i
  m}{\theta} (\mu_{j}+\mu_{j'}) \Big\}^{-1}.\notag
\end{align}
On the other hand, from the first equation of \eqref{011304-22a} we
conclude that
\begin{align}
                \label{030505-22}
  \tilde S^{q}_{j,j'} =2\tilde S^{p}_{j,j'}\Big[\mu_{j'}+\mu_j + \Big(2\ga  +\frac{2\pi i
  m}{\theta}\Big) \frac{\pi i
  m}{\theta} \Big]^{-1}.
\end{align}
Therefore, comparing \eqref{020505-22} with \eqref{030505-22}, we get
\begin{align*}
&  2\tilde S^{p}_{j,j'}\Big[\mu_{j'}+\mu_j + \Big(2\ga  +\frac{2\pi i
  m}{\theta}\Big) \frac{\pi i
  m}{\theta} \Big]^{-1} \\
&
=\Big\{[2\Big(2\ga  +\frac{2\pi i
  m}{\theta}\Big)]^{-1}(\mu_{j}-\mu_{j'})^2 + \frac{\pi i
  m}{\theta} (\mu_{j}+\mu_{j'}) \Big\}^{-1} F_{j,j'}
\\
&
-\tilde S^{p}_{j,j'}\Big(4\ga  +\frac{2\pi i
  m}{\theta}\Big) \Big\{[2\Big(2\ga  +\frac{2\pi i
  m}{\theta}\Big)]^{-1}(\mu_{j}-\mu_{j'})^2 + \frac{\pi i
  m}{\theta} (\mu_{j}+\mu_{j'}) \Big\}^{-1}
\end{align*}and
\begin{align*}
&  2\tilde S^{p}_{j,j'}\Big[\mu_{j'}+\mu_j + \Big(2\ga  +\frac{2\pi i
  m}{\theta}\Big) \frac{\pi i
  m}{\theta} \Big]^{-1} \\
&
\times \Big\{[2\Big(2\ga  +\frac{2\pi i
  m}{\theta}\Big)]^{-1}(\mu_{j}-\mu_{j'})^2 + \frac{\pi i
  m}{\theta} (\mu_{j}+\mu_{j'}) \Big\}
+\tilde S^{p}_{j,j'}\Big(4\ga  +\frac{2\pi i
  m}{\theta}\Big)= F_{j,j'}
\end{align*}
This leads to
$$
\tilde S^{p}_{j,j'}=\sum_{y}\Theta_m(\mu_j,\mu_{j'})\psi_j(y)g_y\psi_{j'}(y),
$$
with
\begin{align}
  \label{Thetam}
& \Theta_m(c,c') :=4\ga\Big(4\ga  +\frac{2\pi i
  m}{\theta}\Big)^{-1} \Bigg\{2 \Big(4\ga  +\frac{2\pi i
  m}{\theta}\Big)^{-1}\Big[c+c' + \Big(2\ga  +\frac{2\pi i
  m}{\theta}\Big) \frac{\pi i
  m}{\theta} \Big]^{-1} \notag\\
&
\times \Big\{[2\Big(2\ga  +\frac{2\pi i
  m}{\theta}\Big)]^{-1}(c-c')^2 + \frac{\pi i
  m}{\theta} (c+c') \Big\}
+1 \Bigg\}^{-1}
\end{align}
From \eqref{MmD} we conclude that
\begin{align}
\label{011404-22}
M_{x,y}(m)= \sum_{j,j'=0}^n\Theta_m(\mu_j,\mu_{j'}) \psi_j(x)\psi_{j'}(x) \psi_j(y)\psi_{j'}(y)
\end{align}
Note that $\Theta_0(c,c')=\Theta(c,c')$ defined in  \eqref{eq:47}.}
As in \eqref{071504-22}, for any sequence $(f_x)\in\mathbb C^{n+1}$ we can write
\begin{align*}
\sum_{x,y=0}^n(\delta_{x,y}-M_{x,y}(m))f_y^\star f_x = \sum_{j,j'=0}^n \left(1- \Theta_m(\mu_j,\mu_{j'})\right)
     \left|\sum_{x=0}^n \psi_j(x)f_x\psi_{j'}(x)\right|^2.
\end{align*}
We have
\begin{align}
  \label{070505-22}
  &1- \Theta_m(c,c')\\
  &
    = \Bigg\{\frac{\pi i
  m}{\theta} +\Big[c+c' +2 \Big(\ga  +\frac{\pi i
  m}{\theta}\Big) \frac{\pi i
  m}{\theta}\Big]^{-1} \Big[\frac14\Big(\ga  +\frac{\pi i
  m}{\theta}\Big)^{-1}(c-c')^2 + \frac{\pi i
  m}{\theta} (c+c') \Big]\Bigg\}\notag\\
&
\times \Bigg\{\Big[ c+c' +2 \Big(\ga  +\frac{\pi i
  m}{\theta}\Big) \frac{\pi i
  m}{\theta} \Big]^{-1} \Big[\frac14\Big(\ga  +\frac{\pi i
  m}{\theta}\Big)^{-1}(c-c')^2 + \frac{\pi i
  m}{\theta} (c+c') \Big]
+\Big(2\ga  +\frac{\pi i
  m}{\theta}\Big) \Bigg\}^{-1}.\notag
\end{align}
{It is easy to see from \eqref{Thetam} that
  \begin{equation}
    \label{050505-22}
\lim_{m\to+\infty}\Big(1- \Theta_m(c,c')\Big)=1
\end{equation}
uniformly in $  c,c'\in[0,\om_0^2+4]$.}

{To prove  \eqref{021504-22} it suffices to show that
  there exists $ {\frak C}_*>0$
such that
\begin{align}
\label{040505-22}
|1- \Theta_m(c,c')|\ge  {\frak C}_*,\quad |m|\ge1,\,c,c'\in[0,\om_0^2+4].
\end{align}
In light of \eqref{050505-22} to show \eqref{040505-22} it suffices therefore to
prove that
\begin{align}
\label{040505-22a}
|1- \Theta_m(c,c')|\not=0,\quad |m|\ge1,\,c,c'\in[0,\om_0^2+4].
\end{align}
Suppose that $m\not=0$ and  $1- \Theta_m(c,c')=0$ for some $c,c'\in[0,\om_0^2+4]$.
Then, \eqref{070505-22} implies that
\begin{align*}
0=2\al i \Big[c+c' +(\ga  +\al i) \al i\Big]+
  \frac{(c-c')^2 }{4\Big(\ga  +
  i\al\Big)}  ,
\end{align*}
where $\al=\pi m/\theta$. Hence,
\begin{align*}
  0=8\al i
  \Big\{\ga  (c+c') -2\ga \al^2 +i[\al(c+c')-\al^3+\ga^2\al] \Big\}+
    (c-c')^2.
\end{align*}
This leads to
$$
 c+c' =2\al^2
$$
and
\begin{align*}
  0=-8\al^2
  (c+c'-\al^2+\ga^2) +
    (c-c')^2.
\end{align*}
The second equation yields
\begin{align*}
  8\al^2
  ( \al^2+\ga^2) =
    (c-c')^2,
\end{align*}
which implies that for $\al\not=0$ we have
\begin{align*}
  |c-c'|=2\sqrt{2}\{\al^2
  ( \al^2+\ga^2)\}^{1/2} >2\sqrt{2}\al^2=\sqrt{2}(c+c'),
\end{align*}
which is a contradiction.
This ends the proof of \eqref{040505-22a} and therefore the
demonstration of  \eqref{021504-22}.}
\qed

\subsection{Proof of Lemma  \ref{lm021504-22}}

Using \eqref{011404-22}  we obtain
\begin{align}
\label{081404-22}
\sum_{x=0}^n|M_{x,0}(m)|^2= \sum_{j_1,\ldots,j_4 =0}^n\Theta_m(\mu_{j_1},\mu_{j_2}) \Theta_m(\mu_{j_3},\mu_{j_4}) \prod_{k=1}^4\psi_{j_k}(0)   \sum_{x=0}^n\prod_{k=1}^4\psi_{j_k}(x) .
\end{align}
Applying elementary trigonometric identities we conclude that
\begin{align*}
&\sum_{x=0}^n\prod_{k=1}^4\psi_{j_k}(x) =\frac{1}{(n+1)^2} 
  \left\{\prod_{k=1}^4(2-\delta_{0,j_k})\right\}^{1/2} \sum_{x=0}^n\prod^4_{k=1}\cos\left(\frac{\pi
    j_k(2x+1)}{2(n+1)}\right) \\
&
=\frac{1}{2^5(n+1)} 
  \left\{\prod_{k=1}^4(2-\delta_{0,j_k})\right\}^{1/2} \sum_{\iota_1,\ldots,\iota_4\in\{-1,1\}}\cos\left(\frac{\pi
     }{2(n+1)}\sum_{k=1}^4\iota_kj_k\right) 1_{2(n+1)\bbZ}\Big(\sum_{k=1}^4\iota_kj_k\Big)
\end{align*}
Therefore we can write
\begin{align}
\label{091404-22}
&\sum_{x=0}^n|M_{x,0}(m)|^2= \sum_{\iota,\iota_1\in\{-1,1\}}\sum_{\iota',\iota_1'\in\{-1,1\}}\int_0^1\dd u\int_0^1\dd u' \int_0^1\dd
  u_1\int_0^1\dd u'_1 {\frak V}_{m}(u,u'){\frak V}_{m}^\star(u_1,u'_1)\\
&
\times 
  \exp\left\{\ii\pi\Big( \iota
  u+\iota'u'+\iota_1
  u_1+\iota'_1u'_1\Big)\right\}\sum_{q\in\bbZ}\delta_q\Big(\iota
  u+\iota'u'+\iota_1
  u_1+\iota'_1u'_1\Big)+O_m\left(\frac1n\right),
\end{align}
where
$O_m\left(\frac1n\right)\le \frac{C}{n}$ for some constant $C>0$,
independent of $n$ and $m$, and
\begin{align*}
{\frak V}_{m}(u,u'):=  \Theta_m\Big(\om_0^2+4\sin^2\left(\frac{\pi u}{2}\right), \om_0^2+4\sin^2\left(\frac{\pi u'}{2}\right)\Big)\cos\left(\frac{\pi u}{2}\right)\cos\left(\frac{\pi u'}{2}\right).
\end{align*}
We claim that   there exists $C>0$, independent of $n$ and $m$, such that ${\frak V}_{m}(u,u')\le
C$ for all $u,u'\in[0,\om_0^2+4]$. Indeed,  as can be seen directly from \eqref{Thetam},  we have
$\lim_{m\to+\infty}\Theta_m(c,c')=0$ uniformly in
$c,c'\in[0,\om_0^2+4]$. On the other hand  the function $\bbR\times
[0,\om_0^2+4]^2\ni(m,c,c')\to \Theta_m(c,c')$ is  bounded on
compact set. If otherwise, this would imply that there exist
$(m,c,c')\in \bbR\times [0,\om_0^2+4]^2$ such that
\begin{align*}
  &0=\Big[ \Big(2\ga  +\frac{\pi i
  m}{\theta}\Big)\Big(c+c' +2 \Big(\ga  +\frac{\pi i
  m}{\theta}\Big) \frac{\pi i
  m}{\theta}\Big) \Big]^{-1} \\
&
\times \Big[\frac14\Big(\ga  +\frac{\pi i
  m}{\theta}\Big)^{-1}(c-c')^2 + \frac{\pi i
  m}{\theta} (c+c') \Big]
                                  +1.
\end{align*}
An easy calculation gives $c+c'=\ga^2-2\al^2$ and
$
(c-c')^2=8(\al^2+\ga^2)^2,$
where $\al=\pi m/\theta$. This leads to
a contradiction, as then $|c-c'|>c+c'$ (but both $c,c'>0$).
Thus the conclusion of the lemma
follows. \qed

\subsection{Proof of Lemma  \ref{lm011904-22}}

%\marginpar{\red{Corr.}}
{From \eqref{022212-21} we obtain
\begin{align*}
[e^{-At}]_{x+n+1,x'+n+1}=
    \sum_{j=0}^nE_j(t)\psi_j(x)\psi_j(x'),
\end{align*}
where (cf \eqref{muj})
$$
E_j(t):=\frac{1}{2\sqrt{\ga^2-\mu_j}}\Big[
  -\la_{j,+} \exp\left\{\la_{j,+}t\right\}+
    \la_{j,-} 
    \exp\left\{\la_{j,-}t\right\}\Big],\quad \mbox{if }\mu_j\not= \ga^2.
$$
In the case $\mu_j= \ga^2$ (then $\la_{j,\pm}=\ga$, cf \eqref{lapm})
we have
$
E_j(t):=(1-\ga t)e^{-\ga t} .
$
Using \eqref{010501-22a} we obtain therefore 
\begin{align}
\label{021904-22}
&\bar p_x(t)= \frac{1}{n^{1/2}} \sum_{j=0}^n\int_{0}^{+\infty} \cF((t-s)/\theta )
[e^{-A s}]_{x+n+1,2n+1} \dd s \\
&
=\frac{1}{n^{1/2}}  \sum_{j=0}^n\psi_j(x)\psi_j(n)
  \int_{0}^{+\infty} \cF((t-s)/\theta )E_j(s)\dd s.\notag
\end{align}}
From \eqref{021904-22} we conclude that there exists ${\frak p}_*>0$
such that
\begin{equation}
\label{031904-22}
\sup_{t\in\bbR,x=0,\ldots,n}|\bar p_x(t)|\le \frac{{\frak p}_*}{n^{1/2}},\quad n=1,2,\ldots.
\end{equation}
Estimate \eqref{011904-22} is then a straightforward consequence of
\eqref{031904-22} and \eqref{021012-21}.\qed

\subsection{Proof of Lemma  \ref{lm031504-22}}

Multiplying both sides of  \eqref{eq:92ab} by $V_x(t)$ and averaging
over time we get
\begin{equation}
    \label{eq:92abc}
\begin{split}
&
\lang V_x^2\rang =  \sum_{x'=1}^n \int_0^\theta
    {\frak g}_{x,x'}(s) \lang V_x(\cdot)V_{x'}(\cdot-s)\rang \dd s+
    \lang V_x v_x\rang \\
&
\le  \sum_{x'=1}^n  M_{x,x'} \lang V_x^2\rang^{1/2}\lang V_{x'}^2\rang^{1/2}+
    \lang V_x v_x\rang.
\end{split}
  \end{equation}
Summing up over $x$ we obtain
\begin{align*}
&\sum_{x,x'=0}^n \Big(\delta_{x,x'} -M_{x,x'}\Big) \lang
  V_x^2\rang^{1/2}\lang V_{x'}^2\rang^{1/2}+\lang V_0^2\rang^{1/2}\sum_{x=0}^n  M_{x,0}
  \lang V_{x}^2\rang^{1/2}\\
&
\le \sum_{x=0}^n \lang V_x v_x\rang.
\end{align*}
Using \eqref{lowerM} and  the Cauchy-Schwarz inequality we obtain in
particular that
\begin{align*}
M_{0,0}\lang V_0^2\rang
\le \left\{\sum_{x=0}^n \lang V_x^2
  \rang\right\}^{1/2}\left\{\sum_{x=0}^n \lang v_x^2
  \rang\right\}^{1/2}\le \frac{C}{n}\left\{\sum_{x=0}^n \lang V_x \rang^2\right\}^{1/2}
\end{align*}
for some $C>0$ independent of $n$.
The last estimate follows from \eqref{011904-22}.
To finish the proof note that from \eqref{eq:54} we have
\begin{align*}
&M_{0,0}
  =\frac{4}{(n+1)^2}\sum_{j,j'=0}^n(1-\delta_{0,j})(1-\delta_{0,j'})\Theta(\mu_j,\mu_{j'})\cos^2\left(\frac{\pi
  j}{2(n+1)}\right) \cos^2\left(\frac{\pi j'}{2(n+1)}\right)\\
&
\approx 4\gamma^2\int_0^1\int_0^1
  \frac{\Big[\om_0^2+2\sin^2(\pi u/2)+2\sin^2(\pi u'/2)\Big]\cos^2(\pi
  u/2)\cos^2(\pi u'/2)\dd u\dd u'}
{\gamma^2 \Big(\om_0^2+2\sin^2(\pi u/2)+2\sin^2(\pi u'/2) \Big)+ \Big(\sin^2(\pi u/2)-\sin^2(\pi u'/2))\Big)^2},
\end{align*}
where the equality holds up to a term of order $O(1/n)$.\qed

\section{Calculation of ${\cal Q}^{a,b}(\ell)$}
 
\label{appD}

 Consider the
Green's function of $-\Delta + \omega^2_0$ given by \eqref{GR}. We can write
\begin{align}
\label{GRC0}
G_{\om_0^2}(x) 
    =\lim_{\eta\to0+}\frac{1}{\la\sqrt{1+4/\la}}\left\{1+\frac{\la}{2}\Big(1+\sqrt{1
                                                                       +\frac{4}{\la}}\Big)\right\}^{-|x|}_{|\la=\om_0^2+i\eta},\quad
  x\in\bbZ. 
\end{align}
Here the square root denotes the branch of the inverse of $z\mapsto
z^2$ such that ${\rm Re}\,\sqrt{\la}>0$, when $\la\in\mathbb
C\setminus(-\infty,0]$. By the analytic continuation we have therefore
\begin{align}
\label{GRC}
 & G_{\la}(x) = \left(-\Delta + \la
   \right)^{-1}(x)=\int_0^1\frac{\cos(2\pi ux)du}{4\sin^2(\pi
   u)+\la }\notag\\
  &
    = \frac{1}{\la\sqrt{1+4/\la}}\left\{1+\frac{\la}{2}\Big(1+\sqrt{1
                                                                       +\frac{4}{\la}}\Big)\right\}^{-|x|}
    ,\quad x\in\bbZ 
\end{align}
for $\la\in\mathbb
C\setminus[-4,0]$.
% Hence, if $\om_0^2>4$ we have
% \begin{align}
% \label{GR1}
% G_{-\om_0^2}(x)  
%      = \left(-\Delta -\om_0^2 \right)^{-1}(x)=\frac{1}{\om_0\sqrt{\om_0^2-4}}\left\{\frac{\om_0}{2}\Big(\om_0+\sqrt{\om_0^2
%                                                                                                                                -4}\Big)-1\right\}^{-|x|},\quad
%   x\in\bbZ. 
% \end{align}

Recall that  ${\cal Q}^{a,b}(\ell)$ is given by 
 \eqref{021205-21f1} and  \eqref{021205-21f2} with $a+b=1/2$. Let
\begin{equation}
  \label{021205-21f1t}
\begin{aligned}
  & \tilde{\cal Q}^{-1/2,0}(\ell)= 4\gamma|\tilde     \cF(\ell)|^2
  \left(\frac {2\pi\ell }{\theta}\right)^2  \int_0^1 \cos^2\left(\frac{\pi z}{2}\right)
  \left\{\left[4\sin^2\left(\frac{\pi z}{2}\right)
      +\om_0^2 -\left(\frac{2\pi\ell}{\theta}\right)^2\right]^2
    +\left(\frac{\red{4} \gamma \pi \ell}{\theta}\right)^2 \right\}^{-1}\dd z
\end{aligned} \end{equation}
and 
\begin{equation}
  \label{021205-21f1t}
\begin{aligned}
   \tilde{\cal Q}^{b-1/2,b}(\ell)= 4\gamma|\tilde     \cF(\ell)|^2 \left(\frac {2\pi\ell }{\theta}\right)^2 
   \int_0^1 \cos^2\left(\frac{\pi z}{2}\right)
   \left[4\sin^2\left(\frac{\pi z}{2}\right)
       +\om_0^2 
     \right]^{-2}\dd z
,\quad\mbox{when }b>0.
\end{aligned} 
\end{equation}
Let  
$$
\la(\om_0):=\om_0^2 -\left(\frac{2\pi\ell}{\theta}\right)^2 
    +i\left(\frac{{4} \gamma \pi \ell}{\theta}\right) .
$$ We have
\begin{align*}
  & \tilde{\cal Q}^{-1/2,0}(\ell)  
    =-\frac{2\pi\ell|\tilde     \cF(\ell)|^2}{\theta}
 {\rm Im}\left(\int_0^{1} \frac{[1+\cos\left(2\pi z\right)] \dd z}{
 4\sin^2\left(\pi z\right)
      +\la(\om_0) }\right)
\end{align*}
    Using \eqref{GRC} we can write
\begin{align*}
  & \tilde{\cal Q}^{-1/2,0}(\ell) 
    =-\frac{2\pi\ell|\tilde     \cF(\ell)|^2}{\theta}
    {\rm Im}\left(G_{\la(\om_0)}(0)+G_{\la(\om_0)}(1)\right)\\
   &
     =-\frac{2\pi\ell|\tilde     \cF(\ell)|^2}{\theta}
    {\rm Im}\left( \left\{\frac{2 }{\la(\om_0)\sqrt{1+4/\la(\om_0)}}+\frac12\right\} \left\{1+\frac{ \la(\om_0)}{2}\Big(1+\sqrt{1
     +\frac{4}{\la(\om_0)}}\Big)\right\}^{-1}\right)
\end{align*}
Next, for $b>0$, we have
\begin{align*}
  & \tilde{\cal Q}^{b-1/2,b}(\ell) 
    = 4\gamma|\tilde     \cF(\ell)|^2 \left(\frac {2\pi\ell }{\theta}\right)^2 
   \int_0^{1} \frac{\cos^2\left(\pi z\right) \dd z}{
   \left[4\sin^2\left(\pi z\right)
       +\om_0^2 
    \right]^{2}}\\
     &
    = 2\gamma|\tilde     \cF(\ell)|^2 \left(\frac {2\pi\ell }{\theta}\right)^2 
   \int_0^{1} \frac{[1+\cos\left(2\pi z\right)] \dd z}{
   \left[2 +\om_0^2 -2\cos\left(2\pi z\right)    
     \right]^{2}}
    = \gamma|\tilde     \cF(\ell)|^2   \left(\frac {2\pi\ell
     }{\theta}\right)^2{\cal I},
\end{align*}
where
\begin{equation}
  \label{cI}
  {\cal I}:=\frac{1}{2\pi i}
      \int_{C}f(\zeta) \dd \zeta
      \end{equation}
is the integral over the  circle $C:=[|\zeta|=1]$, oriented counter-clockwise and
$$
f(\zeta):=\left(\frac{\zeta+1  }{
   1+(2 +\om_0^2) \zeta -\zeta^2 
      }\right)^2
$$
The poles of $f(\zeta)$ occur  
  at
$$
\zeta_\pm:=\frac12\Big[2 +\om_0^2\pm\sqrt{\om_0^4+4 \om_0^2+8} \Big].
$$
We have $|\zeta_-|<1< \zeta_+$ and 
$$
f(\zeta):=\left(\frac{A_-}{\zeta-\zeta_-}+\frac{A_+}{\zeta-\zeta_+}\right)^2,
$$
with
$$
A_-:=\frac{\zeta_-+1}{\zeta_--\zeta_+},\qquad A_+:=\frac{\zeta_++1}{\zeta_+-\zeta_-}.
$$
Hence, after performing the contour integration, we get
\begin{align*}
{\cal I}=2A_-A_+\left( \frac{1}{2\pi
                 i}\int_{C}\frac{1}{\zeta-\zeta_+}\cdot \frac{ \dd
                 \zeta}{\zeta-\zeta_-}\right)=\frac{2A_-A_+}{\zeta_--\zeta_+}=\frac{2(\zeta_-+1)
                 (\zeta_++1)}{(\zeta_+-\zeta_-)^3}
                 \end{align*}
  Since
  \begin{align*}
    \zeta_-\zeta_+=1,\quad \zeta_-+\zeta_+=2 +\om_0^2,\quad
      \zeta_+-\zeta_-=\sqrt{\om_0^4+4 \om_0^2+8}
  \end{align*}
  we obtain
   $$
{\cal I}=\frac{2(4 +\om_0^2) }
    { (\om_0^4+4 \om_0^2+8)^{3/2}} .
   $$
   and
   \begin{align*}
   \tilde{\cal Q}^{b-1/2,b}(\ell)
    =\left(\frac {2\pi\ell }{\theta}\right)^2 
    \frac{ 2\gamma|\tilde     \cF(\ell)|^2  (4 +\om_0^2) }
    { (\om_0^4+4 \om_0^2+8)^{3/2}} .
\end{align*}

\bibliographystyle{amsalpha}

\end{document}